\documentclass[runningheads,a4paper]{llncs}

\usepackage{a4wide}

\usepackage[latin1]{inputenc}
\usepackage{graphicx}
\usepackage{amsfonts}
\usepackage{amssymb}
\usepackage{amsmath}
\usepackage{framed}
\usepackage{latexsym}
\usepackage{breakcites}
\usepackage{color}
\usepackage{url}
\usepackage{tikz}
\usepackage{algorithm}
\usepackage[noend]{algpseudocode}
\usepackage{enumerate}
\usepackage{stmaryrd}

\usepackage{thm-restate}

\newtheorem{fact}{Fact}

\makeatletter
\newcounter{restate}
\newenvironment{restate}[2]{
	\setcounter{restate}{\value{#1}}
	\setcounter{#1}{\@ifundefined{r@#2}{0}{\ref{#2}}}
	\addtocounter{#1}{-1}
	\def\thname{#1}
	\begin{\thname}
		\setcounter{#1}{\value{restate}}
	}{
	\end{\thname}
}
\makeatother

\newcommand{\cA}{\mathcal{A}}
\newcommand{\cC}{\mathcal{C}}
\newcommand{\cD}{\mathcal{D}}
\newcommand{\cE}{\mathcal{E}}

\newcommand{\cS}{\mathcal{S}}
\newcommand{\cU}{\mathcal{U}}

\newcommand{\cv}{{\mathbf{c}}}
\newcommand{\dv}{{\mathbf{d}}}
\newcommand{\ev}{{\mathbf{e}}}
\newcommand{\hv}{{\mathbf{h}}}
\newcommand{\mv}{{\mathbf{m}}}
\newcommand{\rv}{{\mathbf{r}}}
\newcommand{\sv}{{\mathbf{s}}}
\newcommand{\uv}{{\mathbf{u}}}
\newcommand{\vv}{{\mathbf{v}}}
\newcommand{\xv}{{\mathbf{x}}}
\newcommand{\yv}{{\mathbf{y}}}

\newcommand{\Am}{{\mathbf{A}}}
\newcommand{\Bm}{{\mathbf{B}}}
\newcommand{\Gm}{{\mathbf{G}}}
\newcommand{\Hm}{{\mathbf{H}}}
\newcommand{\Rm}{{\mathbf{R}}}
\newcommand{\Imat}{{\mathbf{I}}}

\newcommand{\Pm}{{\mathbf{P}}}
\newcommand{\Sm}{{\mathbf{S}}}
\newcommand{\Um}{{\mathbf{U}}}

\newcommand{\mat}[1]{\ensuremath{\boldsymbol{#1}}}

\newcommand{\zero}{{\mat{0}}}

\newcommand{\hash}{h}
\newcommand{\Cc}{{\mathcal C}}
\newcommand{\Dc}{{\mathcal D}}
\newcommand{\Ec}{{\mathcal E}}
\newcommand{\Fc}{{\mathcal F}}
\newcommand{\Hc}{{\mathcal H}}

\newcommand{\Lc}{{\mathcal L}}
\newcommand{\Pc}{{\mathcal P}}
\newcommand{\Qc}{{\mathcal Q}}

\newcommand{\Uc}{{\mathcal U}}

\newcommand{\Cpub}{\Cc_{\text{pub}}}
\newcommand{\Cr}{\Cc_{\text{rand}}}
\newcommand{\Hpub}{{\Hm_{\textup{pub}}}}
\newcommand{\Hsec}{{\Hm_{\textup{sec}}}}
\newcommand{\Csec}{\Cc_{\text{sec}}}

\newcommand{\Psucc}{{P_{\text{succ}}}}

\newcommand{\Lcts}{S}
\newcommand{\Lcth}{H_\textup{true}}
\newcommand{\Lcf}{H_\textup{false}}

\newcommand{\F}{\mathbb{F}}

\newcommand{\R}{\mathbb{R}}
\newcommand{\Unif}{\hookleftarrow}

\newcommand{\prob}{\mathbb{P}}
\newcommand{\esp}{\mathbb{E}}
\newcommand{\eqdef}{\mathop{=}\limits^{\triangle}}

\newcommand{\Psd}{D^{\textnormal{Prange}}}

\DeclareMathOperator*{\punc}{Punc}
\DeclareMathOperator*{\Sp}{Supp}
\DeclareMathOperator*{\supp}{Supp}

\DeclareMathOperator*{\Ver}{\text{\tt{Vrfy}}}
\DeclareMathOperator*{\Sgn}{\text{\tt{Sgn}}}
\DeclareMathOperator*{\Gen}{\text{\tt{Gen}}}
\newcommand{\DOOM}{\ensuremath{\mathrm{DOOM}}}
\newcommand{\Sgnsk}{\ensuremath{\mathtt{Sgn}^{\mathrm{sk}}}}
\newcommand{\Vrfypk}{\ensuremath{\mathtt{Vrfy}^{\mathrm{pk}}}}

\newcommand{\Mrs}{M_{\text{rs}}}

\newcommand{\wt}[1]{|#1|}
\newcommand{\listM}{L_\mv}
\newcommand{\qhash}{q_{\textup{hash}}}
\newcommand{\qsig}{q_{\textup{sign}}}

\newcommand{\Dpub}{\Dc_{\textup{pub}}}
\newcommand{\Drand}{\Dc_{\textup{rand}}}

\newcommand{\Dpubw}{\Dc^{\textup{pub}}_w}
\newcommand{\Dpubwq}{\Dc^{\textup{pub}}_{w,q}}
\newcommand{\Dsw}[1]{\Dc_w^{#1}}
\newcommand{\UV}{(U,U+V)}

\begin{document}
\title{The problem with the SURF scheme
\thanks{This work was supported in part by the Commission of the European Communities through the Horizon 2020 program under project number 645622 PQCRYPTO.}}
\author{Thomas Debris-Alazard\inst{1,2}   \and Nicolas Sendrier\inst{2} \and Jean-Pierre Tillich \inst{2}
 }
\institute{Sorbonne Universit\'{e}s, UPMC Univ Paris 06 \and Inria, Paris\\
\email{\{thomas.debris,nicolas.sendrier,jean-pierre.tillich\}@inria.fr}}

\maketitle
\begin{abstract}
There is a serious problem with one of the assumptions made in the security proof of the SURF scheme. This problem turns out to be easy in the
regime of parameters needed for the SURF scheme to work.
 We give afterwards the old version of the paper for the
reader's convenience. 
\end{abstract}

\textbf{Keywords:} code-based cryptography, digital signature scheme,
decoding algorithm, security proof.

\section{The problem}

	We give here a polynomial-time algorithm which distinguishes a permuted $(U,U+V)$-code from a random linear code of the same length and dimension. This invalidates one of the security assumptions of the 
security proof of SURF. Our algorithms is based on the fact that for the SURF  parameters the hull of a permuted $(U,U+V)$-code is typically much bigger than that is expected for random linear codes of  the same length and  dimension. Let us start by some definitions and notation. 
	\newline

	{\bf Notation.} Vectors will be written with  bold letters (such as $\xv$). 
        The $i$-th component of $\xv$ is denoted by $x_i$.
	When $\xv$ and $\yv$ are two vectors, $(\xv,\yv)$ denotes their concatenation.
        A binary linear code of length $n$ and dimension $k$ is referred to as an $[n,k]$-code.
	\newline

	{\bf Definitions.} Let $\cC$ be an $[n,k]$-code.
 We now define its hull as:
	$$
	\mathrm{hull}(\cC) \eqdef \cC \cap \cC^{\bot}
	$$
	where $\cC^{\bot}$ denotes its dual which is defined as:
	$$
	\cC^{\bot} = \{ \hv \in \F_{2}^{n} : \forall \cv \in \cC, \mbox{ } \langle \cv,\hv \rangle = 0 \}.
	$$
The scalar product $\langle \cv,\hv \rangle$ is performed here over $\F_2$.
	Moreover, if $\pi$ is a permutation of length $n$ we define:
	$$
	\pi(\cC) \eqdef \{ \pi(\cv) \mbox{ } : \mbox{ } \cv \in \cC \} \mbox{ where } \pi(\cv) = (c_{\pi(i)})_{1 \leq i \leq n}.
	$$
	Let $U,V$ be binary linear codes of length $n/2$ and respective dimension $k_{U},k_{V}$. We define the subset of $\F_{2}^{n}$:
	$$
	(U,U+V) \eqdef \{ (\uv,\uv+\vv) \mbox{ }:\mbox{ } \uv \in U \mbox{ and } \vv \in V \}
	$$
	which is a linear code of length $n$ and dimension $k_{U} + k_{V}$. 
	Public keys of SURF are $\pi(U,U+V)$ codes where both $U$ and $V$ have been chosen uniformly at random among the $\lbrack n/2,k_{U}\rbrack$,$\lbrack n/2,k_{V}\rbrack$-codes and $\pi$ is a random permutation of length $n$. Let us denote by $\Dpub$ this distribution and by $\Drand$ the distribution of $\lbrack n,k_{U}+k_{V}\rbrack$-codes chosen uniformly at random.

	\noindent The security reduction of SURF relies on the difficulty to distinguish $\Dpub$ and $\Drand$. The algorithm we are going to give performs for the SURF parameters in polynomial-time this task with a success probability $1 - \varepsilon$ where $\varepsilon$ is a negligible function in $n$.
	\newline

		It consists in computing the hull of the permuted $(U,U+V)$-code and decide that the code belongs to $\Dpub$
if and only if the dimension is $k_U - k_V$ and to $\Drand$ otherwise. It is readily seen that the hull can be computed in polynomial time.	
	
The reason explaining the correction of this algorithm is given by the two following propositions.
\begin{proposition}[\cite{S97a}] The expected dimension of the hull of a random linear 
code is $O(1)$. It is smaller than $t$ with probability $\geq 1 - O(2^{-t})$.

\end{proposition}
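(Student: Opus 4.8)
The plan is to fix $n,k$, let $\cC$ be uniform among the $[n,k]$-codes, set $X=\dim\mathrm{hull}(\cC)$, and prove a single tail estimate $\prob[X\ge t]=O(2^{-t})$ with an \emph{absolute} implied constant: both assertions then follow, since $X$ is a nonnegative integer and $\esp[X]=\sum_{t\ge1}\prob[X\ge t]$. First I would reformulate the event $\{X\ge t\}$. If $D$ is a $t$-dimensional subspace of $\mathrm{hull}(\cC)=\cC\cap\cC^\bot$, then for all $\dv,\dv'\in D$ one has $\langle\dv,\dv'\rangle=0$, since $\dv\in\cC$ and $\dv'\in\cC^\bot$; hence $D$ is self-orthogonal, $D\subseteq D^\bot$. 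Moreover $D\subseteq\cC^\bot$ is equivalent to $\cC\subseteq D^\bot$, and conversely any self-orthogonal $t$-dimensional $D$ with $D\subseteq\cC\subseteq D^\bot$ lies in $\cC\cap\cC^\bot$. Thus $\{X\ge t\}$ is the union over all self-orthogonal $t$-dimensional $D\subseteq\F_2^n$ of the events $\{D\subseteq\cC\subseteq D^\bot\}$, and a union bound gives $\prob[X\ge t]\le N_t\, p_t$, where $N_t$ is the number of such $D$ and $p_t=\prob[D\subseteq\cC\subseteq D^\bot]$ for one fixed $D$.

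Next I would compute the two factors. The codes $\cC$ with $D\subseteq\cC\subseteq D^\bot$ are in bijection, via $\cC\mapsto\cC/D$, with the $(k-t)$-dimensional subspaces of $D^\bot/D$, which has dimension $n-2t$ because $D\subseteq D^\bot$ and $\dim D^\bot=n-t$; counting these and dividing by the number $\binom{n}{k}_2$ of $[n,k]$-codes gives $p_t=\binom{n-2t}{k-t}_2/\binom{n}{k}_2$ (a Gaussian-binomial ratio, equal to $0$ once $t>\min(k,n-k)$, consistently with $X\le\min(k,n-k)$). For $N_t$, every self-orthogonal subspace is contained in the even-weight code $\un^\bot$, so $N_t\le\binom{n-1}{t}_2$. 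Feeding in the elementary estimates $2^{b(a-b)}\le\binom{a}{b}_2\le Q\,2^{b(a-b)}$ with the absolute constant $Q=\prod_{j\ge1}(1-2^{-j})^{-1}<\infty$, the dimension-dependent exponents cancel via the identity $t(n-1-t)+(k-t)(n-k-t)-k(n-k)=-t$, leaving $\prob[X\ge t]\le Q^{2}\,2^{-t}$; summing over $t$ gives $\esp[X]=O(1)$. (Using the sharper count $N_t=O(2^{\,tn-t(t+1)/2-t^2})$ of self-orthogonal $t$-spaces even yields $\prob[X\ge t]=O(2^{-t(t+1)/2})$, but this is not needed.)

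\textbf{Main obstacle.} There is no genuinely hard step — the argument is pure counting — but two points need care. The first is getting the reformulation of $\{X\ge t\}$ exactly right: the crucial point is that a $t$-dimensional subspace of the hull is \emph{automatically} self-orthogonal, which is precisely what makes $\dim(D^\bot/D)=n-2t$ instead of $n-t$ and is responsible for the clean cancellation of exponents; without it the bound would be vacuous. The second is ensuring the Gaussian-binomial estimates are uniform in $n$ and $k$, so that the constants hidden in $O(2^{-t})$ and in $O(1)$ are absolute rather than parameter-dependent. Once the exponent identity is verified, both claims of the proposition follow immediately.
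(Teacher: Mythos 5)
Your proof is correct, and it is worth noting that the paper does not actually contain a proof of this proposition: it is cited as a black box from \cite{S97a} (Sendrier, ``On the dimension of the hull''). So your argument is a genuine self-contained derivation of the cited fact. Sendrier's original paper obtains the result as a corollary of a much finer statement --- the exact (and then limiting) probability distribution of $\dim\mathrm{hull}(\cC)$ over uniformly random $[n,k]$-codes, via a careful count of codes containing a prescribed self-orthogonal subcode and quotienting by equivalence --- which incidentally yields the sharper tail $2^{-t(t+1)/2}$ you mention in passing. Your route is a deliberately coarser first-moment/union-bound argument: bound $\prob[X\ge t]$ by $N_t p_t$ where $N_t$ counts self-orthogonal $t$-spaces and $p_t$ is the probability a fixed such $D$ satisfies $D\subseteq\cC\subseteq D^\bot$, and let the Gaussian-binomial exponents cancel. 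What your approach buys is brevity and transparency: the two nonobvious ingredients (every $t$-subspace of the hull is automatically self-orthogonal, hence $\dim(D^\bot/D)=n-2t$ rather than $n-t$; and $N_t\le\binom{n-1}{t}_2$ because self-orthogonal vectors have even weight) are exactly the two things needed to make the exponent identity $t(n-1-t)+(k-t)(n-k-t)-k(n-k)=-t$ come out to $-t$, and both are one-line observations. The cost is that you only get $O(2^{-t})$ rather than the true quadratic decay, but the proposition only asks for that, and the constant $Q^2=\bigl(\prod_{j\ge1}(1-2^{-j})^{-1}\bigr)^2$ is absolute, so both claims follow with uniform constants. The bijection $\cC\mapsto\cC/D$ giving $p_t=\binom{n-2t}{k-t}_2/\binom{n}{k}_2$, the two-sided estimate $2^{b(a-b)}\le\binom{a}{b}_2\le Q\,2^{b(a-b)}$, and the exponent identity all check out. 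This is a clean, correct, and arguably more readable proof than the one in the reference.
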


	\begin{proposition} Assume that $k_U \geq k_V$. 
If $\cC$ is picked according to the distribution $\Dpub$ we have with  probability $1 - O(2^{k_V-k_U})$ 
		$$
		\dim\left(\mathrm{hull}(\cC)\right) = k_{U} - k_{V}.
		$$		
	\end{proposition}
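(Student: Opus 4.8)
The plan is to reduce to an unpermuted $(U,U+V)$-code, compute its dual explicitly, read off the hull by intersecting the two descriptions, and then control everything through a single high‑probability event about random subspaces. The reduction is free: a permutation $\pi$ preserves the inner product, so $\pi(\cC)^{\bot}=\pi(\cC^{\bot})$ and hence $\mathrm{hull}(\pi(\cC))=\pi(\mathrm{hull}(\cC))$ has the same dimension as $\mathrm{hull}(\cC)$. Thus I may assume $\cC=(U,U+V)$ with $U,V$ independent and uniform of dimensions $k_U\ge k_V$, and drop $\pi$ entirely.

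Next I would describe $\cC$ and $\cC^{\bot}$ in matching block coordinates. Writing a word of $\F_2^n$ as $(\xv,\yv)$ with $\xv,\yv\in\F_2^{n/2}$, one has $\cC=\{(\xv,\yv):\xv\in U,\ \xv+\yv\in V\}$, while the orthogonality condition $\langle(\xv,\yv),(\uv,\uv+\vv)\rangle=0$ for all $\uv\in U,\vv\in V$ reduces, by bilinearity and by specializing first $\vv=0$ and then $\uv=0$, to the two conditions $\xv+\yv\in U^{\bot}$ and $\yv\in V^{\bot}$; since the subspace so defined already has the correct dimension $n-k_U-k_V$, it is all of $\cC^{\bot}$. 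Intersecting the two descriptions, $(\xv,\yv)\in\mathrm{hull}(\cC)$ exactly when $\xv\in U$, $\yv\in V^{\bot}$ and $\xv+\yv\in V\cap U^{\bot}$.

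The key event is $E:\ V\cap U^{\bot}=\{0\}$, which by duality is the same as $U+V^{\bot}=\F_2^{n/2}$. On $E$ the description above forces $\xv=\yv$, so $\mathrm{hull}(\cC)=\{(\xv,\xv):\xv\in U\cap V^{\bot}\}$, a subspace isomorphic to $U\cap V^{\bot}$; and $U+V^{\bot}=\F_2^{n/2}$ gives $\dim(U\cap V^{\bot})=\dim U+\dim V^{\bot}-n/2=k_U-k_V$. Hence $E$ implies $\dim(\mathrm{hull}(\cC))=k_U-k_V$, and it only remains to show $\prob[\overline{E}]=O(2^{k_V-k_U})$. For this I would condition on $U$ (so that $U^{\bot}$ is a fixed $[n/2,\,n/2-k_U]$-code) and use the independent randomness of $V$: for any fixed nonzero $\vv$ one has $\prob[\vv\in V]=(2^{k_V}-1)/(2^{n/2}-1)$, so a union bound over the fewer than $2^{n/2-k_U}$ nonzero vectors of $U^{\bot}$ gives $\prob[\overline{E}]<2^{n/2-k_U}\cdot 2^{k_V}/(2^{n/2}-1)=O(2^{k_V-k_U})$.

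I do not expect a serious obstacle here: the heart of the argument is elementary linear algebra, and the single nontrivial observation is that $V\cap U^{\bot}=\{0\}$ is the right event, since by self‑duality it simultaneously collapses the hull to $\{(\xv,\xv):\xv\in U\cap V^{\bot}\}$ and pins down $\dim(U\cap V^{\bot})$ to the value $k_U-k_V$. The only point calling for a little care is the probability estimate — in particular the Gaussian‑binomial identity $\prob[\vv\in V]=(2^{k_V}-1)/(2^{n/2}-1)$ for $V$ uniform among $[n/2,k_V]$-codes, and making sure independence of $U$ and $V$ is invoked correctly so that the bound survives conditioning on $U$.
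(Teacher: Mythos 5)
Your proposal is correct, and it does something genuinely better than the paper's sketch in one place. The paper proves the result by establishing two separate high-probability events: first that $V\cap U^{\bot}=\{0\}$ (using $\dim V+\dim U^{\bot}<n/2$), and then, independently, that $\dim(U\cap V^{\bot})=k_U-k_V$ (using $\dim U+\dim V^{\bot}>n/2$), each claimed to hold with probability $1-O(2^{k_V-k_U})$. You observe that these are not two events but one: taking duals, $V\cap U^{\bot}=\{0\}$ is \emph{equivalent} to $U+V^{\bot}=\F_2^{n/2}$, which is in turn equivalent to $\dim(U\cap V^{\bot})=k_U-k_V$ by the rank-nullity formula $\dim(U\cap V^{\bot})=\dim U+\dim V^{\bot}-\dim(U+V^{\bot})$. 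So a single probability estimate suffices, whereas the paper (at least as written) performs it twice. You also fill in the probability bound that the paper waves at: conditioning on $U$, applying $\prob[\vv\in V]=(2^{k_V}-1)/(2^{n/2}-1)$ for each of the fewer than $2^{n/2-k_U}$ nonzero vectors $\vv\in U^{\bot}$, and a union bound gives $\prob[\overline{E}]=O(2^{k_V-k_U})$. Your description of the dual, $\cC^{\bot}=\{(\xv,\yv):\xv+\yv\in U^{\bot},\ \yv\in V^{\bot}\}$, is just a reparametrization of the paper's $(U^{\bot}+V^{\bot},V^{\bot})$, and the intersection computation yielding $\xv\in U$, $\yv\in V^{\bot}$, $\xv+\yv\in V\cap U^{\bot}$ matches the paper's. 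One cosmetic remark: the dimension-counting justification that the two linear constraints capture all of $\cC^{\bot}$ is slightly redundant, since the specialization argument you give already produces the forward inclusion and bilinearity trivially gives the reverse; but this does not affect correctness.
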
 

The correction of our attack easily follows from these two propositions.

	\begin{proof}[Sketch of the proof of Proposition 1] Without loss of generality we are going to show that with probability  $1 - O(2^{k_V-k_U})$ we have $\dim(\mathrm{hull}(U,U+V)) = k_{U} - k_{V}$ as the hull is invariant by permuting 
the code positions. It is readily seen that 
$$
(U,U+V)^\bot = (V^{\bot}+U^{\bot},V^{\bot}).
$$
This implies that
		$$
		\mathrm{hull}((U,U+V)) = (U,U+V) \cap (U^{\bot}+V^{\bot},V^{\bot}).
		$$
	Therefore for any vector $(\uv,\uv+\vv) \in \mathrm{hull}((U,U+V))$ where $\uv \in U$ and $\vv \in V$ there exists $\vv \in V^{\bot}$ and $\uv^{\bot} \in U^{\bot}$ such that	
	$$
	\begin{array}{ll}
	 & 
	 \left\lbrace
	 	\begin{array}{l}
			 \uv = \vv^{\bot} + \uv^{\bot} \\
			 \uv + \vv = \vv^{\bot}
	 	\end{array}
	 \right. \\[0.5cm]
	 
	 \iff & 
	 \left\lbrace
	 	\begin{array}{l}
	 		\vv = \uv^{\bot} \\
	 		\uv + \vv = \vv^{\bot}
	 	\end{array}
	 \right. \\[0.5cm]
	\end{array}
	$$
	In this way, the vector $\vv$ lives in $V \cap U^{\bot}$. But we remark that 
	$$
	\dim(V) + \dim(U^{\bot}) = k_{V} + n/2 - k_{U} = n/2 + k_{V} - k_{U} < n/2.
	$$
This can be used to prove that with probability $1-O(2^{k_V-k_U})$ 
we have $V \cap U^{\bot} = \{0\}$ and $\vv = \uv^{\bot} = \mathbf{0}$. It follows that vectors of $\mathrm{hull}(U,U+V)$ are with high probability of the form $(\xv,\xv)$ where $\xv \in U \cap V^{\bot}$. Once again we remark that:
	$$
	\dim(U) + \dim(V^{\bot}) = k_{U} + n/2 - k_{V} = n/2 + k_{U} - k_{V} > n/2 .
	$$
	This can be used to prove that with probability $1-O(2^{k_V-k_U})$ we have
$$
\dim(U \cap V^\bot) = k_U-k_V.
$$
This discussion implies that with probability $1-O(2^{k_V-k_U})$ we have
	$$
	\dim (\mathrm{hull}((U,U+V))) = \dim(U \cap V^{\bot}) = k_{U} - k_{V}. 
	$$
	This gives with a high probability a hull of dimension  $k_{U} - k_{V}$.  This concludes the proof. 
	\end{proof} 

\section*{Discussion}

\par{\bf Bootstrapping from here.} 
This algorithm does not only invalidate our security proof, it can also be used to mount an attack. It starts
by noticing that vectors in the hull are of the form $\pi(\uv,\uv)$ where $\uv \in U \cap V^{\bot}$.
This yields partial information on $\pi$ which leads to a feasible attack on the parameters proposed for the SURF scheme.

\medskip
\noindent
\paragraph{\bf On the condition $k_U \geq k_V$.} What makes the attack feasible is the fact that $k_U \geq k_V$. 
This condition is necessary for our signature scheme to work (this is essentially a consequence of Proposition 3 in the old paper). 
It is precisely this condition that enables to produce for any possible syndrome an error of 
a sufficiently low weight that an attacker who does not know the $(U,U+V)$ structure has a hard time to produce.
The distinguisher does not work anymore in the regime where $k_U < k_V$. However in this case our signature scheme does not work anymore.

\medskip
\paragraph{\bf On the NP-completeness of the distinguishing problem.} Interestingly enough, the NP-completeness of Problem 5
(namely Problem P2': weak $(U,U+V)$ distinguishing) works for parameters that really correspond to the case
$k_U < k_V$. The same holds for another proof we have on the NP-completeness of the $(U,U+V)$ distinguishing problem itself, namely 
		\begin{problem}\textit{(}$(U,U+V)$-distinguishing\textit{)}~\\
		\label{prob:UVDist}
		\begin{tabular}{ll}
			Instance: & A binary linear code $\cC$ and an integer $k_{U}$, \\
			Question: & Is there a permutation $\pi$ of length $n$ such that $\pi(\cC)$ is a $(U,U+V)$-code \\ 
			& where $\dim(U) = k_{U}$ and $|\Sp(V)| = n/2$? 
		\end{tabular}
	\end{problem}
The NP-completeness of this problem is given in the old version in the appendix in Subsection \ref{subsec:proofUVDistbis}. The reduction to three dimensional matching is also 
in the regime where $k_U < k_V$.

\newpage
\begin{center}
\Huge{\bf The old version of the paper starts here.}
\end{center}
\newpage
\title{SURF: A new code-based signature scheme
\thanks{This work was supported in part by the Commission of the European Communities through the Horizon 2020 program under project number 645622 PQCRYPTO.}}
\author{Thomas Debris-Alazard\inst{1,2}   \and Nicolas Sendrier\inst{2} \and Jean-Pierre Tillich \inst{2}
 }
 
\institute{Sorbonne Universit\'{e}s, UPMC Univ Paris 06 \and Inria, Paris\\
\email{\{thomas.debris,nicolas.sendrier,jean-pierre.tillich\}@inria.fr}}

\maketitle

\begin{abstract} We present here a new code-based digital signature scheme. This scheme uses $\UV$ codes where both $U$ and $V$ are random. We show that the distribution of signatures is uniform by suitable rejection sampling. This is one of the key ingredients for our proof  that the scheme achieves {\em existential unforgeability under adaptive chosen message attacks} (EUF-CMA) in the random oracle model (ROM) under two assumptions from coding theory, both NP-complete and strongly related to the hardness of decoding in a random linear code. Another crucial ingredient is the proof that the syndromes produced by $\UV$ codes are statistically indistinguishable from random syndromes. Note that these two key properties are also required for applying a recent and generic proof for code-based signature schemes in the QROM \cite{CD17}. As noticed there, this allows to instantiate the code family which is needed and yields a security proof of our scheme in the QROM. Our scheme also enjoys an efficient signature generation and verification. For a (classical) security of 128 bits, the signature size is less than one kilobyte. Contrarily to a current trend in code-based or lattice cryptography which reduces key sizes by using structured codes or lattices based on rings, we avoid this here and still get reasonable public key sizes (less than 2 megabytes for the aforementioned security level). Our key sizes  compare favorably with TESLA-2, which is an (unstructured) lattice-based signature scheme that has also a security reduction in the QROM. This gives the first
practical signature scheme based on binary codes which comes with a security proof and which scales well with the security parameter: for a security level of $2^\lambda$, the signature size is of order $O(\lambda)$, public key size is of size $O(\lambda^2)$, signature generation cost is of order $O(\lambda^3)$, and signature verification cost is of order $O(\lambda^2)$.
\end{abstract}

\textbf{Keywords:} code-based cryptography, digital signature scheme,
decoding algorithm, security proof.

\section{Introduction}\label{sec:introduction}

\subsubsection*{Code-based signature schemes.} It is a long standing open problem to build an efficient and secure signature scheme based on the
hardness of decoding a linear code which could compete in all respects
with DSA or RSA. Such schemes could indeed give a quantum resistant 
signature for replacing in
practice the aforementioned signature schemes  that are well known to be broken by  
quantum computers. A first partial answer to this question was given in
\cite{CFS01}. It consisted in adapting the Niederreiter scheme \cite{N86} for
this purpose. This requires a linear code for which there exists an efficient  decoding algorithm for 
a non-negligible set of inputs.
This means that if $\Hm$ is an
$r \times n$ parity-check matrix of the code,
there exists for a non-negligible set of elements $\sv$ in  $\mathbb{F}_{2}^r$ an efficient way to
find a word $\ev$ in $\mathbb{F}_{2}^n$ of smallest Hamming weight such that $\Hm
\ev^T=\sv^T$. In such a case, we say that $\sv$, which is generally called a syndrome in the literature,  can be decoded. 
To sign a message $\mv$, a hash function $h$ is used to produce a
sequence $\sv_0, \dots, \sv_\ell$ of elements of $\mathbb{F}_{2}^r$.
For instance $\sv_0 = h(\mv)$ and $\sv_i = h(\sv_0,i)$ for $i>0$.
The first $\sv_i$ that can be decoded defines the signature of $\mv$
as the word $\ev$ of smallest Hamming weight such that
\begin{displaymath}
  \Hm \ev^T = \sv_i^T.
\end{displaymath}

\subsubsection*{The CFS signature scheme.} The authors of \cite{CFS01} noticed that very high rate Goppa codes are
able to fulfill this task, and their scheme can indeed be
considered
as the first step towards a solution of the aforementioned problem.
Moreover they gave a security proof of their
scheme relying only on the assumption that two problems were hard, namely (i) decoding a generic
linear code and (ii) distinguishing a Goppa code from a random linear
code with the same parameters.
However, afterwards it was realized that the parameters proposed in \cite{CFS01} can be attacked
by an unpublished attack of Bleichenbacher.
The significant  increase of parameters needed to thwart the Bleichenbacher
attack was fixed by a slight variation \cite{F10}.  However, this modified scheme
is not able to fix two other worrying drawbacks of the CFS scheme, namely
\begin{itemize}
\item[(i)] a lack of security proof in light of the distinguisher of high rate Goppa codes found in \cite{FGOPT11} 
  (see also \cite{FGOPT13} for more details) which shows that the hypotheses used in \cite{CFS01} 
  to give a security proof of the signature scheme were not met,
\item[(ii)]   poor scaling of the parameters
  when security has to be increased. 
It can be readily seen that the complexity $S$ of the best known attack scales only polynomially as $S \approx K^{t/2}$ where
$K$ is the key size in bits and $t$ is some parameter that has to be kept very small (say smaller than 12 in practice), since  the number 
of syndromes $\sv_i$ that have to be computed before finding one that can be decoded is roughly $t!$.
\end{itemize}

\subsubsection*{Other code-based signature schemes.} Other signature schemes based on codes were also given  in the literature
such as for instance the KKS scheme \cite{KKS97,KKS05} or its variants
\cite{BMS11,GS12}.
But they can be considered at best to be one-time signature schemes in
the light of the attack given in \cite{COV07} and great care has to be
taken to choose the parameters of these schemes as shown by \cite{OT11}
which broke all the parameters proposed in \cite{KKS97,KKS05,BMS11}.

There has been some revival of 
the CFS strategy \cite{CFS01},
by choosing other code families. The new code families that were used are 
LDGM codes in \cite{BBCRS13}, i.e. codes with a Low Density
Generator Matrix, or (essentially) convolutional codes
\cite{GSJB14}. There are still some doubts
that there is a way to choose the parameters of the scheme \cite{GSJB14} in order to 
avoid the attack \cite{LT13} on the McEliece cryptosystem based on convolutional codes 
\cite{LJ12}
and the LDGM scheme was broken in \cite{PT16}. 

A last possibility is to use the Fiat-Shamir heuristic to turn a zero-knowledge authentication scheme into a signature scheme.
When based on the Stern authentication scheme \cite{S93} this gives a code-based signature scheme. However this approach leads to really large signature sizes (of the order of hundreds of thousands of bits). This represents a complete picture 
of code-based signature schemes based on the Hamming metric. There has been 
some recent progress in this area for another metric, namely the rank metric \cite{GRSZ14} with the 
RankSign scheme. This scheme enjoys remarkably small key sizes, 
it is of order  tens of thousands bits for 128 bits of security. It also comes with a partial security proof showing that signatures do not leak information when the number of available signatures is smaller than some bound
depending on the code alphabet, but ensuring this condition 
represents a rather strong constraint on the parameters of  RankSign. Moreover there is no overall reduction of the security  to well identified problems in (rank metric) coding theory. 
Irrespective of the merits of this signature scheme,
it is certainly desirable to also have  a  signature scheme  for the Hamming metric due to the general faith in the hardness of decoding in it.

\subsubsection*{Moving from error-correcting codes to lossy source codes.} 
It can be argued that the main problem with the CFS approach is to find a family of linear codes that are at the same time (i) indistinguishable from a random code 
and (ii) that have
a non-negligible fraction of syndromes that can be decoded. There are not so many codes for which (ii) can be achieved and this is probably too much to ask for.
However if we relax a little bit what we ask for the code, namely just a code such that the equation (in $\ev$) 
\begin{equation}
  \label{eq:fundamental_equation}
  \Hm \ev^T = \sv^T
\end{equation}
admits for most of the $\sv$'s 
a solution $\ev$ of small enough weight, then there are many more codes that are able to fulfill this task. This kind of codes are not used in  error-correction but can be found in lossy source coding or source-distortion theory where the problem is to find codes with an associated decoding algorithm which can approximate {\em any} word of the ambient space by a close enough codeword. In the case of linear codes, this means a code and a associated decoding algorithm that can find for any syndrome $\sv$ a vector $\ev$ of small enough weight satisfying \eqref{eq:fundamental_equation} where $\Hm$ is a parity-check matrix of the code. 

Solving \eqref{eq:fundamental_equation} is the basic problem upon which all code-based cryptography relies. 
This problem has been studied for a long time and despite many efforts on this issue \cite{P62,S88,D91,B97b,MMT11,BJMM12,MO15,DT17} the best algorithms for solving this problem \cite{BJMM12,MO15} are exponential in the weight $w$ of $\ev$ as long as
$w = (1-\epsilon)r/2$ for any $\epsilon>0$. 
Furthermore when $w$ is sublinear in $n$, the exponent of the best known algorithms has not changed \cite{CS16} since the Prange algorithm \cite{P62} dating back to the early sixties. Moreover,  it seems very difficult to lower this 
exponent by a multiplicative factor smaller than $\frac{1}{2}$ in the quantum 
computation model as illustrated by 
\cite{B10,KT17}.

\subsubsection*{Our contribution: a new signature scheme based on $(U,U+V)$ codes.} 
Convolutional codes, LDGM and polar codes  come with a decoding algorithm which is polynomial for weights below $r/2$. 
They could theoretically be used in this context.
However in the light of the key attacks \cite{LJ12,PT16, BCDOT16} performed on related schemes, it seems  very difficult to propose parameters which avoid those attacks.
We are instead introducing a new class of codes in this context namely $(U,U+V)$ codes.
A $(U,U+V)$ code is just a way of building a code of length $n$ when we have two codes $U$ and $V$ of length $n/2$. It consists in 
$$
(U,U+V) \eqdef \{(\uv,\uv+\vv): \uv \in U, \vv \in V\}.
$$
Generalized $(U,U+V)$ codes have already been proposed in the cryptographic context for building a McEliece encryption scheme \cite{MT16}.
However, there it was suggested to take $U$ and $V$ to be codes that have an efficient decoding algorithm (this is mandatory in
the encryption context). In the signature context, when we just need to find a small enough solution of \eqref{eq:fundamental_equation} this is not needed.
In our case, we can afford to choose {\em random} codes for $U$ and $V$. It turns out that if we choose $U$ and $V$ random with the right choice
of the dimension of $U$ and $V$, then a suitable use of the Prange algorithm on the code $U$ and the code $V$ provides an advantage in this setting.
It allows to solve \eqref{eq:fundamental_equation} for weights $w$ that are significantly below $r/2$, that is in the range of weights 
for which there are an exponential number of solutions but the best decoding algorithms are still exponential. 

Moreover, by tweaking a little bit the output of the Prange algorithm in our case and performing an appropriate rejection sampling, it turns out that the 
signatures are indistinguishable from a random word of weight $w$. 
Furthermore we also show that syndromes $\Hm \ev^T$ associated to this kind of codes are statistically indistinguishable from random syndromes when errors $\ev$ are drawn uniformly at random among the words of weight $w$.
These are the two key properties that allow to give a tight security proof of our signature scheme which relies only on two problems:
\begin{itemize}
\item[P1:] Solving the decoding problem \eqref{eq:fundamental_equation} when $w$ is sufficiently below $r/2$
\item[P2:] Deciding whether a linear code is permuted $\UV$ code or not.
\end{itemize}
Interestingly enough some recent work \cite{CD17} has shown that these two properties (namely statistical indistinguishability of the signatures and
the syndromes associated to the code family chosen in the scheme) are also enough to obtain a tight security proof in the quantum random oracle model (QROM) for generic code-based signatures under the assumption that Problem P1 stays hard against a quantum computer and that the code family used is computationally indistinguishable from generic linear codes. In other words, as noticed in \cite{CD17}, this can be used to give a tight security proof 
of our $\UV$ codes in the QROM.
 
Problem P1 is the problem upon which all code-based cryptography
relies. 
Here we are in a case where there are multiple solutions of 
\eqref{eq:fundamental_equation} and the adversary may produce any number of instances of
\eqref{eq:fundamental_equation} with the same matrix $\Hm$ and various
syndromes $\sv$ and is interested in solving only one of them. This relates to the, so called, Decoding One Out of
Many (DOOM) problem. This problem was first considered in \cite{JJ02}.
It was shown there how to modify slightly the known algorithms for decoding
a linear code in order to solve this modified problem. This modification was 
later analyzed in \cite{S11}. The parameters of the known algorithms
for solving \eqref{eq:fundamental_equation} can be easily adapted to this
scenario where we have to decode 
simultaneously
multiple instances which all have multiple solutions.

Problem P2 might seem at first sight to be an ad-hoc problem. However as we are going to show, it is an NP-complete problem (see Theorem \ref{theo:UVNP} in Subsection \ref{ss:NPcomplete}). 
Problem P1 is known to be NP-complete and therefore we have a signature scheme whose security relies entirely on NP-complete problems. This is the 
first time that a code-based signature scheme is proposed with such features. Interestingly enough, even weak versions of this problem are NP-complete.
For instance, even in the case when the permutation is restricted to leave globally stable the right and left part, detecting whether the resulting code is a permuted $\UV$-code is already an NP-complete problem (see Problem \ref{prob:P1'} and 
Theorem \ref{th:NPcomplete} in Subsection \ref{ss:NPcomplete}).
Furthermore, we are really in a situation where the resulting permuted $(U,U+V)$ code is actually very close to a random code. The only different behavior
that can be found seems to be in the weight distribution for small weights. In this case, the permuted $(U,U+V)$ code has some codewords of 
a weight slightly smaller than the minimum distance of a
random code of the same length and dimension. It is very tempting to conjecture that the best algorithms for solving 
Problem P2 come from detecting such codewords. This approach can be 
easily thwarted  by choosing the parameters of the scheme in such a way that the best algorithms for solving this task are of prohibitive complexity. Notice that the best algorithms that we have for detecting such codewords are in essence precisely the generic algorithms for solving Problem P1. In some sense, it seems that we might rely on the very same problem, even if our proof technique does not show this.

All in all, this gives the first practical signature scheme based on binary codes which comes with a security proof and which scales well with the parameters: it can be shown that if one wants a security level of $2^\lambda$, then signature size is of order $O(\lambda)$,  public key size is of order $O(\lambda^2)$, signature generation  is of order $O(\lambda^3)$, whereas signature verification is of order $O(\lambda^2)$. It should be noted that contrarily to the current thread of research in code-based or lattice-based cryptography which consists in relying on structured codes or lattices based on ring structures in order to decrease the key-sizes we did not follow this approach here. This allows for instance to rely on the NP-complete problem P1 which is generally believed to be hard on average rather that on decoding in quasi-cyclic codes for instance whose status is still unclear with a constant number of circulant blocks. Despite the fact that we did not use the standard approach for reducing the key sizes relying on quasi-cyclic codes for instance, we obtain 
acceptable key sizes (less than 2 megabytes for 128 bits of security) which compare very favorably to unstructured lattice-based signature schemes
such as TESLA-2 for instance \cite{ABBDEGKP17}. This is due in part to the tightness of our security reduction.

We would like to conclude this introduction by pointing out the simplicity of the parameter selection of our scheme (see end of \S\ref{sec:statDist}). The parameters for a given length $n$ are
chosen as
$$
w = \lfloor n \frac{3-\sqrt{5}}{4} \rfloor \quad ; \quad k_{V} = w \quad ; \quad k_{U} = n/2 - k_{V}
$$
where $w$ is the signature weight and $k_{U}$ (\textit{resp.} $k_{V}$) is the dimension of the code $U$ (\textit{resp.} $V$).  The length $n$ is then chosen in order to thwart the attacks on problems P1 and P2 (actually it is Problem P1 that will govern the length selection).

\subsubsection*{Organization of the paper.}
The paper is organized as follows, we present our scheme in
\S\ref{sec:UVsgnScheme}, in \S\ref{sec:securityProof} we prove it is secure under 
\textit{existential unforgeability under an adaptive chosen message
  attack} (EUF-CMA) in the ROM, in relation with this proof we respectively examine in
\S\ref{sec:statDist}, \S\ref{sec:attsourcedistortion}, and
\S\ref{sec:keyAttack}, how to produce uniformly distributed
signatures as well as the best message and key attacks. Finally we
give some set of parameters on par with the security reduction and
with the current state-of-the-art for decoding techniques.

\medskip

 \section{Notation} 

We provide here some notation that will be used throughout the paper. 
\medskip

\noindent
{\bf General notation.} 
The notation $x \eqdef y$ means that $x$ is defined to be equal to  $y$. We denote by $\mathbb{F}_{2}$
the finite field with $2$  elements and by
$S_{w}$ the subset of $\F_2^n$ of words of weight $w$. 
\medskip

\noindent
{\bf Vector notation.} 
Vectors will be written with  bold letters (such as $\ev$) and  uppercase bold letters are used to denote matrices (such as $\Hm$). Vectors are in row notation.
Let $\xv$ and $\yv$ be two vectors, we will write $(\xv,\yv)$ to denote their concatenation.
For a vector $\xv=(x_i)_{1 \leq i \leq n}$ and a permutation $\pi$ of length $n$ we denote by 
$\pi(\xv)$ the vector $(x_{\pi(i)})_{1 \leq i \leq n}$.
We also denote for a subset $I$ of positions of the vector $\xv=(x_i)_{1 \leq i \leq n}$ by $\xv_I$ the vector whose components are those of $\xv$ which are indexed by $I$, i.e. 
$$
\xv_I = (x_i)_{i \in I}.
$$ 
We define the support of $\xv$ as
$$
\Sp(\xv) \eqdef \{ i \in \{1,\cdots,n \} \mbox{ such that } x_{i} \neq 0 \}
$$
The Hamming weight of $\xv$ is denoted by 	
$|\xv|$.
By some abuse of notation, we will use the same notation 
to denote the size of a finite set: $|S|$ stands for the size of the finite set $S$.  
It will be clear from the context whether $|\xv|$ means the Hamming weight or the size of a finite set. Note that 
$$
|\xv| = |\Sp(\xv)|.
$$
\medskip

\noindent
{\bf Probabilistic notation.} Let $S$ be a finite set, then $x \Unif S$ means 
that $x$ is assigned to be a random element chosen uniformly at random in $S$. For a distribution $\Dc$ we write $\xi \sim \Dc$ to indicate that the random variable $\xi$ is chosen according to $\Dc$. 
The uniform distribution on a certain discrete set is denoted by $\Uc$. The set will be specified in the text. We denote the uniform distribution on $S_{w}$ by $\mathcal{U}_{w}$. When we have probability distributions $\Dc_1$, $\Dc_2$, \dots, $\Dc_n$ over discrete sets 
$\Ec_1$, $\Ec_2$, \dots, $\Ec_n$, we denote by $\Dc_1 \otimes \Dc_2 \otimes \cdots \otimes \Dc_n$ the product probability distribution, i.e
$\Dc_1 \otimes \cdots \otimes \Dc_n(x_1,\dots,x_n) \eqdef \Dc_1(x_1) \dots \Dc_n(x_n)$ for 
$(x_1,\dots,x_n) \in \Ec_1 \times \cdots \times \Ec_n$. The $n$-th power product of a distribution $\Dc$ is denoted by $\Dc^{\otimes n}$, i.e.
$\Dc^{\otimes n} \eqdef \underbrace{\Dc \otimes \cdots \otimes \Dc}_{n \;\text{times}}$.

Sometimes when we wish to emphasize on which probability space the probabilities or the expectations are taken, we denote by a subscript the random variable specifying the associated probability space over which the probabilities or expectations are taken.
For instance the probability $\prob_X(\Ec)$ of the event $\Ec$ is taken over $\Omega$ the probability space over which the random variable
$X$ is defined, \textit{i.e.} if $X$ is for instance a real random variable, $X$ is a function from a probability space $\Omega$ to $\R$, and the aforementioned probability is taken according to
the probability chosen for $\Omega$.
\medskip

\noindent
{\bf Coding theory.}
A binary linear code $\mathcal{C}$ of length $n$ and dimension $k$ is a subspace of $\mathbb{F}_{2}^{n}$ of dimension $k$ and is usually defined by a parity-check matrix $\Hm$ of size $r \times n$ as
$$
\Cc = \left\{ \xv \in \mathbb{F}_{2}^{n}: \Hm \xv^T=\mathbf{0}\right\}.
$$ 	
When $\Hm$ is of full rank (which is usually the case) we have $r = n-k$.
The rate of this code (that we denote by $R$)  is defined as
$R \eqdef \frac{k}{n}$. In this case we say that $\cC$ is a $\lbrack n,k \rbrack$-code.
 \section{The $(U,U+V)$-signature Scheme}
\label{sec:UVsgnScheme} 

\subsection{The general scheme $\mathcal{S}_{\textup{code}}$}
\label{sec:UVsgnScheme1}

Our scheme can be viewed as a probabilistic version of the full domain hash (FDH) signature scheme as defined in \cite{BR96} which is similar to the probabilistic 
signature scheme introduced in \cite{C02} except that we replace RSA with a trapdoor function based upon the hardness of Problem P1.
Let $\Cc$ be a binary linear code of length $n$ defined by a parity-check matrix $\Hm$. The one way function $f_{\Hm,w}$ we  consider is given by 
\begin{displaymath}
  \begin{array}{lccc}
    f_{\Hm,w} :  &S_{w} & \longrightarrow & \mathbb{F}_{2}^{n-k}\\
           &\ev & \longmapsto & \ev\Hm^{T}
  \end{array}
\end{displaymath}
Inverting this function on an input $\sv$ amounts to solve Problem P1.
We are ready now to give the general scheme we consider. 
We assume that we have a family of codes which is defined by a set $\Fc$ of parity-check matrices of size $(n-k)\times n$ such that for all $\Hsec \in \Fc$ we have an algorithm $D_{\Hsec,w}$ which on input $\sv$ computes $\ev \in f_{\Hsec,w}^{-1}(\sv)$. Then we pick uniformly at random $\Hsec \in \Fc$, an $n \times n$ permutation matrix $\Pm$, a non-singular matrix $\Sm \in \mathbb{F}_{2}^{(n-k) \times (n-k)}$ which define the secret and public key as:
$$
sk \leftarrow (\Hsec,\Pm,\Sm) \mbox{ } ; \mbox{ }  pk \leftarrow \Hpub \mbox{ where } \Hpub \eqdef \Sm \Hm \Pm 
$$

\begin{remark} Let $\Csec$ be the code defined by $\Hsec$, then $\Hpub$ defines the following code:
	$$
	\Cpub = \{ \cv \Pm: \cv \in \Csec\}.
	$$
\end{remark}

We also select a cryptographic hash function $\hash : \{0,1\}^{*} \rightarrow \mathbb{F}_{2}^{n-k}$
and a parameter $\lambda_{0}$ for the random salt $\rv$. The algorithms \Sgnsk\ and \Vrfypk\ are defined as follows
\begin{center}
  \begin{tabular}{l@{\hspace{3mm}}|@{\hspace{3mm}}l}
    $\Sgnsk(\mv)\!\!: \qquad \qquad \qquad$ & $\Vrfypk(\mv,(\ev',\rv))\!\!:$ \\
    $\quad \rv \Unif \{ 0,1 \}^{\lambda_{0}}$ &$\quad \sv \leftarrow \hash(\mv ,\rv)$ \\
    $\quad \sv \leftarrow \hash(\mv ,\rv)$ & $\quad \texttt{if } \Hpub \ev'^T = \sv^{T} \texttt{ and } |\ev'| = w \texttt{ return } 1$\\ 
    $\quad\ev \leftarrow D_{\Hsec,w}(\Sm^{-1}\sv^{T})$ &$\quad \texttt{else return } 0 $ \\
    $\quad \texttt{return}(\ev\Pm,\rv)$& \\			
  \end{tabular} 
\end{center}

\begin{remark} We add a salt in the scheme in order to have a tight security proof.
\end{remark}

The correction of the verification step  (i.e. that the pair $(\ev \Pm,\rv)$ passes the verification step) follows from the fact that 
by definition of $D_{\Hsec,w}(\Sm^{-1}\sv^{T})$ we have $\Hsec\ev^{T} = \Sm^{-1}\sv^{T}$.
Therefore  $\Hpub (\ev\Pm)^{T} = (\Hpub \Pm^{T}) \ev^{T}= \Sm \Hsec \ev^{T} = \Sm \Sm^{-1}\sv^{T}=\sv^{T}$.
We also have $|\ev \Pm| = |\ev| = w$.

To summarize, a valid signature of a message $\mv$ consists of a pair $(\ev,\rv)$ such that $\Hpub \ev^{T} = \hash(\mv,\rv)^{T}$ with $\ev$ of Hamming weight $w$.

\subsection{Source-distortion codes and decoders}
\label{subsec:2.2}

Source-distortion theory is a  branch of information theory 
which deals with obtaining a family of codes, with an associated set of parity-check matrices $\Hm \in \Fc$, of the smallest possible dimension which can be used in our setting
(\textit{i.e.} for which we can invert $f_{\Hm,w}$). Recall that a linear code is a vector space and the dimension of the code is defined as the dimension of this vector space.
For a linear code specified by a full rank parity-check matrix of size $r \times n$, the dimension $k$ of the code is equal to $n-r$.
It is essential to have the smallest possible dimension in our cryptographic application, since this makes the associated problem P1 harder: the smaller $n-r$ is, the bigger $r$ is and the further 
away $w$ can be from $r/2$ (where solving P1 becomes easy). This kind of codes is used for performing lossy coding of a source. Indeed 
assume that we can perform this task, then this means that for every binary word $\yv$, we compute $\sv^{T} \eqdef \Hm \yv^{T}$, we find $\ev$ of Hamming weight $w$ such that $\Hm \ev^{T} = \sv^{T}$ which leads to deduce a codeword $\cv \eqdef \yv - \ev$ which is at distance $w$ from $\yv$.
The word $\yv$ is compressed with a compact description of $\cv$. Since the dimension of the code is $n-r$ we just need  $n-r$ bits to store a description of $\cv$.
We have replaced here $\yv$ with  a word which is not too far away from it. Of course, the smaller $n-r$ is,  the smaller the compression rate 
$\frac{n-r}{n}$ is. There is some loss by replacing $\yv$ by $\cv$ since we are in general close to $\yv$ but not equal to it.

In this way, finding a close codeword $\cv$ of a given word $\yv$ is equivalent to find for the syndrome $\Hm \yv^{T}$ a low weight ``error'' $\ev$ such that $\Hm \ev^{T} = \Hm \yv^{T}$.
For our purpose it will be more convenient to adopt the error and syndrome viewpoint than the codeword viewpoint.
To stress the similarity with error-correction we will 
call the function which associates to a syndrome $\sv$ such an $\ev$ a source-distortion decoder.

\begin{definition}[Source Distortion Decoder]\label{def:sddeco} 
Let $n$, $k \leq n$ be
  integers and let $\mathcal{F}$ be a family of parity-check matrices (which define binary linear codes of
  length $n$ and dimension $k$). A source distortion decoder for
  $\mathcal{F}$ is a probabilistic algorithm $D$:
  \begin{displaymath}
    \begin{array}{lccc}
      D : & \Fc \times \mathbb{F}_{2}^{n-k} & \longrightarrow & \mathbb{F}_{2}^{n} \\
                & (\Hm,\sv) &\longmapsto & \ev 
    \end{array}
  \end{displaymath}
  such that $\Hm\ev^{T}=\sv^{T}$.  When the weight of the error is
  fixed, we call it a decoder of fixed distortion $w$ and we denote it
  by $D_{w}$.  We say that the distortion $w$ is achievable if
  there exists a family of codes with a decoder of fixed distortion
  $w$.
\end{definition} 
This discussion raises a first question: for given $n$ and $k$, what
is the minimal distortion $w$ which is achievable? We know from
Shannon's rate-distortion theorem that the minimal $w$ is given by the
Gilbert-Varshamov bound $d_{\textup{GV}}(n,k)$ which follows:

\begin{definition}[Gilbert-Varshamov's bound]\label{dfn:gv} For given integers $n$ and $k$ such that $k \leq n$, the Gilbert-Varshamov bound 
  $d_{\textup{GV}}(n,k)$ is given by:
  \begin{displaymath}
    d_{\textup{GV}}(n,k) \eqdef n h^{-1}\left(1- k/n\right)
  \end{displaymath}
  where $h$ denotes the binary entropy: $h(x) = - x \log_2 x -(1-x) \log_2(1-x)$ and $h^{-1}$ its inverse defined on $[0,1]$ and whose
  range is $[0,\frac{1}{2}]$.
\end{definition}

\subsubsection{Achieving distortion $w=(n-k)/2$ with the Prange technique.}
The study of random codes shows that they achieve the Gilbert-Varshamov source-distortion bound in average. Nevertheless we do not know for them an efficient source-distortion algorithm. However, as the following proposition shows, it is not the case when the distortion $w$ is higher. When $w=(n-k)/2$ there is a very efficient decoder using the Prange technique \cite{P62} for decoding. To explain it consider a parity-check matrix $\Hm$ which defines a linear code $\Cc$ of dimension $k$ and length $n$.
We want to find for a given $\sv \in \mathbb{F}_{2}^{n-k}$ an error $\ev$ of low weight such that $\Hm\ev^{T}= \sv^{T}$.
$\Hm$ is a full-rank matrix and it therefore contains an invertible submatrix $\Am$ of size $(n-k)\times (n-k)$. 
We choose a set of positions $I$ of size $n-k$ for which $\Hm$ restricted to these positions is a full rank matrix. For simplicity assume that this matrix is in 
the first $n-k$ positions: $\Hm = \begin{pmatrix}  \Am | \Bm\end{pmatrix}$. We look for an $\ev$ of the form 
$\ev = (\ev',\mathbf{0})$ where $\ev' \in \mathbb{F}_{2}^{n-k}$. We should therefore have $\sv^T= \Hm \ev^T= \Am {\ev'}^T$, that is 
${\ev'}^T = \Am^{-1}\sv^T$. The expected weight of $\ev'$ is $\frac{n-k}{2}$ and it is easily verified that by randomly picking a 
random set $I$ of size $n-k$ we have to check a polynomial number of them until finding an ${\ev'}^T$ of weight exactly $(n-k)/2$. 

\paragraph{Notation.}
  We denote by $\Psd_{(n-k)/2}$ this fixed distortion decoder and by $\Psd$ the decoder which picks a random subset until finding one 
  for which $\Hm$ restricted to the columns corresponding to $I$ is invertible and computes $\ev'$ as explained above.  $\Psd$  does not necessarily output an error of weight $(n-k)/2$. 
  \newline

From the previous discussion we easily obtain 
\begin{proposition}[Generic Source Distortion Decoder]\label{prop:g}~\\
	The decoder $\Psd_{(n-k)/2}$ runs in polynomial time on average over full rank $(n-k)\times n$ matrices.
\end{proposition}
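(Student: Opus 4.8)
The plan is to bound the expected number of iterations of $\Psd_{(n-k)/2}$, each of which costs only $\mathrm{poly}(n)$. On input $(\Hm,\sv)$, one iteration draws a uniformly random set $I\subseteq\{1,\dots,n\}$ with $|I|=n-k$, tests by Gaussian elimination whether the square submatrix $\Hm_I$ (the columns of $\Hm$ indexed by $I$) is invertible at cost $\OO{n^3}$, and, if so, computes ${\ev'}^T=\Hm_I^{-1}\sv^T$ together with its Hamming weight at cost $\OO{n^2}$, returning the corresponding vector (the entries of $\ev'$ placed on the positions of $I$, zeros elsewhere, of weight $\wt{\ev'}$) as soon as $\wt{\ev'}=(n-k)/2$ and retrying otherwise. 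Since the successive draws of $I$ are independent, for a fixed $\Hm$ the number of iterations is a geometric random variable, so it suffices to lower bound the probability that a single iteration succeeds.

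I would establish a lower bound of $\Om{1/\sqrt{n}}$ on this probability using two elementary observations about $\Hm$ drawn uniformly among the full-rank $(n-k)\times n$ binary matrices. First, because the event ``$\Hm_I$ invertible'' already implies that $\Hm$ has full rank, conditioning on full rank cannot decrease its probability; thus $\prob_\Hm[\Hm_I\text{ invertible}]$ equals the probability that a uniform $(n-k)\times(n-k)$ binary matrix is invertible divided by that of a uniform $(n-k)\times n$ one being full rank, hence is at least $\prod_{i\ge 1}(1-2^{-i})>0.28$, an absolute constant. Second, conditioned on $\Hm_I$ invertible it is uniform over the group of invertible $(n-k)\times(n-k)$ matrices, so $\Hm_I^{-1}$ is too, and therefore ${\ev'}^T=\Hm_I^{-1}\sv^T$ is uniform over $\F_2^{n-k}\setminus\{\zero\}$ for any fixed $\sv\neq\zero$ (if $\sv=\zero$ the target weight is unattainable and the case is degenerate). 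By Stirling's formula a uniform vector of $\F_2^{n-k}$ has weight exactly $(n-k)/2$ --- or $\lfloor(n-k)/2\rfloor$ when $n-k$ is odd --- with probability $\binom{n-k}{\lfloor(n-k)/2\rfloor}2^{-(n-k)}=\Th{1/\sqrt{n-k}}$. Multiplying the two, one iteration succeeds with probability $\Om{1/\sqrt{n}}$ (and $\Om{1}$ when $n-k=\OO{1}$), so the expected number of iterations is $\OO{\sqrt{n}}$ and the expected running time is $\OO{\sqrt{n}}\cdot\mathrm{poly}(n)=\mathrm{poly}(n)$.

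The delicate point --- and the reason the statement speaks of an average ``over full-rank matrices'' rather than a bound valid for every parity-check matrix --- is the transfer from this averaged estimate to a per-instance one: a $2^{-\Om{n}}$ fraction of full-rank matrices are pathological (for instance $\Hm=(\Imat\mid\zero)$, which has a unique information set), and for such an $\Hm$ together with an adversarial $\sv$ the per-iteration success probability can be far below $1/\sqrt n$ or even $0$, so the conditional expected running time need not be polynomial. A fully rigorous argument must therefore show that, outside a $2^{-\Om{n}}$ fraction of matrices, both the number of information sets is $\Om{1}\cdot\binom{n}{n-k}$ and a $\Om{1/\sqrt n}$ fraction of them hit the target weight --- i.e. that the per-instance success probability concentrates around its mean. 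I expect this concentration step to be the main obstacle; the count of information sets of a random matrix is classically well concentrated, and the target-weight fraction can be handled by a second-moment computation refining the first-moment estimate of the previous paragraph.
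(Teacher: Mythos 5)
Your approach mirrors what the paper sketches before stating this proposition: pick a random set $I$ of $n-k$ positions, invert the corresponding submatrix, check the weight, retry. The paper in fact gives no proof at all — it only asserts that "it is easily verified that by randomly picking a random set $I$ ... we have to check a polynomial number of them" — so your write-up is already considerably more careful than the source. Your first-moment computations are all correct: for uniformly random full-rank $\Hm$ and random $I$, $\prob[\Hm_I\text{ invertible}]\ge\prod_{i\ge1}(1-2^{-i})>0.28$, and conditioned on that event, $\Hm_I^{-1}\sv^T$ is uniform over $\F_2^{n-k}\setminus\{\zero\}$ and hits weight $\lfloor(n-k)/2\rfloor$ with probability $\Theta(1/\sqrt{n-k})$.

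The difficulty you flag, however, is not merely a technical wrinkle to be cleaned up later; it is the heart of the matter. What you have established is $\esp_\Hm[p(\Hm,\sv)]=\Omega(1/\sqrt n)$, where $p(\Hm,\sv)$ is the per-iteration success probability, whereas controlling the expected number of iterations requires the upper bound $\esp_\Hm[1/p(\Hm,\sv)]=O(\sqrt n)$, and Jensen only gives the inequality in the useless direction. Your own example should be taken at face value: $\Hm=(\Imat\mid\zero)$ is a full-rank matrix with a unique information set, on which the algorithm succeeds only if $|\sv|=(n-k)/2$, and otherwise never halts. Hence $\esp_\sv[1/p((\Imat\mid\zero),\sv)]=\infty$, and since this matrix carries positive mass, the expected running time averaged over full-rank $\Hm$ (whether or not one also averages over $\sv$) is infinite — so a literal "finite expected running time" reading of the proposition is false unless the decoder is given a deterministic fallback, or the statement is weakened to "with overwhelming probability over $\Hm$". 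Under either fix, the concentration step you sketch — showing that outside a negligible set of $\Hm$, a $\Omega(1/\sqrt n)$ fraction of the $\Omega(\binom{n}{n-k})$ information sets produce the target weight, presumably via a second-moment argument over pairs of information sets handling the correlations from overlapping index sets — is exactly what is needed, and you correctly identify it but do not carry it out. Your proposal is therefore incomplete, but it pinpoints precisely the issue that the paper's one-line justification glosses over.
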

When we consider in general the family of random parity-check matrices (which define random linear codes) we speak about generic source-distortion decoders as there is no structure, except linearity of the code they define. 
In contrast to the distortion $(n-k)/2$, the only algorithms we know for linear codes for smaller values of $w$  are all exponential in the distortion. 
This is illustrated by Figure \ref{fig:compExp} where we give the exponents (divided by the length $n$)  of the complexity in base $2$ as a function of the distance, for the fixed rate $R = k/n = 0.5$, of the best generic fixed-$w$ source-distortion decoders. As we see, the normalized exponent is $0$ for 
distortion $(n-k)/2$ and the difficulty increases as $w$ approaches the Gilbert-Varshamov bound (which is equal approximately to $0.11 n$ in this case). 

\begin{figure}
  \begin{center} 
    \caption{Normalized exponents in base 2 of the best generic fixed-$w$ source distortion decoders.}
    \includegraphics[scale=0.4]{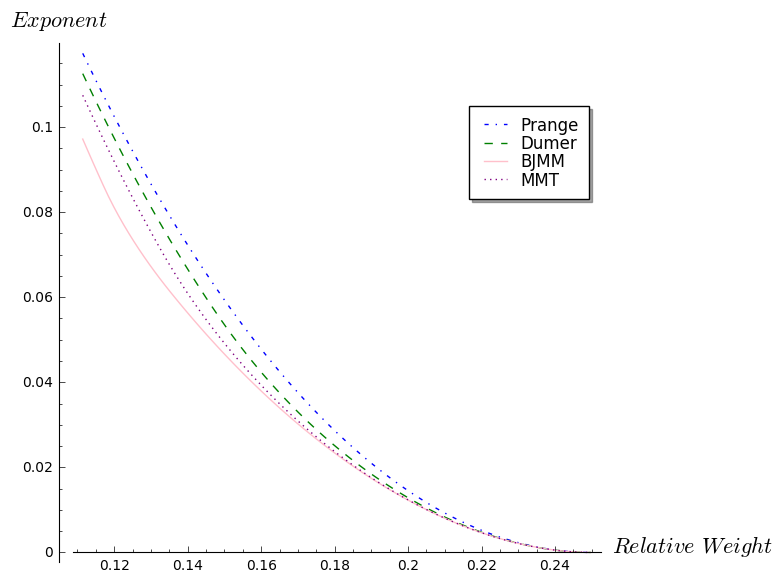}
    \label{fig:compExp}
  \end{center}
\end{figure}

\subsubsection{Decoding Errors and Erasures Simultaneously.}
In the following problem, the word $\xv$, more precisely its support, is called the {\em erasure pattern}.
\begin{problem}[Decoding Errors and Erasures]~\\
	\begin{tabular}{ll}
		Instance: & $\Hm \in \mathbb{F}_{2}^{(n-k) \times n}$,
		$\sv \in \mathbb{F}_{2}^{n-k}$, $\xv \in \mathbb{F}_{2}^{n}$, $\nu$
		integer \\
		Output: & $\ev \in \mathbb{F}_{2}^{n}$ such that $\Hm\ev^{T} = \sv^{T}$ and $\left|\supp(\ev)\setminus\supp(\xv)\right|=\nu$
	\end{tabular}
\end{problem}
In fact, the weight of the solution $\ev$ is constrained outside of the erasure pattern $\xv$. Within the erasure pattern the coordinates of $\ev$ can take any value.
For the sake of simplicity, we will overload the notation and denote $D_\nu(\Hm,\sv,\xv)$ a solution of the above problem whereas $D_\nu(\Hm,\sv)$ denotes the (erasure-less) decoding of $\nu$ errors. The problem of erasure decoding appears very naturally in coding theory, including in source-distortion problem. We may reduce the error and erasure decoding to an error only decoding in a smaller code.
\begin{proposition}
	\label{prop:syndPuncAlt} 
	Let $\Hm\in\F_2^{(n-k)\times n}$ and $\xv\in\F_{2}^n$ be such that the $\wt{\xv}=\rho$ columns of $\Hm$ indexed by $\supp(\xv)$ are independent. For any $\sv\in\F_2^{n-k}$ we can derive $\ev=D_\nu(\Hm,\sv,\xv)$ from $\ev''=D_\nu(\Hm'',\sv'')$ in polynomial time where
	\begin{enumerate}[(i)]
		\item $\Hm''\in\F_2^{(n-k-\rho)\times(n-\rho)}$ can be derived in polynomial time from $\Hm$ and $\xv$,
		\item $\sv''\in\F_2^{n-k-\rho}$ can be derived in polynomial time from $\Hm$, $\xv$, and $\sv$.
	\end{enumerate}
\end{proposition}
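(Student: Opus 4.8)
\medskip
\noindent\textbf{Proof plan.}
The idea is to use that an \emph{erasure} at the positions $I\eqdef\supp(\xv)$ imposes no constraint on the corresponding coordinates of $\ev$ in the equation $\Hm\ev^T=\sv^T$: those $\rho$ coordinates can be eliminated by a Gaussian elimination of $\Hm$ along the (independent, by hypothesis) columns indexed by $I$, leaving an ordinary $\nu$-error decoding instance in a code of length $n-\rho$.

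First I would row-reduce $\Hm$ with respect to $I$. Since the columns of $\Hm$ indexed by $I$ are independent, Gaussian elimination produces, in polynomial time from $\Hm$ and $\xv$, an invertible matrix $\Um\in\F_2^{(n-k)\times(n-k)}$ such that, after a suitable reordering of the rows, the submatrix of $\Um\Hm$ formed by the columns in $I$ is $\begin{pmatrix}\Imat_\rho\\ \zero\end{pmatrix}$. Reordering the columns so that those of $I$ come first, we obtain
$$
\Um\Hm=\begin{pmatrix}\Imat_\rho & \Bm\\ \zero & \Hm''\end{pmatrix},\qquad \Bm\in\F_2^{\rho\times(n-\rho)},\quad \Hm''\in\F_2^{(n-k-\rho)\times(n-\rho)},
$$
and, splitting the column vector $\Um\sv^T=\begin{pmatrix}{\sv'}^T\\ {\sv''}^T\end{pmatrix}$ with $\sv'\in\F_2^{\rho}$ and $\sv''\in\F_2^{n-k-\rho}$, one observes that $\Hm''$ is a function of $\Hm$ and $\xv$ only, while $\sv''$ is a function of $\Hm$, $\xv$ and $\sv$ only; this already yields items (i) and (ii). (If $\Hm$ has full row rank, so does $\Hm''$.)

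Next I would rewrite the decoding equation. Writing $\ev=(\ev_I,\ev_{\bar I})$ in the reordered coordinates, $\Hm\ev^T=\sv^T$ is equivalent to $\Um\Hm\ev^T=\Um\sv^T$, i.e. to the pair $\ev_I^T={\sv'}^T+\Bm\,\ev_{\bar I}^T$ and $\Hm''\ev_{\bar I}^T={\sv''}^T$. Hence, given $\ev''=D_\nu(\Hm'',\sv'')$, i.e. a weight-$\nu$ vector with $\Hm''{\ev''}^T={\sv''}^T$, I set $\ev_{\bar I}\eqdef\ev''$ and $\ev_I^T\eqdef{\sv'}^T+\Bm\,{\ev''}^T$ (imposing no weight constraint on $\ev_I$, which is legitimate since these are exactly the erased positions), and undo the column reordering to recover $\ev$. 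Then $\Hm\ev^T=\sv^T$ by construction, and since $I=\supp(\xv)$ we have $\supp(\ev)\setminus\supp(\xv)=\supp(\ev_{\bar I})$, which has size $\nu$; so $\ev$ is a valid output of $D_\nu(\Hm,\sv,\xv)$. Conversely, the restriction to $\bar I$ of any solution of the error-and-erasure instance is a weight-$\nu$ solution of $(\Hm'',\sv'')$, so the two instances are solvable simultaneously and the reduction is faithful. All manipulations --- the Gaussian elimination computing $\Um$, the column reordering, and the matrix--vector products --- are linear algebra over $\F_2$ and run in polynomial time.

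I do not expect a real obstacle: the argument is essentially bookkeeping. The two points that need care are that the independence of the $I$-columns is precisely what allows bringing them to the shape $\begin{pmatrix}\Imat_\rho\\ \zero\end{pmatrix}$, and that the weight in the constraint is counted on $\bar I$ only, so that $\ev_I$ stays genuinely free; granting those, writing $\Hm''$ and $\sv''$ out explicitly and checking the polynomial-time bound is routine.
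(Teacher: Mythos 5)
Your proof is correct and follows essentially the same approach as the paper's: Gaussian elimination using the $\rho$ independent columns indexed by $\supp(\xv)$ as pivots, yielding the block form $\left(\begin{smallmatrix}\Imat_\rho & \Bm\\ \zero & \Hm''\end{smallmatrix}\right)$, then solving the reduced instance with $D_\nu$ and back-substituting to fill in the erased coordinates. Your $\Um$ and $\Bm$ are the paper's $\Sm$ and $\Hm'$, and the reconstruction formula $\ev_I^T = {\sv'}^T + \Bm\,{\ev''}^T$ matches the paper's $\ev'=\sv'+\ev''{\Hm'}^T$.
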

\begin{proof}
	Without loss of generality, we assume that the `1's in $\xv$ come first, $\xv=(1\cdots1,0\cdots0)$. A Gaussian elimination on $\Hm$ using the first $\rho$ positions as pivots yields
	$$ \Sm\Hm = \left(\begin{array}{c|clc}
	~~\Imat_{\rho}~~ & ~~ & \Hm' & ~~ \\ \hline
	\zero & & \Hm'' &
	\end{array}\right) $$
	for some non-singular matrix $\Sm$. Let $(\sv',\sv'')=\sv\Sm^T$ with $\sv'\in\F_2^\rho$ and $\sv''\in\F_2^{n-k-\rho}$, $\ev''=D_\nu(\Hm'',\sv'')$, $\ev'=\sv'+\ev''\Hm'^T$, and $\ev=(\ev',\ev'')$. We easily check that $\Hm\ev^T=\sv^T$ and $|\supp(\ev)\setminus\supp(\xv)|=|\supp(\ev'')|=\nu$, thus $\ev=D_\nu(\Hm,\sv,\xv)$. All operations, except possibly the call to $D_\nu$, are polynomial time.\qed
\end{proof}
The reduction of the above proposition applies to $\Psd$. Given $(\Hm,\sv,\xv)$ and using the notation of the proof, we set $\ev''=\Psd(\Hm'',\sv'')$ and we denote $\ev=(\ev',\ev'')=\Psd(\Hm,\sv,\xv)$ the corresponding error. With fixed distortion we have $\ev''=\Psd_{(n-k-|\xv|)/2}(\Hm'',\sv'')$ and we denote $\ev=\Psd_{(n-k-|\xv|)/2}(\Hm,\sv,\xv)$.

Finally, let us point out that if $\Hm$ is the parity check matrix of a binary linear $[n,k]$-code $\Cc$, the matrix $\Hm''$ that appears in Proposition~\ref{prop:syndPuncAlt} is the parity-check matrix of the punctured code in $I=\supp(\xv)$ as defined below:
\begin{definition}[Punctured code]
	Consider a code $\Cc$ of length $n$. The punctured code $\punc_{I}(\Cc)$ in a set of positions
	$I \subset \{1,\dots,n\}$ is a code of length $n -|I|$ defined as
	\begin{displaymath}
	\punc_{I}(\Cc) \eqdef \{\cv_{\bar{I}}:\cv \in \Cc\}
	\end{displaymath}
	where $\bar{I} = \{1,\dots,n\}\setminus I$.
	\end{definition}
Therefore, what Proposition~\ref{prop:syndPuncAlt} says is that decoding $\nu$ errors and $\rho$ erasures in an $[n,k]$-code is essentially the same thing as decoding $\nu$ errors in an $[n-\rho,k]$-code.

\subsection{The $(U,U+V)$ Code Family and Its Decoding}
Source-distortion theory has found over the years several families of codes with an efficient source-distortion algorithm which achieves 
asymptotically the Gilbert-Varshamov source-distortion bound, one of the most prominent ones being probably the Arikan polar codes \cite{A09} (see \cite{K09b}). The naive way would be to build our signature on such a code-family and hoping that permuting the code positions and 
publishing a random parity-check matrix of the permuted code would destroy all the structure used for decoding. 
All known families of codes used in this context have low weight codewords and this can be used to mount an attack. We will proceed differently 
here and introduce in this setting the $(U,U+V)$ codes mentioned in the introduction. The point is that they (i) have very little 
structure, (ii) have a very simple source-distortion decoder which is more powerful than the generic source decoder,
(iii) they do not suffer from low weight codewords as was the case with the aforementioned families.
It will be useful to recall here that
\begin{definition}
  [$(U,U+V)$-Codes]
  Let $U$, $V$ be linear binary codes of length $n/2$ and dimension $k_{U}$, $k_{V}$. We define the subset of $\mathbb{F}_{2}^{n}$:
  \begin{displaymath}
    (U,U+V) \eqdef \{ (\uv,\uv+\vv) \mbox{ such that } \uv \in U \mbox{
      and } \vv \in V \}
  \end{displaymath}
  which is a linear code of length $n$ and dimension $k = k_{U} + k_{V}$. The resulting code is of minimum
  distance $\min(2d_U,d_V)$ where $d_U$ is the minimum distance of $U$ and 
  $d_V$ is the minimum distance of $V$.
   A parity-check matrix of such a code is given by
  $$
  \begin{pmatrix}
  \Hm_{U} & \mathbf{0} \\
  \Hm_{V} & \Hm_{V} \\
  \end{pmatrix}
  $$
  where $\Hm_{U} \in \mathbb{F}_{2}^{(n/2 - k_{U})\times n/2}$ (resp. $\Hm_{V}\in \mathbb{F}_{2}^{(n/2 - k_{V})\times n/2}$) is a parity-check matrix of $U$ (resp. $V$).
\end{definition}
We are now going to present a source-distortion for a $(U,U+V)$ code. 
\newline

We can use the generic source-distortion decoder of Proposition \ref{prop:g} for source distortion decoding a $(U,U+V)$ code. 
Assume that we have a $(U,U+V)$ code of length $n$ of parity-check matrix $\Hsec \eqdef \begin{pmatrix}
\Hm_{U} & \mathbf{0} \\ 
\Hm_{V} & \Hm_{V} \\
\end{pmatrix}$ where $\Hm_{U},\Hm_{V}$ are random and a syndrome $\sv = (\sv_1,\sv_2)\in\F_2^{n/2-k_{U}}\times\F_2^{n/2 - k_{V}}$ that we want to decode. Let us first remark that, for $\ev = (\ev_{1},\ev_{2})\in\F_2^{n/2}\times\F_2^{n/2}$,
$$
\Hsec \ev^{T} = \begin{pmatrix}
\Hm_{U} \ev_{1}^{T} \\
\Hm_{V} (\ev_{1} + \ev_{2})^{T}
\end{pmatrix} = \begin{pmatrix}
\sv_{1}^{T} \\
\sv_{2}^{T} 
\end{pmatrix} \iff \Hm_{U} \ev_{1}^{T} = \sv_{1}^{T} \mbox{ and } \Hm_{V} (\ev_{1} + \ev_{2})^{T} = \sv_{2}^{T}$$
In this way, we first decode $\sv_2$ in $V$ to find $\ev_{V} \eqdef \ev_{1} + \ev_{2}$. That is $\ev_V = \Psd_{(n/2-k_{V})/2}(\Hm_{V} ,\sv_{2})$ with Prange's polynomial time fixed distortion algorithm. We next decode $\sv_1$ in $U$ using $\ev_V$ as an erasure pattern. The idea here is that $\ev_V$ covers a large part of $\ev_1$ leaving us with an error which is, hopefully, easier to find. We compute $\ev_U = \Psd_{(n/2-k_{U}-|\ev_{V}|)/2}(\Hm_{U},\sv_{1},\ev_{V})$ with Prange's polynomial time fixed distortion algorithm. We claim that $\ev=(\ev_1,\ev_2)=(\ev_U,\ev_V+\ev_U)$ verifies $\Hsec\ev^T=\sv^T$ and has weight $n/2-k_U$. The procedure is described in Algorithm \ref{algo:uvsdd}.

\begin{algorithm}
	{\bf Parameter:}  a $(U,U+V)$ code of length $n$ and dimension $k = k_{U} + k_{V}$\\
	{\bf Input:} $(\sv_1,\sv_2)$ with $\sv_1 \in \mathbb{F}_{2}^{n/2-k_{U}}$, $\sv_{2} \in \mathbb{F}_{2}^{n/2- k_{V}}$\\
	{\bf Output:} $\ev \in \mathbb{F}_{2}^{n}$\\
	{\bf Assumes:} $2k_U - k_V \leq n/2$.
	
	\caption{\texttt{$UV$-$\mathbf{sddV}1$} : \textbf{$(U,U+V)-$Source Distortion Decoder} \label{algo:uvsdd}}
	\begin{algorithmic}[1]
		\State $\ev_{V} \leftarrow \Psd_{(n/2-k_{V})/2}(\Hm_{V} ,\sv_{2})$
		\State $\nu \leftarrow (n/2 - k_U - \wt{\ev_V})/2$
		\State $\ev_{U} \leftarrow \Psd_{\nu}(\Hm_{U} ,\sv_{1}, \ev_V)$
		\State \Return $(\ev_{U},\ev_{U} + \ev_{V})$
	\end{algorithmic}
\end{algorithm}

\begin{proposition} The algorithm \texttt{$UV$-$\mathbf{sddV}1$} is a fixed-$(n/2-k_{U})$ source-distortion decoder which works in polynomial average-time
  when $2k_U - k_V \leq n/2$.	
\end{proposition}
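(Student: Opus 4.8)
The plan is to prove the three things packed into the statement, in this order: (i) the word $\ev=(\ev_U,\ev_U+\ev_V)$ output by the algorithm satisfies $\Hsec\ev^T=\sv^T$; (ii) it has Hamming weight exactly $n/2-k_U$, so that the decoder has \emph{fixed} distortion $n/2-k_U$; and (iii) each of the two Prange calls runs in polynomial time on average, whence so does the whole algorithm. I will use, straight from the specification of the subroutines, that $\ev_V=\Psd_{(n/2-k_V)/2}(\Hm_V,\sv_2)$ satisfies $\Hm_V\ev_V^T=\sv_2^T$ and $\wt{\ev_V}=(n/2-k_V)/2$, and that $\ev_U=\Psd_\nu(\Hm_U,\sv_1,\ev_V)$ satisfies $\Hm_U\ev_U^T=\sv_1^T$ and $|\supp(\ev_U)\setminus\supp(\ev_V)|=\nu$, where $\nu=(n/2-k_U-\wt{\ev_V})/2$.

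For (i) I would simply unfold the block form of $\Hsec$ exactly as was done just before Algorithm~\ref{algo:uvsdd}: since $\Hm_U\ev_1^T=\Hm_U\ev_U^T=\sv_1^T$ and $\Hm_V(\ev_1+\ev_2)^T=\Hm_V\ev_V^T=\sv_2^T$, we get $\Hsec\ev^T=\sv^T$. For (ii) I would count supports. Put $a=|\supp(\ev_U)\cap\supp(\ev_V)|$ and $c=|\supp(\ev_V)\setminus\supp(\ev_U)|$, so that $\wt{\ev_V}=a+c$ while $\wt{\ev_U}=a+\nu$ by the constraint on $\ev_U$. Over the binary field the support of $\ev_U+\ev_V$ is the symmetric difference of the two supports, of size $\nu+c$, and therefore
\[
\wt{\ev}=\wt{\ev_U}+\wt{\ev_U+\ev_V}=(a+\nu)+(\nu+c)=(a+c)+2\nu=\wt{\ev_V}+2\nu=n/2-k_U,
\]
the last equality coming from the definition of $\nu$; note this holds whatever the actual value of $\wt{\ev_V}$. (This presupposes that $n/2-k_V$ and $n/2-k_U-\wt{\ev_V}$ are even, so the distortions fed to the subroutines are integers; this is a harmless parity condition on the parameters, the alternative being an $O(1)$ perturbation of the output weight.)

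For (iii) the key remark is that both Prange calls decode at a distortion equal to \emph{half the codimension} of the relevant code, which is exactly the regime in which Proposition~\ref{prop:g} yields polynomial average time. The first call, $\Psd_{(n/2-k_V)/2}(\Hm_V,\sv_2)$, is manifestly of this type for the random $[n/2,k_V]$-code $V$. For the second call I would invoke the Errors-and-Erasures reduction of Proposition~\ref{prop:syndPuncAlt} with erasure pattern $\ev_V$, of weight $\rho:=(n/2-k_V)/2$: provided the $\rho$ columns of $\Hm_U$ indexed by $\supp(\ev_V)$ are linearly independent, $\Psd_\nu(\Hm_U,\sv_1,\ev_V)$ reduces in polynomial time to $\Psd_\nu(\Hm_U'',\sv_1'')$ where $\Hm_U''$ is a parity-check matrix of the punctured code $\punc_{\supp(\ev_V)}(U)$, an $[n/2-\rho,\,k_U]$-code of codimension $n/2-\rho-k_U$; and $\nu=(n/2-k_U-\rho)/2$ is precisely half that codimension, so Proposition~\ref{prop:g} applies again. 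Since $\Hm_U$ is drawn uniformly and independently of everything that enters the first call, conditioning on $\supp(\ev_V)$ leaves $\Hm_U$ uniform and hence $\Hm_U''$ conditionally uniform, so Proposition~\ref{prop:g} is genuinely usable; summing the two polynomial expected costs finishes (iii).

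The single place where the hypothesis $2k_U-k_V\le n/2$ is needed --- and the step I expect to be the real obstacle --- is the linear independence of those $\rho$ columns of $\Hm_U$, equivalently the claim that $\punc_{\supp(\ev_V)}(U)$ genuinely has dimension $k_U$. Since $\Hm_U$ has only $n/2-k_U$ rows, $\rho$ columns can be independent only if $\rho=(n/2-k_V)/2\le n/2-k_U$, i.e. exactly $2k_U-k_V\le n/2$; conversely, $\Hm_U$ being random and $\supp(\ev_V)$ being fixed once the first call is over, these columns are dependent precisely when $U$ carries a nonzero codeword supported inside $\supp(\ev_V)$, an event of probability $O(2^{k_U+\rho-n/2})$, which is a constant when $2k_U-k_V\le n/2$ and becomes exponentially small as soon as $2k_U-k_V$ is below $n/2$ by a growing margin, as it is for the SURF parameters. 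So for a generic key the reduction of Proposition~\ref{prop:syndPuncAlt} applies and the whole argument goes through; a fully rigorous write-up would absorb the remaining at most constant-probability bad case into the averaging, for instance by re-running the Prange decoder on $V$ until $\supp(\ev_V)$ misses all codewords of $U$, at the cost of $O(1)$ extra repetitions in expectation. Apart from this point, the proof is just the support bookkeeping of (i)--(ii) together with two appeals to Proposition~\ref{prop:g}.
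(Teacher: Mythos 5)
Your proof is correct and follows essentially the same route as the paper: the same support bookkeeping shows $\wt{\ev}=\wt{\ev_V}+2\nu=n/2-k_U$, and both arguments appeal to Proposition~\ref{prop:g} (via Proposition~\ref{prop:syndPuncAlt}) for the running time. The only difference is that the paper dispatches the polynomial-time claim in one sentence, whereas you spell out why the erasure reduction applies — in particular that $2k_U-k_V\le n/2$ is exactly what makes $\nu\ge0$ and lets the $\rho$ erased columns of $\Hm_U$ be independent — which is a useful elaboration of a point the paper leaves implicit.
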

\begin{proof}
	First remark that both calls to $\Psd$ are made for a distortion level that is achieved in polynomial time. It only remains to prove that the output has the expected weight.
	We have $\wt{\ev_V}=(n/2-k_V)/2$. The word $\ev_U$ splits in two disjoint parts $\ev_U'$ whose support is $\supp(\ev_U)\cap\supp(\ev_V)$ and $\ev_U''$ whose support is $\supp(\ev_U)\setminus\supp(\ev_V)$. By construction, the second call to Prange corrects exactly $\nu\eqdef(n/2 - k_U - \wt{\ev_V})/2$ errors, this is also the weight of $\ev_U''$. Finally we can write $\ev_1=\ev_U=\ev_U'+\ev_U''$ and $\ev_2=\ev_U+\ev_V=(\ev_U'+\ev_V)+\ev_U''$ with $\supp(\ev'_U)\subset\supp(\ev_V)$ and $\supp(\ev''_U)\cap\supp(\ev_V)=\emptyset$. We derive that $\wt{\ev_1}=\wt{\ev_U'}+\wt{\ev_U''}$,  $\wt{\ev_2}=\wt{\ev_U'+\ev_V}+\wt{\ev_U''}$, and $\wt{\ev_U'+\ev_V}+\wt{\ev_U'}=\wt{\ev_V}$. And finally $\wt{\ev}=\wt{\ev_1}+\wt{\ev_2}=\wt{\ev_V}+2\nu=n/2-k_U$.\qed
\end{proof}
We can now choose the parameters $k_{U}$ and $k_{V}$ in order to minimize the distortion $n/2-k_{U}$ for a fixed dimension $k = k_{U} + k_{V}$ of the code. Let us define the relative error weight of $\ev \in \F_{2}^{n}$ as $\frac{|\ev|}{n}$. Figure \ref{fig:distOptSgn} compares the relative error weight we obtain with the algorithm $UV$-$\mathbf{sddV}1$ to $\frac{1}{n}(n-k)/2$ which corresponds to what is achieved by the generic 
decoder and to the optimal relative Gilbert-Varshamov relative weight $h^{-1}(1-R)$ where $R$ denotes the rate of the code defined as $k/n$. As we see there is a non-negligible gain. Nevertheless, $UV$-$\mathbf{sddV}1$ approximates to a fixed distance in each step of its execution which leads to correlations between some bits that can be used to recover the structure of the secret key. In order to fix this problem and as it is asked in our proof of security, we will present a modified version of $UV$-$\mathbf{sddV}1$ in \S\ref{sec:statDist} which uses a rejection sampling method to simulate uniform outputs. 
This comes at the price of slightly increasing the weight of the error output by the decoder.
\begin{figure}[htbp]
  \begin{center} 
    \caption{Comparison of the Optimal Signature Distance, the Gilbert-Varshamov Bound and Generic Distance}
    \includegraphics[scale=0.4]{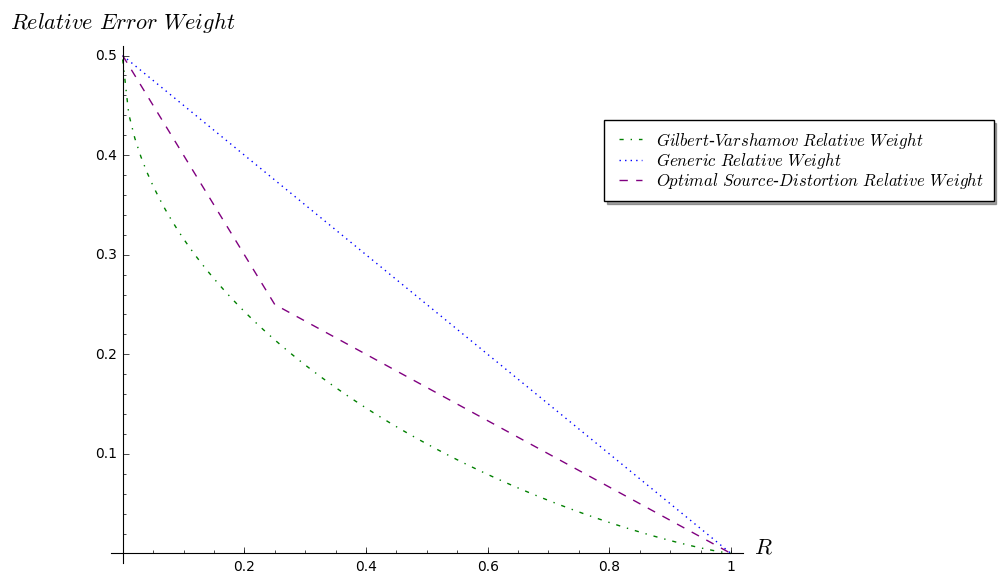}
    \label{fig:distOptSgn}
  \end{center}
\end{figure}

\subsection{$\UV$ codes and cryptography}

It is not the first time that $\UV$ codes are suggested for a cryptographic use.
 This was already considered for constructing a McEliece cryptosystem in \cite[p.225-228]{KKS05} or more recently in \cite{PMIB17} (it is namely a particular case of a generalized 
concatenated code). However both papers did not consider the improvement in the error correction 
performance that comes with the $\UV$-construction if a decoder that uses soft information is used.
For instance \cite{KKS05} studies only the case of hard-decision decoder for Goppa codes and concludes that the obtained code has a worse error correction capability than the original Goppa code and therefore worse public-key sizes. This is actually a situation which really depends on 
the code family (call it $\Fc$) and its decoder. If $\Fc$ is the family of (generalized) Reed-Solomon codes with the 
Koetter-Vardy decoder which is able to cope with soft information on the symbols, then 
the results go the other way round (at least in a certain range of rates) \cite{MT16a}.
In this case, a
$\UV$ code based on generalized Reed-Solomon codes, has for certain rates better error-correction capacity when 
decoded with the Koetter-Vardy decoder 
than a generalized Reed-Solomon of the same rate decoded with the same decoder.
This allows in principle to decrease the public key-size. Our signature scheme where $\Fc$ is the family of linear codes and
the associated (source-distortion) decoder is the Prange decoder is another example of this kind.

In order to understand what soft-decoding has to do in this setting, it is helpful to recall a few points from coding theory. 
What we are going to review here is how a $\UV$ code is decoded in the polar code construction \cite{A09} 
or in the case of Reed-Muller codes \cite{DS06}. Consider a binary linear code of length $n$ defined by a parity-check matrix $\Hm$. In hard decoding, we aim at recovering the error $\ev$ of minimum weight satisfying a given syndrome $\Hm \ev^T = \sv^T$. In soft decoding, we know all probabilities $\prob(e_i=1)$ and want to 
find the error which maximizes $\prob(\ev|\Hm \ev^T = \sv^T)$. Both decoding perform the same task in the case of a binary symmetric channel of crossover probability $p \leq \frac{1}{2}$ (this means that $\prob(e_i=1)=p$ for all $i$).

It turns out that soft-decoding is the natural scenario for decoding a $\UV$ code when we decode 
 the $V$ component first and then the $U$ component, even if one wants to perform
hard decoding of the whole $\UV$ code. This really amounts to find the error of minimum weight $\ev=(\ev_U,\ev_U+\ev_V)$ such that 
\begin{equation}
\label{eq:UVequation}
\begin{pmatrix}
\Hm_U & 0\\
\Hm_V & \Hm_V
\end{pmatrix}
\begin{pmatrix}
\ev_U^T
\\
\ev_U^T + \ev_V^T
\end{pmatrix}
= 
\begin{pmatrix}
\sv_1^T
\\
\sv_2^T
\end{pmatrix}.
\end{equation}
We will assume that the error model is a binary symmetric channel of crossover probability $p$, meaning that
\begin{eqnarray}
\prob(\ev_U(i)=1)& = & p \label{eq:prevU}\\
\prob(\ev_U(i)+\ev_V(i)=1) & = & p \label{eq:prevV}
\end{eqnarray}
where $\ev_U(i)$ and $\ev_V(i)$ denote the $i$-th coordinate of $\ev_U$ and $\ev_V$ respectively.
Recall that hard decoding $\ev$ really amounts to soft-information decoding $\ev$ with respect to this error model.
Decoding can now be done through the following steps.

\noindent
{\bf Step 1.}
We observe that \eqref{eq:UVequation} implies that $\Hm_V \ev_V^T = \sv_2^T$. Recovering the $V$ component amounts here 
to hard decode $\ev_V$. The rationale behind this is that the channel model of $\ev_V(i)$ is a binary symmetric channel of crossover probability $2p(1-p)$. This can be verified by observing that
$\ev_V = \ev_U+(\ev_U + \ev_V)$ with $\ev_U$ and $\ev_U+\ev_V$ being independent random variables 
whose components are i.i.d. with probability distributions given by \eqref{eq:prevU} and \eqref{eq:prevV}.
From this we deduce that the components $\ev_V(i)$ are i.i.d. with 
$\prob(\ev_V(i)=1)=2p(1-p)$.
Let us now assume that we have decoded $\ev_V$ correctly.

\noindent
{\bf Step 2.} Recovering $\ev_U$ can in principle be done in two different ways.
The first one uses \eqref{eq:UVequation} directly from which we deduce
\begin{equation}
\label{eq:u1}
\Hm_U \ev_U^T  =  \sv_1^T.
\end{equation}
 Now that we know $\ev_V$ we could also notice that
\begin{equation}
\label{eq:u2}
\Hm_U(\ev_U+\ev_V)^T = \Hm_U \ev_U^T + \Hm_U\ev_V^T= \sv_1^T + \Hm_U \ev_V^T
\end{equation}
and we know here the right-hand term. 
There are therefore two ways to recover $\ev_U$
\begin{itemize}
\item[Method 1] We perform  hard decoding of the 
syndrome $\sv_1^T$ and find the $\ev_U$ of minimum weight satisfying \eqref{eq:u1}.
\item[Method 2] We perform
hard decoding of the syndrome $\sv_1^T + \Hm_U \ev_V^T$ by finding the vector $\xv$ of minimum weight
satisfying $\Hm_U \xv^T = \sv_1^T + \Hm_U \ev_V^T$ and let $\ev_U = \xv + \ev_V$.
\end{itemize}
In \cite{KKS05} it is suggested to perform both decodings and to choose for computing $\ev_U$ the
decoding which gives the smallest error weight (the decoding is not explained in terms of syndromes
there, but expressing their decoding in terms of syndrome decoding amounts to the decision rule that we have just given).
It is clear that some amount of information is lost during this process. This can be seen by noticing that once we know 
$\ev_V$, we have a much finer knowledge on $\ev_U$. It is readily seen that we can now use $\prob(\ev_U(i)=1|\ev_V(i))$ instead of 
$\prob(\ev_U(i)=1)$. This calculation follows from the fact that $\ev_U(i)$ and $\ev_U(i)+\ev_V(i)$ are independent and we know 
the distribution of these two random variables. A straightforward calculation leads to
\begin{eqnarray}
\prob(\ev_U(i)=1|\ev_V(i)=0) & = & \frac{p^2}{(1-p)^2+p^2} 
\label{eq:probability1} \\
\prob(\ev_U(i)=1|\ev_V(i)=1) & = & \frac{1}{2}.
\label{eq:probability2}
\end{eqnarray}
In other words, when $\ev_V(i)=0$, we can view $\ev_U(i)$ as an error originating from a binary symmetric channel
of crossover probability  $\frac{p^2}{(1-p)^2+p^2}$ 
(which is much smaller than $p$) and when $\ev_V(i) = 1$ we may consider that the position has just been erased.
 A decoder for $U$ which uses this soft information  has potentially 
much better performances than the previous hard decoder. In fact, in this case we just need a decoder which decodes 
errors and erasures. When the alphabet is non binary, the channel model is  slightly more complicated:
this is why the Koetter-Vardy soft decoder is used in \cite{MT16a} and not just an error and erasure 
decoder of generalized Reed-Solomon codes.
By using the noise model corresponding to the probability computations \eqref{eq:probability1} and \eqref{eq:probability2}  we obtain a much less noisy model than the original 
binary symmetric channel. This can be checked by a capacity calculation which is in a sense
a measure of the noise of transmission channel (the capacity is a decreasing function of the noise
level in some sense). 
The capacity of the binary symmetric channel of crossover probability $p$ is $1-h(p)$ whereas
it is $1 - 2h(p)+h(2p(1-p))$ for the noise model corresponding to the probability computations \eqref{eq:probability1} and \eqref{eq:probability2}. 
We have represented these two capacities in Figure \ref{fig:capacities} 
and it can be verified there that the new noise model has a much larger capacity than the original channel.

\begin{figure}
\caption{Capacity of the original binary symmetric channel vs. capacity of the channel
model corresponding to the probability computations \eqref{eq:probability1} and \eqref{eq:probability2}. \label{fig:capacities}} 
\centering
\includegraphics[height=6cm]{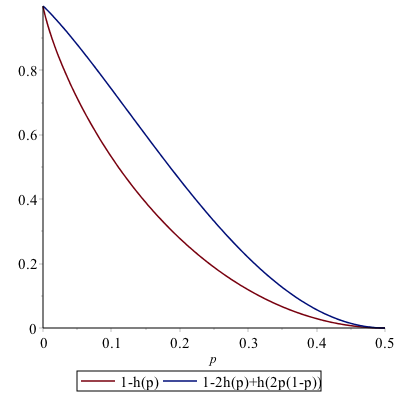}
\end{figure}

This discussion explains why in the binary setting we would really like to use a decoder for $U$
which is able to correct errors on the positions where $\ev_V(i)=0$ and erasures on the positions
where $\ev_V(i)=1$. In our context where we perform source-distortion decoding the situation is actually similar.
 Our strategy works here because the Prange decoder has a natural and powerful extension to the error/erasure scenario.
It is natural to expect that a family of codes and associated decoders which are powerful in the erasure/error scenario behave better 
when used in a $\UV$ construction and decoded as above, than the original family of codes. Our strategy for obtaining a signature scheme really builds 
upon this approach : a $\UV$ code decoded as above with the Prange decoder has better distortion than the Prange 
decoder used directly on a linear code with the same length and dimension as the $\UV$-code. The trapdoor here for obtaining the better
distortion is only the $\UV$ structure, but we can afford to have random linear codes for $U$ and $V$.

 \section{Security Proof}
\label{sec:securityProof} 

We give in this section a security proof of the signature scheme $\cS_{\text{code}}$.
This proof is in the spirit of the security proof of the FDH
signatures in the random oracle model (see \cite{BR93}). However in order to have a tight security reduction we were inspired by the proof of
\cite{C02}. Our main result is to reduce the security to two major
problems in code-based cryptography.

\subsection{Basic tools}

\subsubsection{Basic definitions.}

A function $f(n)$ is said to be negligible if for all polynomials $p(n)$, $|f(n)| < p(n)^{-1}$ for all sufficiently large $n$. The statistical distance between two discrete probability distributions over a same space $\mathcal{E}$ is defined as:
$$
    \rho(\cD^0,\cD^1) \eqdef \frac{1}{2} \sum_{x \in \mathcal{E}} |\cD^0(x)-\cD^1(x) |.
$$

We will need the following well known property for the statistical distance which can be easily proved by induction.
\begin{proposition}
  \label{prop:product}
  Let $(\cD^0_1,\dots,\cD^0_n)$ and $(\cD^1_1,\dots,\cD^1_n)$ be two $n$-tuples of discrete probability distributions where 
  $\cD^0_i$ and $\cD^1_i$ are distributed over a same space $\cE_i$. 
We have for all positive integers $n$:
  \begin{displaymath}
    \rho\left(\cD^0_1 \otimes \dots \otimes \cD^0_n,\cD^1_1 \otimes \dots \otimes \cD^1_n \right) \leq \sum_{i=1}^n \rho(\cD^0_i,\cD^1_i).
  \end{displaymath}
\end{proposition}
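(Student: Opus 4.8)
The plan is to prove Proposition~\ref{prop:product} by induction on $n$, the case $n=1$ being trivial. The argument rests on two elementary facts about the statistical distance $\rho$. The first is the triangle inequality $\rho(\cA,\cC)\le\rho(\cA,\cB)+\rho(\cB,\cC)$, which is immediate from the triangle inequality for $|\cdot|$ applied to each term of the defining sum. The second is that tensoring both arguments with a common factor leaves $\rho$ unchanged: for discrete distributions $\cA,\cB$ over a set $\cE$ and $\cF$ over a set $\cE'$,
$$
\rho(\cA\otimes\cF,\ \cB\otimes\cF) = \frac12\sum_{x\in\cE}\sum_{y\in\cE'}\cF(y)\,\bigl|\cA(x)-\cB(x)\bigr| = \Bigl(\sum_{y\in\cE'}\cF(y)\Bigr)\rho(\cA,\cB) = \rho(\cA,\cB),
$$
and the identical one-line computation gives $\rho(\cF\otimes\cA,\ \cF\otimes\cB)=\rho(\cA,\cB)$ as well, so the common factor may sit on either side.

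For the inductive step I would introduce the hybrid distribution $\cH \eqdef \cD^0_1\otimes\cdots\otimes\cD^0_{n-1}\otimes\cD^1_n$ and apply the triangle inequality:
$$
\rho\bigl(\cD^0_1\otimes\cdots\otimes\cD^0_n,\ \cD^1_1\otimes\cdots\otimes\cD^1_n\bigr) \le \rho\bigl(\cD^0_1\otimes\cdots\otimes\cD^0_n,\ \cH\bigr) + \rho\bigl(\cH,\ \cD^1_1\otimes\cdots\otimes\cD^1_n\bigr).
$$
The first term on the right has the common factor $\cD^0_1\otimes\cdots\otimes\cD^0_{n-1}$, hence equals $\rho(\cD^0_n,\cD^1_n)$ by the second fact above. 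The second term has the common factor $\cD^1_n$, hence equals $\rho(\cD^0_1\otimes\cdots\otimes\cD^0_{n-1},\ \cD^1_1\otimes\cdots\otimes\cD^1_{n-1})$, which by the induction hypothesis is at most $\sum_{i=1}^{n-1}\rho(\cD^0_i,\cD^1_i)$. Adding the two bounds yields $\sum_{i=1}^{n}\rho(\cD^0_i,\cD^1_i)$, completing the induction. (Equivalently one can bypass the induction and argue in one shot by telescoping the difference $\cD^0_1\otimes\cdots\otimes\cD^0_n-\cD^1_1\otimes\cdots\otimes\cD^1_n$ through all $n$ intermediate hybrids and applying the triangle inequality for $\|\cdot\|_1$, each telescoped term collapsing to $2\rho(\cD^0_i,\cD^1_i)$ by the common-factor identity.)

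There is no genuine obstacle here: this is the classical hybrid argument. The only points that need a little care are keeping track of the normalizing factor $\tfrac12$ in the definition of $\rho$ and invoking the common-factor identity with the shared tensor factor on the correct side, which is why I would state that identity explicitly for both orderings before starting the induction.
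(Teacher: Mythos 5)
Your proof is correct and takes exactly the route the paper indicates: the paper simply states that the proposition "can be easily proved by induction" without supplying details, and your induction via a single hybrid distribution together with the common-factor cancellation identity is precisely the standard way to carry that out. The careful statement of the cancellation identity for the shared factor on either side of the tensor product is the right thing to spell out, and the argument is complete.
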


  A {\em distinguisher} between two distributions $\mathcal{D}^{0}$ and
$\mathcal{D}^{1}$ over the same space $\mathcal{E}$ is a randomized
algorithm which takes as input an element of $\mathcal{E}$ that
follows the distribution $\mathcal{D}^{0}$ or $\mathcal{D}^{1}$ and
outputs $b \in \{0,1\}$. It is  characterized by its advantage:
\begin{displaymath}
Adv^{\mathcal{D}^{0},\mathcal{D}^{1}}(\cA) \eqdef
\mathbb{P}_{\xi \sim \mathcal{D}^{0}}\left( \cA(\xi) \mbox{
	outputs } 1 \right) - \mathbb{P}_{\xi \sim
	\mathcal{D}^{1}}\left(\cA(\xi) \mbox{ outputs } 1
\right).
\end{displaymath}

We call
this quantity the {\em advantage} of $\cA$ against
$\mathcal{D}^{0}$ and $\mathcal{D}^{1}$.

\begin{definition}
  [Computational Distance and Indistinguishability]
  The computational distance between two distributions $\mathcal{D}^{0}$ and $\mathcal{D}^{1}$ in time $t$ is: 
  \begin{displaymath}
    \rho_{c}\left( \mathcal{D}^{0},\mathcal{D}^{1}\right)(t) \eqdef
    \mathop{\max}\limits_{ |\cA| \leq t} \left\{
      Adv^{\mathcal{D}^{0},\mathcal{D}^{1}}(\cA) \right\}
  \end{displaymath}
  where $|\cA|$ denotes the running time of $\cA$ on
  its inputs.
  
  The ensembles $\mathcal{D}^{0}=(\mathcal{D}^{0}_{n})$ and
  $\mathcal{D}^{1} = (\mathcal{D}_{n}^{1})$ are computationally
  indistinguishable in time $(t_{n})$ if their computational distance
  in time $(t_{n})$ is negligible in $n$.
\end{definition}
In other words, the computational distance is the best advantage that any adversary could get in bounded time. 

\subsubsection{Digital signature and games.}
Let us recall the concept of signature schemes, the security model
that will be considered in the following and to recall in this context
the paradigm of games in which we give a security proof of
our scheme.

\begin{definition}
  [Signature Scheme]A signature scheme $\cS$ is a triple of algorithms
  $\Gen$, $\Sgn$, and $\Ver$ which are
  defined as:
  \begin{itemize}
  \item The key generation algorithm $\Gen$ is a probabilistic
    algorithm which given $1^{\lambda}$, where $\lambda$ is the
    security parameter, outputs a pair of matching public and private
    keys $(pk,sk)$;
  \item The signing algorithm is probabilistic and takes as input a
    message $\mv \in \{0,1\}^{*}$ to be signed and returns a signature
    $\sigma = \Sgnsk(\mv)$;
  \item The verification algorithm takes as input a message $\mv$ and
    a signature $\sigma$. It returns $\Vrfypk(\mv,\sigma)$
    which is $1$ if the signature is accepted and $0$ otherwise. It
    is required that \ $\Vrfypk(\mv,\sigma)=1$ if  $\sigma = \Sgnsk(\mv)$.
  \end{itemize}
\end{definition}
For this kind of scheme, one of the strongest security notion is {\em
  existential unforgeability under an adaptive chosen message attack}
(EUF-CMA). In this model the adversary has access to all signatures
of its choice and its goal is to produce a valid forgery. A valid
forgery is a message/signature pair $(\mv,\sigma)$ such that
$\Vrfypk(\mv,\sigma)=1$ whereas the signature of $\mv$ has never been
requested by the forger. More precisely, the following definition
gives the EUF-CMA security of a signature scheme:

\begin{definition}
  [EUF-CMA Security] Let $\cS$ be a signature scheme.\\  A forger $\cA$
  is a $(t,\qhash,\qsig,\varepsilon)$-adversary in \textup{EUF-CMA} against
  $\cS$ if after at most $\qhash$ queries to the hash oracle, $\qsig$
  signatures queries and $t$ working time, it outputs a valid forgery
  with probability at least $\varepsilon$.
  We define the \textup{EUF-CMA} success probability against $\cS$ as:
  \begin{displaymath}
    Succ_{\cS }^{\textup{EUF-CMA}}(t,\qhash,\qsig) \eqdef
    \max \left( \varepsilon \mbox{} | \mbox{it exists a }
      (t,\qhash,\qsig,\varepsilon) \mbox{-adversary} \right).
  \end{displaymath}
  The signature scheme $\cS$ is said to be
  $(t,\qhash,\qsig)$-secure in \textup{EUF-CMA} if the above success
  probability is a negligible function of the security parameter
  $\lambda$.
  
\end{definition}

\subsubsection{The game associated to our code-based signature scheme.}
The modern approach to prove the security of cryptographic schemes is to relate the security of its
primitives to well-known problems that are believed to be hard by
proving that breaking the cryptographic primitives provides a mean
to break one of these hard problems.
In our case, the security of the signature scheme is defined as a game with an adversary that has access to hash and sign oracles.
It will be helpful here to be more formal and to define more precisely the games we will consider. They are  games between two players, an
{\em adversary} and a {\em challenger}. In a game $G$, the challenger executes three kind of procedures:
\begin{itemize}
\item an initialization procedure {\tt Initialize} which is called
  once at the beginning of the game.
\item oracle procedures which can be requested at the will of the
  adversary. In our case, there will be two, {\tt Hash} and {\tt
    Sign}. The adversary $\cA$ which is an algorithm may call {\tt
    Hash} at most $\qhash$ times and {\tt Sign} at most $\qsig$ times.
\item a final procedure {\tt Finalize} which is executed once $\cA$
  has terminated. The output of $\cA$ is given as input to this
  procedure.
\end{itemize}
The output of the game $G$, which is denoted $G(\cA)$, is the output
of the finalization procedure (which is a bit $b \in \{0,1\}$). The
game $G$ with $\cA$ is said to be successful if $G(\cA)=1$. The
standard approach for obtaining a security proof in a certain model is
to construct a sequence of games such that the success of the first
game with an adversary $\cA$ is exactly the success against the model
of security, the difference of the probability of success between two
consecutive games is negligible until the final game where the
probability of success is the probability for $\cA$ to break one of
the problems which is supposed to be hard. In this way, no adversary
can break the claim of security with non-negligible success unless it
breaks one of the problems that are supposed to be hard.

\begin{definition}[challenger procedures in the EUF-CMA Game]
  The challenger procedures for the \textup{EUF-CMA} Game corresponding to
  $\cS_{\text{code}}$ are defined as:
  \begin{center} \tt
    \begin{tabular}{|l|l|l|l|}
      \hline
      \underline{proc Initialize$(\lambda)$} & \underline{proc Hash$(\mv,\rv)$} & \underline{proc Sign$(\mv)$} & \underline{proc Finalize$(\mv,\ev,\rv)$} \\ 
      $(pk,sk) \leftarrow \Gen(1^{\lambda})$ & {return} $\hash(\mv,\rv)$ & $\rv \Unif \{0,1\}^{\lambda_{0}}$ & $\sv \leftarrow {\tt Hash}(\mv,\rv)$ \\
      $\Hpub\leftarrow pk$ & & $\sv \leftarrow$ \texttt{Hash}$(\mv,\rv)$  & return \\
      $(\Hsec,\Pm,\Sm) \leftarrow sk$ & & $\ev \leftarrow D_{\Hsec,w}(\Sm^{-1}\sv^{T})$ & $\Hpub \ev^{T} = \sv^{T} \wedge \wt{\ev} = w$ \\
      return $\Hpub$ & & return $(\ev\Pm,\rv)$ &   \\
      \hline  
    \end{tabular}
  \end{center}
\end{definition}
\subsection{Code-Based Problems}
\label{subsec:cbProb} 
We introduce in this subsection the code-based problems that will be
used in the security proof. The first is Decoding One Out of
Many (DOOM) which was first considered in \cite{JJ02} and later
analyzed in \cite{S11}. We will come back to the best known algorithms
to solve this problem as a function of the distance $w$ in
\S\ref{sec:attsourcedistortion}.

\begin{problem}[\textup{DOOM} -- Decoding One Out of Many]~\\
  \label{prob:DOOM}
  \begin{tabular}{ll}
    Instance: & $\quad\Hm \in \mathbb{F}_{2}^{(n-k) \times n}$,
                $\sv_{1},\cdots,\sv_{q} \in \mathbb{F}_{2}^{n-k}$, $w$
                integer \\
    Output: & $\quad(\ev,i) \in \mathbb{F}_{2}^{n} \times \llbracket 1,q \rrbracket$ such that $|\ev| = w$ and $\Hm\ev^{T} = \sv_{i}^{T}$.
  \end{tabular}
\end{problem}

\begin{definition}[One-Wayness of DOOM]
  We define the success of an algorithm $\cA$ against \DOOM\ with the parameters $n,k,q,w$ as:
  \begin{align*} 
    Succ_{\DOOM}^{n,k,q,w}\left( \cA \right) = \prob \big( \cA&\left( \Hm,\sv_{1},\cdots,\sv_{q} \right) \mbox{solution  of } \DOOM \big)
  \end{align*} 
  where $\Hm$ is chosen uniformly at random in $\F_2^{(n-k)\times n}$, the $\sv_i$'s are chosen uniformly at random in $\F_2^{n-k}$ and
the probability is taken over these choices of $\Hm$, the $\sv_i$'s and the internal coins of $\cA$.

  The computational success in time $t$ of breaking \DOOM\ with the parameters $n,k,q,w$ is then defined as:
  \begin{displaymath}
    Succ_{\DOOM}^{n,k,q,w}(t) = \mathop{\max}\limits_{|\cA|\leq t} \left\{
      Succ_{\DOOM}^{n,k,q,w}\left( \cA \right) \right\}.
  \end{displaymath}
\end{definition}

Another problem will appear in the security proof: distinguish
random codes from a code drawn uniformly at random in the family used for public keys
in the signature scheme. 

\begin{remark} 
We will show in \S\ref{ss:NPcomplete} (see Theorem \ref{theo:UVNP}) that the associated decision problem is NP-complete.
\end{remark}

\noindent We will denote in the rest of the article by $\Hpub$ the random matrix chosen as the public parity-check matrix of our scheme.
Let us recall that it is obtained as
\begin{equation}\label{eq:def_Hpub}
\Hpub = \Sm \Hsec \Pm
 \mbox{ with } \Hsec = \begin{pmatrix}
 \Hm_{U} & \mathbf{0} \\
 \Hm_{V} & \Hm_{V} 
 \end{pmatrix},
\end{equation} 
where $\Sm$ is chosen uniformly at random among the invertible binary matrices of size $(n-k)\times (n-k)$, $\Hm_U$ is chosen uniformly at random among the binary matrices
of size $(n/2-k_U)\times n/2$, $\Hm_V$ is chosen uniformly at random among the binary matrices
of size $(n/2-k_V)\times n/2$ and $\Pm$ is chosen uniformly at random among the permutation matrices of size $n \times n$.
The distribution of the random variable $\Hpub$ is denoted by $\Dpub$.
On the other hand
$\Drand$ will denote the uniform distribution over the parity-check matrices of all $\lbrack n,k\rbrack$-codes with $k = k_{U} + k_{V}$.

We will discuss about the difficulty of the task to distinguish $\Dpub$
and $\Drand$ in \S\ref{sec:keyAttack}. 
It should be noted that the syndromes associated to matrices $\Hpub$ are indistinguishable in a very strong sense from random syndromes 
as the following proposition shows

\begin{restatable}{proposition}{propoDist}
	\label{prop:statDist} 
	Let $\Dsw{\Hm}$ be the distribution of the syndromes $\Hm \ev^T$ when $\ev$ is drawn uniformly at random among the binary vectors of weight $w$ and $\Uc$ be the uniform distribution over the syndrome space $\F_2^{n-k}$.
We have
	$$
	\esp_{\Hpub} \left( \rho(\Dsw{\Hpub}, \Uc) \right) \leq \frac{1}{2} \sqrt{\varepsilon}
	$$
	with
	$$
\varepsilon = \frac{2^{n-k}}{\binom{n}{w}} + \frac{2^{n/2-k_U} \binom{n/2}{w/2}}{\binom{n}{w}}
+ \sum_{\substack { j \in \{0,\dots,w\} \\ j \equiv w \pmod{2} }} \frac{ 2^{2j+n/2-k_V}\binom{n/2}{(w-j)/2}^2 \binom{n/2-(w-j)/2}{j}}{ \binom{n/2}{j} \binom{n}{w}^2}^2.
		$$
	
\end{restatable}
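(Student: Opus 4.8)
The plan is a second-moment (``leftover hash'') argument. First I would dispose of the factors $\Sm$ and $\Pm$. Since $\Sm$ is invertible, $\xv\mapsto\xv\Sm^T$ is a bijection of $\F_2^{n-k}$ that fixes $\Uc$ and preserves statistical distance, and $\Hpub\ev^T=\Sm\,(\Hsec\Pm)\,\ev^T$; moreover $\Pm\ev^T$ is uniform over $S_w$ whenever $\ev$ is, since a permutation preserves Hamming weight. Hence $\esp_{\Hpub}\bigl(\rho(\Dsw{\Hpub},\Uc)\bigr)=\esp_{\Hm_U,\Hm_V}\bigl(\rho(\Dsw{\Hsec},\Uc)\bigr)$, so I only need to control the genuinely random object $\Hsec$.

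Next, for any distribution $\Dc$ on $\F_2^{n-k}$ the Cauchy--Schwarz inequality gives $\rho(\Dc,\Uc)\le\frac12\sqrt{2^{n-k}\,\mathrm{Coll}(\Dc)-1}$ with $\mathrm{Coll}(\Dc)\eqdef\sum_{x}\Dc(x)^2$. Applying this with $\Dc=\Dsw{\Hsec}$ and pushing $\esp_{\Hm_U,\Hm_V}$ inside the concave square root by Jensen, it suffices to prove $2^{n-k}\,\esp_{\Hm_U,\Hm_V}\bigl(\mathrm{Coll}(\Dsw{\Hsec})\bigr)-1\le\varepsilon$, and here $\esp_{\Hm_U,\Hm_V}\bigl(\mathrm{Coll}(\Dsw{\Hsec})\bigr)=\prob\bigl(\Hsec\ev^T=\Hsec\ev'^T\bigr)$ for $\ev,\ev'$ drawn independently and uniformly in $S_w$ and $\Hm_U,\Hm_V$ as in \eqref{eq:def_Hpub}.

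Now I would exploit the block structure of $\Hsec$. Writing $\ev=(\ev_1,\ev_2)$, $\ev'=(\ev'_1,\ev'_2)$ and $\ev_V\eqdef\ev_1+\ev_2$ (likewise $\ev'_V$), the collision event equals $\bigl\{\Hm_U(\ev_1+\ev'_1)^T=\mathbf{0}\bigr\}\cap\bigl\{\Hm_V(\ev_V+\ev'_V)^T=\mathbf{0}\bigr\}$, and since $\Hm_U,\Hm_V$ are independent with i.i.d.\ uniform rows its conditional probability given $(\ev,\ev')$ factors as $g_U(\ev_1+\ev'_1)\,g_V(\ev_V+\ev'_V)$, where $g_U(\zv)=1$ if $\zv=\mathbf{0}$ and $2^{-(n/2-k_U)}$ otherwise, and $g_V$ is the analogue with $k_V$. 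I would then treat the two syndrome blocks one at a time: first bound the distance from $\Uc$ of the $V$-block $\sv_2=\Hm_V\ev_V^T$ on $\F_2^{n/2-k_V}$, which the same Cauchy--Schwarz/Jensen step reduces to $\prob(\ev_V=\ev'_V)$; then, using convexity of statistical distance to replace conditioning on $\sv_2$ by conditioning on $\ev_V$ itself, bound the distance of the $U$-block $\sv_1=\Hm_U\ev_1^T$, which reduces to the conditional collision $\prob(\ev_1=\ev'_1\mid\ev_V)$. Everything is then a counting exercise. One has $\prob(\ev=\ev')=1/\binom{n}{w}$, the first term of $\varepsilon$. For a fixed $\vv\in\F_2^{n/2}$ of weight $j$ (so $j\equiv w\pmod 2$), the number of $(\ev_1,\ev_2)\in S_w$ with $\ev_1+\ev_2=\vv$ is $N_j\eqdef 2^{j}\binom{n/2-j}{(w-j)/2}$ (the overlap of $\ev_1$ with $\Sp(\vv)$ is free, and off $\Sp(\vv)$ one picks a weight-$(w-j)/2$ pattern); summing $\binom{n/2}{j}\bigl(N_j/\binom{n}{w}\bigr)^2$ over the admissible $j$, and rewriting with the multinomial identity $\binom{n/2}{j}\binom{n/2-j}{(w-j)/2}=\binom{n/2}{(w-j)/2}\binom{n/2-(w-j)/2}{j}$, produces $\prob(\ev_V=\ev'_V)$ in the summed form of the third term of $\varepsilon$. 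Conditioned on $\ev_V=\vv$ of weight $j$, the vector $\ev_1$ is uniform over those $N_j$ choices, so $\prob(\ev_1=\ev'_1\mid\ev_V)=1/N_{|\ev_V|}$; the dominant configurations are the ``balanced'' ones $\ev_V=\mathbf{0}$ (i.e.\ $\ev_1=\ev_2$, with $\ev_1$ of weight $w/2$ on $n/2$ positions), which is where the factor $\binom{n/2}{w/2}$ of the second term originates, multiplied by the $2^{n/2-k_U}$ coming from the $U$-block.

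The main obstacle is the last, combinatorial, step, in two respects. First, one must recognize the profile $\sum_j 2^{2j}\binom{n/2}{j}\binom{n/2-j}{(w-j)/2}^2$ as (an upper bound for) the closed-form third term. Second, and more delicate, is squeezing the $U$-block contribution down to a single clean term involving $\binom{n/2}{w/2}$ rather than a loose sum over all possible weights of $\ev_V$: this needs the observation that when $\ev_V$ has large support the ``free bits'' it contributes already render the $U$-syndrome (close to) uniform, so that only the low-weight, highly structured $\ev_V$'s matter. A further subtlety is that the target is \emph{literally} $\tfrac12\sqrt\varepsilon$ with $\varepsilon$ a sum of three terms, so the two-block argument must be organized so that the $U$- and $V$-contributions end up as additive terms under a single square root, instead of as two separate square roots (which would cost an extra constant factor).
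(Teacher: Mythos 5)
Your opening is exactly the paper's: the Cauchy--Schwarz/Jensen reduction of $\esp_{\Hpub}\rho(\Dsw{\Hpub},\Uc)$ to the excess collision probability $2^{n-k}\,\prob\bigl(\Hsec\ev^T=\Hsec\ev'^T\bigr)-1$ is Lemma~\ref{lem:leftoverHash}, and the factorization $g_U(\ev_1+\ev'_1)\,g_V(\ev_V+\ev'_V)$ of the conditional collision probability is precisely the paper's split into the four events according to whether $\ev_1+\ev'_1=\mathbf0$ and whether $\ev_V+\ev'_V=\mathbf0$. The counting $N_j=2^j\binom{n/2-j}{(w-j)/2}$ and the multinomial identity relating your expression for $\prob(\ev_V=\ev'_V)$ to the paper's closed form for the third term of $\varepsilon$ are also right.

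The gap is the ``two-block'' detour. Once the target is a single collision probability you should not re-introduce a hybrid argument that bounds the $V$-syndrome distance and then, conditionally, the $U$-syndrome distance: that route genuinely yields $\tfrac12\sqrt{\varepsilon_U}+\tfrac12\sqrt{\varepsilon_V}$ and loses a $\sqrt2$, as you suspect, and the ``reorganization'' you hope for is just the direct computation by another name. The paper simply evaluates $\esp_{\ev,\ev'}\bigl[g_U(\ev_1+\ev'_1)\,g_V(\ev_V+\ev'_V)\bigr]$ by conditioning on which of the two arguments vanish: $\prob(\ev=\ev')=1/\binom nw$ when both do, the trivial bound $1$ when neither does, and $\prob(\ev_1=\ev'_1)$ and $\prob(\ev_V=\ev'_V)$ for the two mixed cases, each carrying the factor $2^{n/2-k_U}$ or $2^{n/2-k_V}$ after multiplying through by $2^{n-k}$. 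This makes all three of the ``main obstacles'' you list disappear; in particular, there is no conditioning on $\ev_V$ anywhere. It also shows your heuristic for the $\binom{n/2}{w/2}$ factor is off: it arises from bounding $\prob(\ev_1=\ev'_1)=\sum_{w_1}\binom{n/2}{w_1}\binom{n/2}{w-w_1}^2/\binom nw^2$ unconditionally, not from the configurations with $\ev_V=\mathbf0$, for which $\prob(\ev_1=\ev'_1\mid\ev_V=\mathbf0)=1/\binom{n/2}{w/2}$ is the reciprocal of what the bound actually produces.
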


\begin{remark}
	In the paradigm of code-based signatures we have $w$ greater than
	the Gilbert-Varshamov bound, which gives $2^{n-k} \ll \binom{n}{w}$ and for the set of parameters we present in \S\ref{sec:parameters}, $\varepsilon \lll \frac{1}{2^{\lambda}}$ with $\lambda$ the security parameter.
\end{remark}

\subsection{EUF-CMA Security Proof} 
\label{sec:securityProof3}
This subsection is devoted to our main theorem and its proof. Let us first introduce some notation that will be used. 
We will denote by $\mathcal{D}_{w}$ the distribution $
\left\{ D_{\Hsec,w}(\sv) : \sv \Unif \mathbb{F}_{2}^{n-k} \right\}
$ where $D_{\Hsec,w}$ is the source-distortion decoder used in the signature scheme. Recall that $\cU_{w}$ is the uniform distribution over $S_{w}$ (which is the
set of words of weight $w$ in $\mathbb{F}_{2}^{n}$), $\Dpub$ is the distribution of public keys, $\Drand$ is the uniform distribution over parity-check matrices of all $\lbrack n,k \rbrack$-codes and $\cS_{\textup{code}}$ is our signature scheme defined in \S\ref{sec:UVsgnScheme1} with the family of $(U,U+V)$ codes.

\begin{theorem}[Security Reduction]
  \label{theo:secRedu}
  Let $\qhash$ (resp. $\qsig$) be the number of queries to the hash
  (resp. signing) oracle. We assume that
  $\lambda_{0} = \lambda + 2\log_{2}(\qsig)$  where $\lambda$ is the security parameter of the signature scheme. We have in the random oracle model \textup{(ROM)} for all time $t$:
  \begin{multline*}
    Succ_{\cS_{\textup{code}}}^{\textup{EUF-CMA}}(t,\qhash,\qsig) \leq
    2 Succ_{\DOOM}^{n,k,\qhash,w}(t_{c}) + \frac{1}{2}\qhash\sqrt{ \varepsilon } \\ + \qsig \rho\left( \mathcal{D}_{w},\mathcal{U}_{w} \right) + \rho_{c} \left( \Drand,\Dpub \right)(t_{c}) + \frac{1}{2^{\lambda}}
  \end{multline*}
  where $t_{c} = t + O \left( \qhash  \cdot n^{2} \right)$ and $\varepsilon$ given in Proposition \ref{prop:statDist}.
\end{theorem}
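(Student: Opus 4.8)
The plan is to chain a sequence of games $G_0,G_1,\dots$ in the spirit of the tight FDH proof of~\cite{C02}, starting from the real EUF-CMA game and ending with one whose winning probability is controlled by $\DOOM$; each term on the right-hand side is the cost of one hop. In $G_0$ the challenger runs the EUF-CMA procedures as stated, so $\mathrm{Succ}_{G_0}=\mathrm{Succ}_{\cS_{\textup{code}}}^{\textup{EUF-CMA}}(t,\qhash,\qsig)$. In $G_1$ the challenger answers a signing query on $\mv$ by drawing a fresh salt $\rv$, drawing $\ev'$ according to the signature distribution $\mathcal{D}_{w}$, \emph{programming} $\hash(\mv,\rv):=\Hpub\ev'^{T}$, and returning $(\ev',\rv)$. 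Since the source-distortion decoder always returns a genuine preimage, in $G_0$ the value $\hash(\mv,\rv)$ already equals $\Hpub\ev'^{T}$ and $\ev'$ already has law $\mathcal{D}_{w}$; conditioning on the signature leaves the oracle value fixed in both games, so $G_1$ and $G_0$ give the same view unless a fresh salt collides with an already defined oracle point, which with $\lambda_{0}=\lambda+2\log_{2}(\qsig)$ has probability at most $1/2^{\lambda}$ by a birthday bound. In $G_2$ we replace $\ev'\sim\mathcal{D}_{w}$ by $\ev'\sim\mathcal{U}_{w}$ in each of the at most $\qsig$ signing queries; a hybrid argument via Proposition~\ref{prop:product} gives $|\mathrm{Succ}_{G_2}-\mathrm{Succ}_{G_1}|\le\qsig\,\rho(\mathcal{D}_{w},\mathcal{U}_{w})$. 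Crucially, after $G_2$ the challenger no longer touches $\Hsec,\Pm,\Sm$: it runs on $\Hpub$ alone with overhead $O(\qhash\cdot n^{2})$ for the matrix-vector products.

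Because signing is now an efficient procedure of $\Hpub$ only, in $G_3$ we swap the public key, replacing $\Hpub\sim\Dpub$ by $\Hpub\sim\Drand$. The whole of $G_2$ then reads as a distinguisher between $\Dpub$ and $\Drand$ running in time $t_{c}=t+O(\qhash\cdot n^{2})$, so $|\mathrm{Succ}_{G_3}-\mathrm{Succ}_{G_2}|\le\rho_{c}(\Drand,\Dpub)(t_{c})$. In $G_3$ the oracle is honest on every point the adversary queries, but on the $\le\qsig$ points created inside signing queries it has the law $\Dsw{\Hpub}$ of $\Hpub\ev'^{T}$, $\ev'\sim\mathcal{U}_{w}$. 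In $G_4$ we instead sample those points uniformly in $\F_{2}^{n-k}$ and return (information-theoretically) a uniformly random weight-$w$ preimage: conditioned on the oracle value the returned signature has exactly the same law as in $G_3$, so the only change in the adversary's view is the marginal of those few oracle points. A hybrid argument together with Proposition~\ref{prop:statDist} (the same statement holds with $\Hpub\sim\Drand$, then with an even smaller constant, so $\varepsilon$ is a safe majorant) yields $|\mathrm{Succ}_{G_4}-\mathrm{Succ}_{G_3}|\le\qsig\,\esp_{\Hpub}\rho(\Dsw{\Hpub},\Uc)\le\frac{1}{2}\qsig\sqrt{\varepsilon}\le\frac{1}{2}\qhash\sqrt{\varepsilon}$, using $\qsig\le\qhash$.

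It remains to bound $\mathrm{Succ}_{G_4}$ by $\DOOM$. In $G_4$ both the key and the oracle are uniform, so a solver for $\DOOM$ on input $(\Hm,\sv_{1},\dots,\sv_{\qhash})$ with $\Hm$ uniform plays the challenger: it sets $\Hpub:=\Hm$, answers the $i$-th fresh adversarial hash query with $\sv_{i}$, and answers signing queries as in $G_4$ (it only needs a uniform oracle value and a weight-$w$ preimage that it picks itself). A valid forgery $(\mv^{\star},\ev^{\star},\rv^{\star})$ has $\mv^{\star}$ never signed, hence $(\mv^{\star},\rv^{\star})$ is not one of the signing-created points, so $\hash(\mv^{\star},\rv^{\star})$ is one of the planted syndromes $\sv_{i^{\star}}$ and $(\ev^{\star},i^{\star})$ solves $\DOOM$ --- up to a standard case split on whether the forgery's hash point was actually queried by the adversary or only appears in \texttt{Finalize}, absorbed into a factor $2$. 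Hence $\mathrm{Succ}_{G_4}\le 2\,\mathrm{Succ}_{\DOOM}^{n,k,\qhash,w}(t_{c})$, and summing the five estimates gives the stated inequality.

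The part I expect to need the most care is the order and mutual consistency of the reprogrammings: the computational hop $\Dpub\to\Drand$ must be performed while signing is still an efficient key-only procedure, whereas the statistical hop to a uniform oracle (which turns signing into an unbounded preimage sampling) and the embedding of the $\DOOM$ challenges must come afterwards; and at every hop one must check that conditioning on an oracle value leaves the signature distribution untouched, so that only the syndrome \emph{marginal} --- the quantity bounded by Proposition~\ref{prop:statDist} --- enters the accounting. The salt is exactly the device that decouples signing queries from hash queries and makes all these reprogrammings compatible, which is why $\lambda_{0}$ is taken slightly larger than $\lambda$.
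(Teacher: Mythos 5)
Your game sequence is a legitimate alternative decomposition, and several of your steps are cleaner than the paper's (in particular, programming only at signing-created points means a forgery on an unsigned message can never land on a programmed point, so you could in principle drop the factor $2$ that the paper pays in Game~5 for $\rv\notin\listM$; and your observation that Proposition~3's bound only improves when $\Hpub$ is fully uniform is correct, so the order in which you perform the computational key swap and the statistical syndrome swap is not a problem). The difficulty is in your very first hop.

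You bound the $G_0\to G_1$ loss by the probability that ``a fresh salt collides with an already defined oracle point'' and claim this is at most $2^{-\lambda}$ with $\lambda_0=\lambda+2\log_2\qsig$. That bound only covers collisions between two signing salts (at most $\qsig^2/2^{\lambda_0}=2^{-\lambda}$). But the adversary can also pre-query $\hash(\mv,\rv)$ for up to $\qhash$ salts $\rv$ before ever asking for a signature on $\mv$; your $G_1$ \emph{lazily} programs the oracle only when the signing query comes, so if the signing salt later hits one of those pre-queried points the programming conflicts with a value the adversary has already seen. The probability of this is up to $\qsig\qhash/2^{\lambda_0}=\qhash/(\qsig 2^{\lambda})$, which is not $O(2^{-\lambda})$ when $\qhash\gg\qsig$ (e.g.\ $\qsig=1$, $\qhash=2^{\lambda}$ makes it constant). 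This is exactly the problem the paper's list-based Game~2 is designed to sidestep: the list $\listM$ is drawn \emph{in advance}, and every hash query $(\mv,\rv)$ with $\rv\in\listM$ is answered with a programmed value \emph{whether or not} the adversary queried it before a signing query — so the programming decision is a function of $(\mv,\rv,\listM)$ alone, never of the query history, and no inconsistency can arise. The only thing left to bound is the chance that two signing queries for the same $\mv$ draw the same salt from $\listM$, which is $\qsig^2/2^{\lambda_0}\le 2^{-\lambda}$ (Lemma~1). The price the paper pays for this eager programming is that up to $\qhash$ (not $\qsig$) oracle points are non-uniform, which is exactly why the statistical term is $\frac{1}{2}\qhash\sqrt{\varepsilon}$ rather than $\frac{1}{2}\qsig\sqrt{\varepsilon}$ — your write-up inherits the $\qhash$ by a final $\qsig\le\qhash$ inequality, but it is intrinsic to the paper's game, not an arbitrary slack. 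So either adopt the $\listM$ pre-commitment (and then your game hops essentially merge into the paper's), or keep lazy programming but strengthen the salt length hypothesis (e.g.\ $\lambda_0\ge\lambda+\log_2\qsig+\log_2\qhash$) to absorb the true collision probability — but as stated with $\lambda_0=\lambda+2\log_2\qsig$ the $2^{-\lambda}$ claim is wrong.

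A secondary remark: in your final step the $\DOOM$ solver is said to ``answer signing queries as in $G_4$'' using ``a uniform oracle value and a weight-$w$ preimage that it picks itself.'' An efficient simulator cannot produce a uniform syndrome together with a consistent weight-$w$ preimage; what it can do efficiently is pick $\ev'\sim\mathcal{U}_w$ and set the oracle to $\Hpub\ev'^T$, which is the $G_3$ behaviour, not $G_4$. This is harmless here because the reduction can just embed $\DOOM$ in $G_3$ directly — the forgery's hash point is never a signing-created point, so it is always some $\sv_j$ — but as written the description of the extractor is circular.
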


\begin{proof}
  Let $\cA$ be a $(t,\qsig,\qhash,\varepsilon)$-adversary
  in the EUF-CMA model against $\cS_{\text{code}}$ and let  $(\Hm_{0},\sv_{1},\cdots,\sv_{\qhash})$  be drawn uniformly at random among all instances of $\DOOM$ for parameters $n,k,\qhash,w$. We stress here that syndromes $\sv_{j}$ are random and independent vectors of $\mathbb{F}_{2}^{n-k}$. We write
  $\mathbb{P}\left( S_{i} \right)$ to denote the probability of
  success for $\cA$ of game $G_{i}$. Let
      \bigskip
  
  {\bf Game $0$} is the EUF-CMA game for $\cS_{\textup{code}}$.
  \bigskip

  {\bf Game $1$} is identical to Game $0$ unless the following failure event $F$ occurs: there is a collision in a signature query ({\em i.e.} two signatures queries for a same message $\mv$ lead to the same salt $\rv$). By using the difference lemma
  (see for instance \cite[Lemma 1]{S04a}) we get:
  \begin{displaymath}
    \mathbb{P}\left( S_{0} \right) \leq \mathbb{P}\left( S_{1} \right) + \mathbb{P} \left( F \right). 
  \end{displaymath}

  The following lemma (see \ref{lemm:failEvent} for a proof) shows that in our case as $\lambda_{0} = \lambda + 2\log_{2}(\qsig)$, the probability of the event $F$ is negligible.
  \begin{lemma}
    \label{lemm:fEvent}
    For $\lambda_{0} = \lambda + 2\log_{2}(\qsig)$ we have:
    \begin{displaymath}
      \mathbb{P}\left( F \right) \leq \frac{1}{2^{\lambda}}.
    \end{displaymath}
  \end{lemma}

  {\bf Game $2$} is modified from Game $1$ as follows:\smallskip
  \begin{center}{\tt
    \begin{tabular}{|l|l|}
      \hline
      \underline{proc Hash$(\mv ,\rv)$} & \underline{proc Sign$(\mv )$} \\
       if $\rv \in \listM$ & $\rv\gets \listM$.next$()$ \\
      \quad $\ev_{\mv ,\rv}\Unif S_w$ & $\sv\gets$ Hash$(\mv ,\rv)$ \\
      \quad return $\ev_{\mv ,\rv}\Hm^{T}_{\textup{pub}}$ & $\ev \gets D_{\Hsec,w}(\Sm^{-1}\sv^{T})$ \\
      else & return $\left(\ev\Pm,\rv\right)$ \\
      \quad $j\gets j+1$ & \\
 	  \quad return $\sv_j$ & \\
      \hline
    \end{tabular}}
	\newlength{\mylen}\settowidth{\mylen}{\tt
		    \begin{tabular}{|l|l|}
			\hline
			\underline{proc Hash$(\mv ,\rv)$} & \underline{proc Sign$(\mv )$} \\
			if $\rv \in \listM$ & $\rv\gets \listM$.next$()$ \\
			\quad $\ev_{\mv ,\rv}\Unif S_w$ & $\sv\gets$ Hash$(\mv ,\rv)$ \\
			\quad return $\ev_{\mv ,\rv}\Hm^{T}_{\textup{pub}}$ & $\ev \gets D_{\Hsec,w}(\Sm^{-1}\sv^{T})$ \\
			else & return $\left(\ev\Pm,\rv\right)$ \\
			\quad $j\gets j+1$ & \\
			\quad return $\sv_j$ & \\
			\hline
		\end{tabular}}
	\addtolength{\mylen}{-2\mylen}
	\addtolength{\mylen}{\linewidth}
	\addtolength{\mylen}{-15pt}\hfill
	\begin{minipage}{\mylen}
	 To each message $\mv $ we associate a list $\listM$ containing  $\qsig$
	random elements of $\F_{2}^{\lambda_0}$. It is constructed the
	first time it is needed.  The call $\rv \in \listM$
	returns true if and only if $\rv$ is in the list. The call
	$\listM.{\tt next}()$ returns elements of $\listM$
	sequentially. The list is large enough to satisfy all queries.
	\end{minipage}
  \end{center}
	The {\tt Hash} procedure now creates the list $\listM$ if needed, then, if $\rv\in\listM$ it returns $\ev_{\mv,\rv}\Hm^{T}_{\textup{pub}}$ with $\ev_{\mv,\rv} \Unif S_{w}$. This leads to a valid signature $(\ev_{\mv,\rv},\rv)$ for $\mv$. The error value is stored. If $\rv\not\in\listM$ it outputs one of $\sv_j$ of
the instance $(\Hm_{0},\sv_{1},\ldots,\sv_{\qhash})$ of the {DOOM}
problem. The {\tt Sign} procedure is unchanged, except for $\rv$
which is now taken in $\listM$.  The global index $j$ is set to 0 in {\tt
	proc Initialize}.

  We can relate this game to the previous one through the following lemma.
  \begin{restatable}{lemma}{lemdistribi}
    \label{lem:distribi}
    $$
      \mathbb{P}(S_{1})\leq \mathbb{P}(S_{2}) + \frac{\qhash}{2}\sqrt{\varepsilon} \mbox{ where } \varepsilon \mbox{ is given in Proposition \ref{prop:statDist}.} 
    $$
  \end{restatable}
  The proof of this lemma is given in Appendix \ref{lem:distrib} and relies
  among other things on the following points:
  \begin{itemize} 
  \item Proposition \ref{prop:product};
  \item Syndromes produced by matrices $\Hm_{\text{pub}}$ with errors of weight $w$ have average statistical distance from 
the uniform distribution over $\mathbb{F}_{2}^{n-k}$  at most $\frac{1}{2} \sqrt{\varepsilon}$ (see Proposition \ref{prop:statDist}). This follows from 
a lemma which is a variation of the leftover hash lemma  (see \cite{BDKPPS11}) and which can be expressed as follows.
\item \begin{restatable}{lemma}{lemleftoverHash}\label{lem:leftoverHash}
Consider a finite family $\Hc = (h_i)_{i \in I}$ of functions from a finite set $E$ to a finite set $F$.
Denote by $\varepsilon$ the bias of the collision probability, i.e. the quantity such that
$$
\prob_{h,e,e'}(h(e)=h(e')) = \frac{1}{|F|} (1 + \varepsilon)
$$
where $h$ is drawn uniformly at random in $\Hc$, $e$ and $e'$ are drawn uniformly at random in $E$. Let $\Uc$ be the uniform distribution over $F$ and $\Dc(h)$ be the
distribution of the outputs $h(e)$ when $e$ is chosen uniformly at random in $E$.
We have
$$
\esp_h \left\{ \rho(\Dc(h),\Uc) \right\} \leq \frac{1}{2} \sqrt{\varepsilon}.
$$
\end{restatable}

\begin{remark}
In the leftover hash lemma, there is the additional assumption that $\Hc$ is a universal family of hash functions, meaning that 
for any $e$ and $e'$ distinct in $F$, we have $\prob_h(h(e)=h(e'))=\frac{1}{|F|}$. This assumption allows to have a general bound on the bias $\varepsilon$. In our case, where the $h$'s are hash functions defined as
$h(e) = \Hpub \ev^T$, $\Hc$ does not form a universal family of hash functions (essentially because the distribution of the $\Hpub$'s 
is not the uniform distribution over $\F_2^{(n-k)\times n}$). However in our case we can still bound $\varepsilon$ by a direct computation.
This lemma is proved in Appendix \S\ref{lem:distrib}.
\end{remark}

  \end{itemize}

  {\bf Game $3$} differs from Game $2$ by changing in {\tt proc
    Sign} calls ``$\ev \gets D_{\Hsec,w}(\Sm^{-1}\sv^{T})$'' by
  ``$ \ev \gets \ev_{\mv ,\rv}$'' and ``return $(\ev\Pm,\rv)$'' by
  ``return $(\ev,\rv)$''. 
  Any signature $(\ev,\rv)$ produced by
  {\tt proc Sign} is valid.  
  The error $\ev$ is drawn according to
  the uniform distribution $\mathcal{U}_{w}$ while previously it was
  drawn according to the source distortion decoder distribution, that
  is $\mathcal{D}_{w}$.
  By using Proposition \ref{prop:product} it follows that
  \begin{displaymath}
    \mathbb{P} \left( S_{2} \right) \leq \mathbb{P} \left( S_{3}
    \right) + \qsig \rho\left( \mathcal{U}_{w},\mathcal{D}_{w} \right).
  \end{displaymath}

  {\bf Game $4$} is the game where we replace the public matrix
  $\Hpub$ by $\Hm_{0}$. 
  In this way we will force the adversary to
  build a solution of the \DOOM\ problem. Here
  if a difference is detected between games it gives a
  distinguisher between distributions $\Drand$
  and $\Dpub$:
  \begin{displaymath}
    \mathbb{P} \left( S_{3} \right) \leq \mathbb{P} \left( S_{4} \right) + \rho_{c} \left( \Dpub,\Drand \right)\left(t_{c} \right).   
  \end{displaymath}

     We show in appendix how to emulate the lists $\listM$ in such a way
   that list operations cost, including its construction, is at most
   linear in the security parameter $\lambda$. Since $\lambda\le n$, it
   follows that the cost to a call to {\tt proc Hash} cannot exceed
   $O(n^2)$ and the running time of the challenger is
   $t_{c} = t +  O\left( \qhash \cdot n^{2} \right)$.
   \bigskip

  {\bf Game $5$} differs in the finalize procedure.
  \begin{center} {\tt
      \begin{tabular}{|l|}
        \hline
        \underline{proc Finalize}$(\mv ,\ev,\rv)$ \\
        $\sv\gets$ Hash$(\mv,\rv)$ \\
        $b \leftarrow \Hm_{\text{pub}}\ev^{T} = \sv^{T} \wedge |\ev| = w$ \\  
        return $b \wedge \rv \notin \listM$ \\
        \hline
      \end{tabular}}
	\settowidth{\mylen}{\tt
\begin{tabular}{|l|}
        \hline
        \underline{proc Finalize}$(\mv ,\ev,\rv)$ \\
        $\sv\gets$ Hash$(\mv ,\rv)$ \\
        $b \leftarrow \Hm_{\text{pub}}\ev^{T} = \sv^{T} \wedge |\ev| = w$ \\  
        return $b \wedge \rv \notin \listM$ \\
        \hline
      \end{tabular}}
    \addtolength{\mylen}{-2\mylen}
    \addtolength{\mylen}{\linewidth}
    \addtolength{\mylen}{-15pt}\hfill
    \begin{minipage}{\mylen}
      We assume the forger outputs a valid signature $(\ev,\rv)$ for
      the message $\mv $. The probability of success of Game $5$ is the
      probability of the event ``$S_4 \wedge(\rv\not\in\listM)$''.
    \end{minipage}
  \end{center}

  If the forgery is valid, the message $\mv $ has never been queried by
  {\tt Sign}, and the adversary never had access to any
  element of the list $\listM$. This way, the two events are
  independent and we get:
  \begin{displaymath}
    \mathbb{P} \left( S_{5} \right) = (1 - 2^{-\lambda_{0}})^{\qsig} \mathbb{P} \left( S_{4} \right).
  \end{displaymath}
  As we assumed $\lambda_{0}= \lambda + 2\log_{2}(\qsig) \geq  \log_{2}(\qsig^{2})$, we have:
  \begin{displaymath}
    \left( 1 - 2^{-\lambda_{0}} \right)^{\qsig} \geq \left( 1 - \frac{1}{\qsig^{2}}\right)^{\qsig} \geq \frac{1}{2}.
  \end{displaymath}
  Therefore
  \begin{equation}\label{eq:lower_bound_p5}
    \mathbb{P}\left( S_{5} \right) \geq \frac{1}{2}\mathbb{P}\left( S_{4} \right). 
  \end{equation}
  The probability $\mathbb{P} \left( S_{5} \right)$ is then exactly the probability for $\cA$ to output $\ev_{j} \in S_{w}$ such that $\Hm_{0}\ev^{T}_{j} = \sv_{j}^{T}$ for some $j$ which gives
  \begin{align}\label{eq:upper_bound_p5}
    \mathbb{P} \left( S_{5} \right) \leq\ Succ_{\DOOM}^{n,k,\qhash,w}(t_{c}).
  \end{align} 
  \eqref{eq:lower_bound_p5} together with \eqref{eq:upper_bound_p5} imply that 
  \begin{displaymath}
    \prob(S_4) \leq 2\cdot \ Succ_{\DOOM}^{n,k,\qhash,w}(t_{c}).
  \end{displaymath}
  This  concludes the proof of Theorem \ref{theo:secRedu} by combining this together with all the bounds obtained for each of the previous games.
\end{proof} 

 \section{Achieving the Uniform Distribution of the Outputs}
\label{sec:statDist} 

\subsection{Rejection Sampling Method} 

In our security proof, we use the fact that the distribution of the
outputs of the $(U,U+V)$ decoder is close to the uniform distribution
on the words of weight $w$. We will show how to modify a little bit
the decoder by performing some moderate rejection sampling in order to
meet this property. Note that ensuring such a property is actually not
only desirable for the security proof, it is also more or less
necessary since there is an easy way to attack the signature when it
is based on the decoder $UV$-$\mathbf{sddV}1$. Indeed, it is readily
verified that with this decoder the probability $\prob(e_i=1,e_j=1)$
we have on the output $\ev$ of the decoder for certain $i$ and $j$ is
larger than the same probability for a random word $\ev$ of weight
$w$.  The pairs $(i,j)$ which have this property correspond to the
image by the permutation $\Pm$ of pairs of the form $(x,x+n/2)$ or
$(x+n/2,x)$.  In other words, signatures leak information in this case
and this can be used to recover completely the permuted $(U,U+V)$
structure of the code.

To explain the rejection method, let us introduce some notation.
Let $\ev \in \mathbb{F}_{2}^{n}$,
\begin{eqnarray*}
	w_{1}(\ev) & \eqdef & \left|\left\{ i \in \{1, \cdots, n/2\} \mbox{ } : \mbox{ } e_{i} \neq e_{i+n/2} \right\}\right|, \\
	w_{2}(\ev) & \eqdef & \left|\left\{ i \in \{1, \cdots, n/2\}  \mbox{ } : \mbox{ } e_{i} = e_{i+n/2} = 1 \right\}\right|.
\end{eqnarray*}

The problem is that algorithm $UV$-$\mathbf{sddV}1$ outputs errors $\ev$ for which $w_1(\ev)$ and
$w_2(\ev)$ are constant:
$w_1(\ev) = (n/2-k_V)/2$ and
$w_2(\ev)=(n/2-k_U-w_1(\ev))/2 = n/8-k_U/2+k_V/4$.
Obviously uniformly distributed errors $\ev$ in $S_w$ do not have this behavior.
Our strategy to attain this uniform distribution on the outputs $\ev$ is to change a little bit 
the source-distortion decoder for $V$ in order to attain variable weight errors 
which are such that the weight of $\ev_V$ (which corresponds to $w_1(\ev)$)  have
the same distribution as $w_1(\ev')$ where $\ev'$ is a random error of weight $w$ which is
uniformly distributed. This can be easily done by rejection sampling as in 
Algorithm \ref{alg:2algsgn}. Recall that $D$ is a Source Distortion Decoder (see Definition \ref{def:sddeco} in \S\ref{subsec:2.2}).

\begin{algorithm} 
	{\bf Parameter:}  a $(U,U+V)$ code of length $n$ \\
	{\bf Inputs:} $\cdot$ $(\sv_1,\sv_2)$ with $\sv_1 \in \mathbb{F}_{2}^{n/2-k_{U}}$, $\sv_{2} \in \mathbb{F}_{2}^{n/2 - k_{V}}$\\
	$\cdot$ no-rejection probability vector $\xv=(x_i)_{0 \leq i \leq n-k_V} \in [0,1]^{n-k_V}$\\
	{\bf Output:} $\ev \in \mathbb{F}_{2}^n$ with $|\ev|=w$.\\
	{\bf Assumes:} $2k_U - k_V \leq n/2$.
	\caption{$UV$-$\mathbf{sddV}2$ : \textbf{$(U,U+V)-$Source Distortion Decoder}}
	\label{alg:2algsgn}
	\begin{algorithmic}[1]
		\Repeat
		\State $\ev_{V} \leftarrow D(\Hm_{V} ,\sv_{2})$ \label{ins:evv}
		\State $p \Unif [0,1]$
		\Until $|\ev_{V}| \leq w$, $w-|\ev_{V}|  \equiv 0 \pmod{2}$ and $p \leq x_{|\ev_{V}|}$
		\State $\ev_{U} \leftarrow D_{(w-|\ev_{V}| )/2}(\Hm_{U},\sv_{1},\ev_{V})$ 
				\State \Return $(\ev_{U},\ev_{U}+\ev_{V})$
	\end{algorithmic}
\end{algorithm}

From now on we consider two random variables : $\ev$ which is the output of Algorithm 
\ref{alg:2algsgn} and $\ev'$ which is a uniformly distributed error of weight $w$. It is easily verified that $w_{1}(\ev) = |\ev_{V}|$ and $w_{2}(\ev) = (w-|\ev_{V}|)/2$.
Moreover, it turns out that it is not only necessary in order to achieve uniform distribution on the
output to enforce that $w_1(\ev)$ follows the same law as $w_1(\ev')$, this is also sufficient.
To check this, let us introduce some additional notation. For
$i \in \{1,2\}$ we define the quantities
$$
p_{i}^{sdd}(j) \eqdef \mathbb{P}_{\ev} \left( w_{i}(\ev) = j\right) \quad ; \quad 
p_{i}^{u}(j) \eqdef \mathbb{P}_{\ev'} \left( w_{i}(\ev')
= j \right)
$$
We will also say that a source distortion decoder $D$ {\em behaves uniformly} for a
parity-check matrix $\Hm$ if
$\mathbb{P}_{\sv,\theta} \left( \ev = D(\Hm,\sv)
\right)$ only depends on the weight $|\ev|$ (here $\theta$ denotes the internal randomness of algorithm $D$).

In such a case, the
no-rejection vector $\xv$ can be chosen so that the output of
Algorithm \ref{alg:2algsgn} is uniformly distributed as shown by the
following theorem.

\begin{restatable}{theorem}{propostatDec}
	\label{th:statDec}
	If the source decoder $D$ used in Algorithm \ref{alg:2algsgn}
	behaves uniformly for $\Hm_V$ and uniformly for $\Hm_{U}''$ which is obtained from $(\Hm_{U},\ev_{V})$ in Proposition \ref{prop:syndPuncAlt} (see \S\ref{subsec:2.2}) for
	all error patterns $\ev_V$ obtained as
	$\ev_V = D(\Hm_{V} ,\sv_{2})$, we have:
	\begin{displaymath}
	\rho\left( \mathcal{D}_{w}, \mathcal{U}_{w} \right) = \rho \left(
	p_1^{sdd},p_1^{u} \right)
	\end{displaymath}
	where $\cD_w$ is the output distribution of Algorithm \ref{alg:2algsgn}. Then,  output of Algorithm \ref{alg:2algsgn} is the uniform
	distribution over $S_w$ if in addition two executions of $D$
	are independent and the no-rejection probability vector $\xv$ is
	chosen for any $i$ in $\{0,\dots,w\}$ as
	\begin{displaymath}
	x_i = \frac{1}{\Mrs} \frac{p_1^u(i)}{p(i)} \text{ if $w \equiv i \pmod {2}$}\;\;x_i=0 \;\text{ otherwise}
	\end{displaymath}
	with $p(i) \eqdef \prob_{\sv,\theta}(|D(\Hm_{V},\sv)|=i)$ and $\Mrs \eqdef \mathop{\sup}\limits_{\substack{0 \leq i \leq w \\ i \equiv w \pmod {2} }}\frac{p_1^u(i)}{p(i)}$. 
\end{restatable}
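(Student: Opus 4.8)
\medskip
\noindent\textbf{Proof plan.}
The plan is to compute, for every $\ev\in S_w$, the probability $\cD_w(\ev)$ that Algorithm~\ref{alg:2algsgn} returns $\ev$ when $\sv=(\sv_1,\sv_2)$ is uniform, to show that it depends only on $w_1(\ev)$, and then to derive both assertions by grouping $S_w$ according to the value of $w_1$. First I would set up the decomposition: every $\ev\in\Ft^{n}$ is uniquely of the form $\ev=(\ev_U,\ev_U+\ev_V)$ with $\ev_U,\ev_V\in\Ft^{n/2}$, and then $w_1(\ev)=|\ev_V|$ while $w_2(\ev)=|\supp(\ev_U)\setminus\supp(\ev_V)|$. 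Splitting $|\ev|$ according to whether a coordinate belongs to $\supp(\ev_V)$ yields $|\ev|=w_1(\ev)+2w_2(\ev)$, so on $S_w$ one has $w_2(\ev)=\nu_j:=(w-j)/2$ with $j:=w_1(\ev)$, and $\ev\in S_w$ iff $j\le w$, $j\equiv w\pmod 2$ and $|\supp(\ev_U)\setminus\supp(\ev_V)|=\nu_j$; in particular the number of $\ev\in S_w$ with $w_1(\ev)=j$ is $N_j:=\binom{n/2}{j}\,2^{j}\binom{n/2-j}{\nu_j}$.

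Next I would factor $\cD_w(\ev)$. Because $\ev$ determines $(\ev_U,\ev_V)$ and distinct errors give distinct pairs, Algorithm~\ref{alg:2algsgn} returns $\ev$ exactly when the error accepted by the \textbf{repeat} loop equals $\ev_V$ and the error returned by the final call equals $\ev_U$; assuming the two invocations of $D$ use independent coins this factors as
\[
\cD_w(\ev)=\prob(\text{loop accepts }\ev_V)\cdot\prob_{\sv_1,\theta}\!\bigl(D_{\nu_j}(\Hm_U,\sv_1,\ev_V)=\ev_U\bigr),
\]
the conditioning being vacuous since $\sv_1$ and the coins of the second call are independent of $\sv_2$ and of the loop. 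For the second factor I would invoke Proposition~\ref{prop:syndPuncAlt} with erasure pattern $\ev_V$: writing $\sv_1\Sm^{T}=(\sv_1',\sv_1'')$, the output is obtained from $\ev''=D(\Hm_U'',\sv_1'')$ by an affine bijection once $\sv_1'$ is fixed, and $\sv_1',\sv_1''$ are independent and uniform. Since $D_{\nu_j}$ has fixed distortion, $\ev''$ always has weight $\nu_j$, so the hypothesis that $D$ behaves uniformly for $\Hm_U''$ makes $\ev''$ uniform over the weight-$\nu_j$ words of $\Ft^{n/2-j}$; together with the factor $2^{-j}$ coming from fixing $\sv_1'$ this gives $\prob_{\sv_1,\theta}\!\bigl(D_{\nu_j}(\Hm_U,\sv_1,\ev_V)=\ev_U\bigr)=1/(2^{j}\binom{n/2-j}{\nu_j})$ for every admissible $\ev_U$, i.e.\ uniformity over all $\ev\in S_w$ sharing the same $\ev_V$.

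For the \textbf{repeat} loop I would observe that, with $\sv_2$ frozen across iterations, it is a rejection sampler inside the single coset $\{\zv\in\Ft^{n/2}:\Hm_V\zv^{T}=\sv_2^{T}\}$: summing the geometric series, a given $\ev_V$ in that coset is accepted with probability proportional (within the coset) to $\prob_\theta(D(\Hm_V,\sv_2)=\ev_V)\,x_{|\ev_V|}$, and the hypothesis that $D$ behaves uniformly for $\Hm_V$ makes $\prob_\theta(D(\Hm_V,\sv_2)=\ev_V)$ depend only on $|\ev_V|$, so that conditioned on $|\ev_V|=j$ the accepted error is uniform over the weight-$j$ words and $\prob(\text{loop accepts }\ev_V)$ depends only on $j$. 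Feeding this into the factorisation, $\cD_w(\ev)$ depends only on $j=w_1(\ev)$; hence $\cD_w(\ev)=p_1^{sdd}(j)/N_j$ and $\cU_w(\ev)=p_1^{u}(j)/N_j$ on the level set $\{w_1=j\}$, and summing $|\cD_w(\ev)-\cU_w(\ev)|$ level set by level set gives $\rho(\cD_w,\cU_w)=\rho(p_1^{sdd},p_1^{u})$. For the ``moreover'': by the above $\cD_w=\cU_w$ iff $p_1^{sdd}=p_1^{u}$, i.e.\ iff the accepted weight $|\ev_V|$ has law $p_1^{u}$; that weight equals $i$ with probability proportional to $p(i)\,x_i$ (source weight times acceptance probability), and the prescribed $x_i=\frac{1}{\Mrs}\frac{p_1^{u}(i)}{p(i)}$ makes this proportional to $p_1^{u}(i)$, hence equal to it after normalisation, while $\Mrs=\sup_i p_1^{u}(i)/p(i)$ ensures $x_i\in[0,1]$---the standard rejection-sampling identity.

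The step I expect to be the main obstacle is the probability bookkeeping of the two middle paragraphs: one has to treat the \textbf{repeat} loop correctly---the input $\sv_2$ is frozen across iterations, so the rejection really happens inside one coset of $V$, and it is precisely here that the ``behaves uniformly for $\Hm_V$'' hypothesis is used to get honest uniformity---and one has to push the reduction of Proposition~\ref{prop:syndPuncAlt} through the ``behaves uniformly for $\Hm_U''$'' hypothesis to obtain uniformity of the $U$-component. The combinatorial identity for $N_j$, the level-set rearrangement of the statistical distance, and the rejection-sampling computation for $\xv$ are then routine.
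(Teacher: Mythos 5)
Your proposal follows essentially the same route as the paper's proof: decompose $\ev=(\ev_U,\ev_U+\ev_V)$, use the two "behaves uniformly" hypotheses (together with Proposition~\ref{prop:syndPuncAlt}) to show that the output law depends only on $w_1(\ev)$ — the per-point probability $1/\bigl(2^{j}\binom{n/2-j}{\nu_j}\bigr)$ you obtain matches the paper's $\frac{1}{\binom{n'}{w'}}\frac{1}{2^{n/2-n'}}$ — then sum the statistical distance over level sets of $w_1$, and close with the geometric-series rejection-sampling identity (the paper isolates this last step as Proposition~\ref{prop:sddDistrib}). Your extra care about $\sv_2$ being frozen across iterations of the \textbf{repeat} loop is a welcome clarification, but it does not change the structure of the argument.
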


\subsection{Application to the Prange source distortion decoder.} 
The Prange source decoder (defined in \S\ref{subsec:2.2}) is extremely
close to behave uniformly for almost all linear codes. To keep this
paper within a reasonable length we just provide here how the relevant
distribution $p(i)$ is computed.
\begin{proposition}
	[Weight Distribution of the Prange Algorithm]
	\label{prop:weightDistribPrange}~\\
	Let $p(i) = \sum_{\ev : |\ev|=i} \mathbb{P}_{\sv,\theta} \left( \ev = \Psd(\Hm,\sv) \right)$.
	For all $w,k,n \in \mathbb{N}$ with $k \leq n$, $w \leq n-k$, all parity-check matrices of size $(n-k)\times n$,
		we have $p(w) = \frac{\binom{n-k}{w}}{2^{n-k}}$.
\end{proposition}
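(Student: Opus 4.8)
The plan is to compute $p(w)$ directly from the definition of $\Psd$. First I would recall that, on input $(\Hm,\sv)$, the decoder $\Psd$ picks a random subset $I\subset\{1,\dots,n\}$ of size $n-k$ --- retrying until the submatrix $\Am_I$ of $\Hm$ formed by the columns indexed by $I$ is invertible --- and then outputs the unique $\ev\in\F_2^n$ with $\ev_{\bar I}=\mathbf{0}$ and $\ev_I^T=\Am_I^{-1}\sv^T$, where $\bar I$ is the complement of $I$. The only role of the internal randomness $\theta$ is to select this information set $I$, and by full-rankness of $\Hm$ such an $I$ always exists, so the decoder never fails. Since $p(w)=\prob_{\sv,\theta}\bigl(|\Psd(\Hm,\sv)|=w\bigr)$, the strategy is to condition on the chosen $I$ and then average.

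Next I would fix an admissible information set $I$ and observe that, because $\Am_I$ is invertible, the map $\sv^T\mapsto\Am_I^{-1}\sv^T$ is a linear bijection of $\F_2^{n-k}$. Hence if $\sv$ is uniform over $\F_2^{n-k}$, then the restriction $\ev_I$ (viewed as a word of length $n-k$) is uniform over $\F_2^{n-k}$, while the remaining coordinates of $\ev$ are zero, so $|\ev|=|\ev_I|$. Therefore the conditional law of the output weight given $I$ is exactly the weight distribution of a uniformly random word of $\F_2^{n-k}$, which gives
$$\prob_{\sv}\bigl(|\Psd(\Hm,\sv)|=w \mid I\bigr)=\frac{\binom{n-k}{w}}{2^{n-k}},$$
where the hypothesis $w\le n-k$ guarantees that $\binom{n-k}{w}$ is indeed the number of weight-$w$ words in $\F_2^{n-k}$. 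Crucially this value is the same for every admissible $I$, so averaging over the random choice of $I$ made by $\Psd$ yields $p(w)=\binom{n-k}{w}/2^{n-k}$, independently of $\Hm$.

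The hard part is essentially nonexistent: the only points requiring a little care are (i) checking that the internal coins of $\Psd$ act only through the choice of $I$, so that the conditioning is legitimate, and (ii) noting that conditioning on $I$ removes all dependence on $\Hm$ --- it is exactly the invertibility of $\Am_I$, hence the bijectivity of multiplication by $\Am_I^{-1}$ on $\F_2^{n-k}$, that makes the part of the output supported on $I$ uniformly distributed. Everything else is the elementary count of weight-$w$ binary words.
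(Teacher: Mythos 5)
Your proof is correct and matches the paper's argument: both condition on the information set $I$ chosen by $\Psd$, observe that the induced conditional distribution of the output restricted to $I$ is uniform on $\F_2^{n-k}$ (the paper states the resulting $1/2^{n-k}$ directly; you supply the one-line bijection-via-$\Am_I^{-1}$ justification), count the $\binom{n-k}{w}$ weight-$w$ words supported on $I$, and average over $I$. No gaps.
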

By using Theorem \ref{th:statDec} with this distribution $p$ we can set up the no-rejection probability vector $\xv$ in Algorithm
\ref{alg:2algsgn}. To have an efficient algorithm it is essential that the parameter $\Mrs$ is as small as possible (it is readily verified 
that the average number of calls in Algorithm
\ref{alg:2algsgn} to $\Psd(\Hm_{V},\sv_{2})$
is $\Mrs$). Let $\ev$ be an error of weight $w$ chosen uniformly at random. This average number of calls can be chosen to be small by
imposing that the distributions of $w_1(\ev)$ and $|D(\Hm_{V},\sv_2)|$ to have the same expectation. 
The expectation of $w_1(\ev)$ is approximately $w\left(1 - \frac{w}{n}\right)$ and the expectation of $|D(\Hm_{V},\sv_2)|$ is $(n/2-k_V)/2$. We choose therefore 
$k_V$ such that
\begin{displaymath}
(n/2-k_V)/2 \approx w\left(1 - \frac{w}{n}\right).
\end{displaymath}
Thanks to this property, $k_{V}$ is chosen to ``align'' both distributions and in this way $M_{rs}$ is small. This rejection sampling method comes at the price of slightly increasing the weight the decoder can output as it is shown in Figure \ref{fig:distOptR}.
It is easy to see that the optimal choice of the parameters $k_U,k_V,w$ minimizing $\Mrs$ for given $n$, $R \eqdef k/n$
leads to the following choice: 
$$
w = \lfloor  n  \frac{3-\sqrt{1+8R}}{4} \rfloor,\quad k_{U} = n/2-w,\quad  
k_V = \lfloor  n/2 - 2w\left( 1 - \frac{w}{n} \right) \rfloor.
$$

For instance for  $n= 2000$, $k=1000$, we have $w=382$, $k_U=618$, $k_V=382$ and
$\Mrs \approx 2.54$. Recall that the relative weight of an error $\ev \in \F_{2}^{n}$ is defined as $\frac{|\ev|}{n}$. 
Figure \ref{fig:distOptR} gives the relative error weight as a function of $R$ of Algorithm $UV$-$\mathbf{sddV}2$ (with rejection sampling), Algorithm $UV$-$\mathbf{sddV}1$ (without rejection sampling), the relative weight which is achieved by a generic decoder $\frac{1-R}{2}$ and the relative Gilbert-Varshamov bound $h^{-1}(1-R)$. For instance with $R = 0.5$ we have $w = \lfloor 0.1909n \rfloor$. 
\begin{figure}[htbp]
	\begin{center} 
		\caption{Comparison of the Optimal Signature Distortion with or without the Rejection Sampling Method, the Gilbert-Varshamov Bound and the Generic Distortion}
		\includegraphics[scale=0.4]{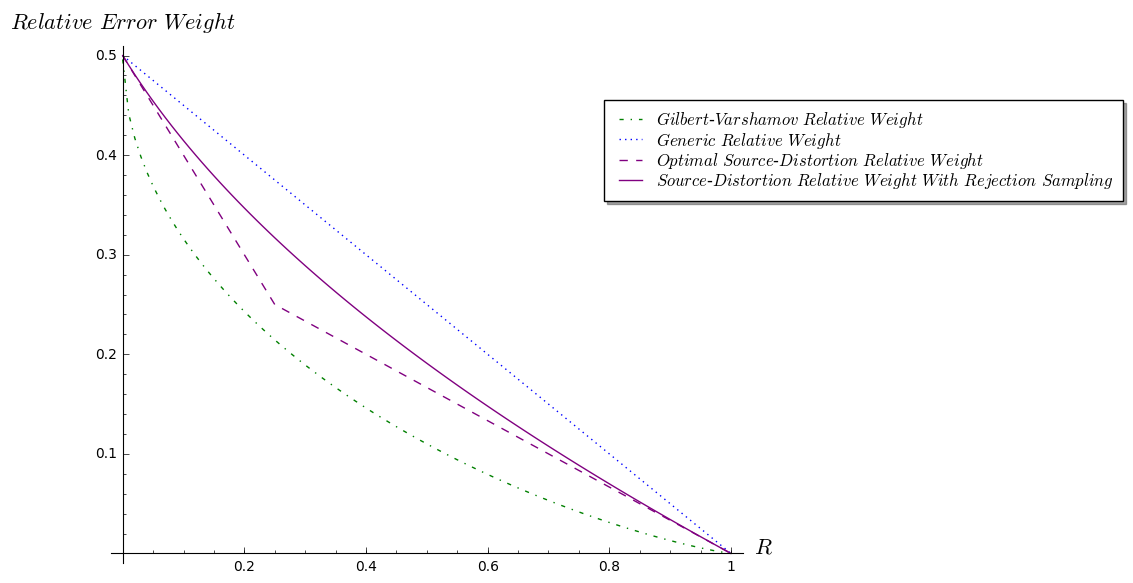}
		\label{fig:distOptR}
	\end{center}
\end{figure}

 \newcommand{\WF}{\mathrm{WF}}

\section{Best Known Algorithms for Solving the DOOM Decoding Problem } 
\label{sec:attsourcedistortion}
We consider here the best known techniques for solving Problem
\ref{prob:DOOM}.
\begin{restate}{problem}{prob:DOOM}[\textup{DOOM} -- Decoding One Out of Many]~\\
  \begin{tabular}{ll}
    Instance: & $\quad\Hm \in \mathbb{F}_{2}^{(n-k) \times n}$,
                $\sv_{1},\cdots,\sv_{q} \in \mathbb{F}_{2}^{n-k}$, $w$
                integer \\
    Output: & $\quad(\ev,i) \in \mathbb{F}_{2}^{n} \times \llbracket
              1,q \rrbracket$ such that $|\ev| = w$ and $\Hm\ev^{T} =
              \sv_{i}^{T}$.
  \end{tabular}
\end{restate}
When $q=1$, Problem~\ref{prob:DOOM} is known as the Syndrome Decoding
(SD) problem. Information Set Decoding (ISD) is the best known
technique to solve SD, it can be traced back to Prange \cite{P62}.  It
has been improved in \cite{S88,D91} by introducing a birthday
paradox. The current state-of-the-art can be found in
\cite{MMT11,BJMM12,MO15}. The DOOM problem was first considered in
\cite{JJ02} then analyzed in \cite{S11} for Dumer's variant of
ISD.

Existing literature usually assumes that there is a unique solution to
the problem. This is true when $w$ is smaller than the
Gilbert-Varshamov bound (see Definition~\ref{dfn:gv}).  When $w$ is
larger, as it is the case here, we speak of {\em source-distortion
  decoding}, the number of solutions grows as $M=\binom{n}{w}/2^{n-k}$
and the cost analysis must be adapted. Considering multiple instances,
as in DOOM above, also alters the cost analysis.

From this point and till the end of this section, the parameters
$n,k,w$ are fixed.

\subsection{Why Does DOOM Strengthen the Security Proof?}
An attacker may produce many, say $q$, favorable messages and hash
them to obtain $\sv_1,\ldots{},\sv_q$ submitted to a solver of
Problem~\ref{prob:DOOM} together with the public key $\Hm$ and the
signature weight $w$. The output of the solver will produce a valid
signature for one of the $q$ messages.  In the security reduction, the
assumption related to DOOM is precisely the same, that is assuming key
indistinguishability and a proper distribution of the signatures, the
adversary has to solve an instance of DOOM as described above and the
reduction is tight in this respect.

The usual Full Domain Hash (FDH) proof for existential forgery would
use SD rather than DOOM and to guaranty a security parameter
$\lambda$, the cost of SD, denoted $\WF$, has to be at least
$q2^\lambda$ where $q\le2^\lambda$ is the number of hash queries.
This would require code parameters $(n,k,w)$ such that
$\WF\ge2^{2\lambda}$. Instead we only require the cost of DOOM to be
at least $2^\lambda$, and even though DOOM is easier than SD, this
will provide a tighter bound and allow smaller parameters.

We denote by $\WF^{1-\delta}$ the workfactor of DOOM when $q$ can be as
large as allowed. It is shown in \cite{S11} that solving DOOM with ISD
with $q$ instances cannot cost less than $\WF/\sqrt{q}$, corresponding
to $\delta=0.33$ and a choice of parameters such that
$\WF\ge2^{1.5\lambda}$. In practice, the situation is more
favorable. When decoding codes of rate $k/n=1/2$ at the
Gilbert-Varshamov bound ($w=0.11\,n$), for Dumer's variant of ISD we
get $\delta\le0.25$ and $\WF\ge2^{1.32\lambda}$. When $w$ grows, the
situation is even better, for a rate $1/2$ and $w=0.19\,n$ (the
signature parameters) we get $\delta\approx0.07$ and
$\WF\ge2^{1.08\lambda}$.

Finally, this means that using state-of-the-art solutions for DOOM, we
only need to increase the code size by $8\%$ compared with SD's
requirement, whereas the usual proof would require to double the
parameters. The rest of this section is devoted to a detailed analysis
leading to this conclusion.

\subsection{ISD -- Information Set Decoding}
The ISD algorithm for solving DOOM is sketched in Algorithm
\ref{algo:isdoom}.
\begin{algorithm}
  \caption{(generalized) ISD}\label{algo:isdoom}
  \begin{algorithmic}[1]
    \State {\bf input:} $\Hm \in \F_{2}^{(n-k)\times n},
    \sv_1,\ldots{},\sv_q \in \F_{2}^{n-k}, w \mbox{ integer}$ 
    \Loop
    \State pick an $n\times n$ permutation matrix $\Pm$
    \State perform partial Gaussian elimination on $\Hm\Pm$
    \begin{displaymath}
      \setlength{\unitlength}{4mm}
      \begin{picture}(23,3)(-4,0)
        \put(-1.1,1){\makebox(0,1)[r]{$\Um\Hm\Pm=$}}
        \put(-0.6,0){\framebox(2.6,1){0}}
        \put(-0.6,1){\framebox(2.6,2){$\Imat_{n-k-\ell}$}}
        \put(2,0){\framebox(3,1){$\Hm'$}}
        \put(2,1){\framebox(3,2){$\Hm''$}}
        \put(5.5,0){\vector(0,1){1}}
        \put(5.5,1){\vector(0,1){2}}
        \put(5.5,0){\vector(0,-1){0}}
        \put(5.5,1){\vector(0,-1){0}}
        \put(6,0){\makebox(0,1)[l]{$\ell$}}
        \put(6,1){\makebox(0,2)[l]{$n-k-\ell$}}
        \put(11.5,1){\makebox(0,1){$\Um\sv_i^T=$}}
        \put(13,1.4){\framebox(1.3,1.4){${\sv''_i}^T$}}
        \put(13,-0.2){\framebox(1.3,1.6){${\sv'_i}^T$}}
        \put(15,1){\makebox(0,1)[l]{$i=1,\ldots{},q$}}
      \end{picture}
    \end{displaymath}
    \State \label{subisd} compute
    $\cE=\{(\ev',i)\in\F_2^{k+\ell}\times\llbracket1,q\rrbracket\mid \Hm'\ev'^T=\sv'_i,
    \wt{\ev'}=p\}$,
    $\Hm' \in \F_{2}^{\ell\times (k+\ell)}$ \ForAll{$(\ev',i)\in\cE$}
    \State $\ev''\gets \ev' \Hm''^T+\sv''_{i}$ ;
    $\ev\gets (\ev'', \ev')\Pm^{T}$ \If{$\wt{\ev}=w$} {\bf return}
    $(\ev,i)$ \EndIf \EndFor \EndLoop
  \end{algorithmic}
\end{algorithm}
In all variants of ISD, the computation of the set $\cE$ (Instruction
\ref{subisd}) dominates the cost of one loop of
Algorithm~\ref{algo:isdoom}, we denote it by $C_q(p,\ell)$. As it is
described, the loop is repeated until a solution is found. The
standard version corresponds to a single instance, that is $q=1$.
Below we explain how the cost estimate of the algorithm varies in
various situations: when we have a single instance and a single
solution, when the number of solutions increases and when the number
of instances ($q$) increases.  For each value of $n$, $k$, $w$ and $q$, the
algorithm is optimized over the parameters $p$ and $\ell$. The optimal
values of $p$ and $\ell$ will change with the number of solutions and
the number of instances.

\subsubsection{Single Instance and Single Solution.}
We consider a situation where we wish to estimate the cost of the
algorithm for producing one specific solution of
Problem~\ref{prob:DOOM} with $q=1$. In that case, even when $w$ is
large and there are multiple solutions, the solution we are looking
for, say $\ev$, is returned if and only if the permutation $\Pm$ is
such that $\wt{\ev'}=p$ and $\wt{\ev''}=w-p$ where
$(\ev'', \ev')\gets \ev\Pm$. This will happen
with probability $\Pc(p,\ell)$ leading to the workfactor $\WF^{(1)}$
\begin{displaymath}
\Pc(p,\ell) = \frac{\binom{n-k-\ell}{w-p}\binom{k+\ell}{p}}{\binom{n}{w}}, \WF^{(1)}=\min_{p,\ell}\frac{C_1(p,\ell)}{\Pc(p,\ell)},
\end{displaymath}
which is obtained by solving an optimization problem over $p$ and
$\ell$.  The exact expression of $C_1(p,\ell)$ depends on the variant,
for instance, for Dumer's algorithm \cite{D91} we have
$C_1(p,\ell)=\max\left({\sqrt{\binom{k+\ell}{p}}}, {\binom{k+\ell}{p}}{2^{-\ell}}\right)$ up to a small polynomial factor.
For more involved variants \cite{BJMM12,MO15}, the value of
$C_1(p,\ell)$ is, for each $(p,\ell)$, the solution of another
optimization problem.

\subsubsection{Single Instance and Multiple Solutions.}
We now consider a situation where there are $M$ solutions to a
syndrome decoding problem ($q=1$). If $w$ is larger than the
Gilbert-Varshamov bound we expect $M=\binom{n}{w}/2^{n-k}$ else
$M=1$. Assuming each of the $M$ solutions can be independently
produced, the probability that one particular iteration produces (at
least) one of the solutions becomes
$\Pc_M(p,\ell)=1-(1-\Pc(p,\ell))^M$. The corresponding workfactor is
\begin{displaymath}
\WF^{(M)} =
\min_{p,\ell}\frac{C_1(p,\ell)}{\Pc_M(p,\ell)}.
\end{displaymath}
Let $(p_0,\ell_0)$ be the optimal value of the pair $(p,\ell)$ for a
single solution.

\smallskip
\noindent{\em Case 1:} $\Pc(p_0,\ell_0)\le 1/M$. We have
$\Pc_M(p_0,\ell_0)=1-(1-\Pc(p_0,\ell_0))^M\ge\mathrm{e}^{-1}
M\Pc(p_0,\ell_0)$ and thus
\begin{displaymath}
  \WF^{(M)} = \min_{p,\ell}\frac{C_1(p,\ell)}{\Pc_M(p,\ell)} \le \frac{C_1(p_0,\ell_0)}{\Pc_M(p_0
    ,\ell_0)}
  \le\frac{\mathrm{e}}{M}\frac{C_1(p_0,\ell_0)}{\Pc(p_0,\ell_0)} = \frac{\mathrm{e}}{M}\WF^{(1)}.
\end{displaymath}
Also remark that $\Pc_M(p,\ell)\le M\Pc(p,\ell)$ and thus
$\WF^{(M)}\ge\WF^{(1)}/M$.  In other words, up to a small constant
factor, the workfactor for multiple solutions is simply obtained by
dividing the single solution workfactor by the number of solutions.

\smallskip
\noindent{\em Case 2:} $\Pc(p_0,\ell_0)> 1/M$.  In this case the
success probability $\Pc_M(p_0,\ell_0)<M\Pc(p_0,\ell_0)$ and the pair
$(p,\ell)$ that minimizes the workfactor is going to be different. We
observe that the gain is much less than the factor $M$ of Case 1.

In practice, and for the parameters we consider in this work, we are
always in Case 2. In fact, for $k/n=0.5$, with Dumer's algorithm Case
1 only applies when $w/n<0.150$, while the Gilbert-Varshamov bound
corresponds to $w/n=0.110$. With BJMM's algorithm, Case 1 only happens
when $w/n \leq 0.117$. In our signature scheme we have $w/n\approx0.19$ and we
always fall in Case 2, even with a single instance.

\subsubsection{Multiples Instances with Multiple Solutions.}
We now consider the case where the adversary has access to $q$
instances for the same matrix $\Hm$ and various syndromes. This is the
Problem~\ref{prob:DOOM} that appears in the security reduction. For
each instance, we expect $M=\max\left(1,\binom{n}{w}/2^{n-k}\right)$
solutions.

As before, the cost is dominated by Instruction \ref{subisd}, we
denote it by $C_q(p,\ell)$, and the probability of success is
$\Pc_{qM}(p,\ell)=1-(1-\Pc(p,\ell))^{qM}$. The overall cost has to be
minimized over $p$ and $\ell$
\begin{displaymath}
\WF_q^{(M)} =
\min_{p,\ell}\frac{C_q(p,\ell)}{\Pc_{qM}(p,\ell)}.
\end{displaymath}
Indeed how to compute $\cE$, and thus the value of $C_q(p,\ell)$, is
not specified in Algorithm~\ref{algo:isdoom}. This is in fact what
\cite{JJ02,S11} are about. For instance with Dumer's algorithm, we
have \cite{S11}
\begin{displaymath}
C_q(p,\ell) = \max\left({\textstyle \sqrt{q\binom{k+\ell}{p}}},
\frac{q\binom{k+\ell}{p}}{2^\ell}\right),\; q\le\binom{k+\ell}{p}
\end{displaymath}
up to a small polynomial factor. Introducing multiple instances in
advanced variants of ISD has not been done so far and is an open
problem. We give in Table~\ref{tab:expo} the asymptotic exponent for
various decoding distances and for the code rate $0.5$. The third
column gives the largest useful value of $q$. It is likely that BJMM's
algorithm will have a slightly lower exponent when addressing multiple
instances. Note that for Dumer's algorithm in this range of parameters,
the improvement from $\WF^{(M)}$ (single instance) to $\WF_q^{(M)}$
(multiple instances) is relatively small, there is no reason to expect
a much different behavior for BJMM.

\begin{table}[htb]
	\centering
	\begin{tabular}{|c|c||c|c|c||c|}\cline{3-6}
		\multicolumn{2}{c|}{} & \multicolumn{3}{c||}{Dumer} & BJMM \\\hline
		$w/n$ & $\frac{1}{n}\log_2M$ & $\frac{1}{n}\log_2q$ & $\frac{1}{n}\log_2\WF_q^{(M)}$ & $\frac{1}{n}\log_2\WF^{(M)}$ & $\frac{1}{n}\log_2\WF^{(M)}$ \\\hline
		0.11 & 0.0000 & 0.0872 & 0.0872 & 0.1152 & 0.1000  \\
		0.15 & 0.1098 & 0.0448 & 0.0448 & 0.0535 & 0.0486 \\
				0.19 & 0.2015 & 0.0171 & 0.0171 & 0.0184 & 0.0175 \\\hline
	\end{tabular}
	
	\caption{Asymptotic Exponent for Algorithm~\ref{algo:isdoom} for
		$k/n=0.5$}
	\label{tab:expo}\vspace{-1cm}
\end{table}
Finally, let us mention that the best asymptotic exponent among all
known decoding techniques was proposed in \cite{MO15}. However it is penalized by a big polynomial overhead
which makes it more expensive at this point for the sizes considered
here.

\subsection{Other Decoding Techniques.}
As mentioned in \cite{CJ04,FS09}, the Generalized Birthday Algorithm
(GBA) \cite{W02} is a relevant technique to solve decoding problems,
in particular when there are multiple solutions. However, it is
competitive only when the ratio $k/n$ tends to 1, and does not apply
here. We refer the reader to \cite{MS09} for more details on GBA and
its usage.

 \section{Distinguishing a permuted $\UV$ code}\label{sec:keyAttack} 

We discuss in this section how hard it is to decide whether a given linear code is a permuted $\UV$-code or not and give the best algorithm
we have found to perform this task. This algorithm is based on a series of works on related problems
\cite{OT11,LT13,GHPT17}.

\subsection{NP-completeness}
\label{ss:NPcomplete}

	The key security of our scheme primarily relies on the problem of deciding whether a linear code is a permuted $(U,U+V)$ code or not, namely:

		\begin{restatable}{problem}{probUVNP}\textit{(}$\UV$-distinguishing\textit{)}~\\
		\label{prob:UVDist}
		\begin{tabular}{ll}
			Instance: & A binary linear code $\cC$ and an integer $k_{U}$, \\
			Question: & Is there a permutation $\pi$ of length $n$ such that $\pi(\cC)$ is a $\UV$-code \\ 
			& where $\dim(U) = k_{U}$ and $|\Sp(V)| = n/2$? 
		\end{tabular}
	\end{restatable}
Here the support of $V$ is defined as the union of the support of its codewords. More precisely the support of a vector $\xv = (x_{i})_{1 \leq i \leq n} \in \F_{2}^{n}$ is defined as:
	$$
	\Sp(\xv) \eqdef \{ i \in \llbracket 1,n\rrbracket \mbox{ } : \mbox{ } x_{i} \neq 0 \}
	$$
	and if $\cC$ denotes a code, we define its support as:
	$$
	\Sp(\cC) \eqdef \mathop{\bigcup}\limits_{\cv \in \cC} \Sp(\cv). 
	$$
 It turns out that this problem is NP-complete:
	\begin{restatable}{theorem}{theoUVNP} \label{theo:UVNP}
		The $(U,U+V)$-distinguishing problem is \textup{NP}-complete.
	\end{restatable}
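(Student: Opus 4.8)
The statement asks for two things: membership in NP, which is routine, and NP-hardness, which carries the real content. I would organize the proof accordingly.

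Membership is easy: a permutation $\pi$ for which $\pi(\cC)$ has the prescribed form is a polynomial-size certificate, and it can be verified in polynomial time by linear algebra. Given $\pi$, split the coordinates of $\pi(\cC)$ into the first and the last $n/2$, compute by Gaussian elimination the projection $U'=\{\xv_L:\xv\in\pi(\cC)\}$ onto the left half and the ``fold'' $V'=\{\xv_L+\xv_R:\xv\in\pi(\cC)\}$, and check the three conditions $\dim\pi(\cC)=\dim U'+\dim V'$, $\dim U'=k_U$, and $|\Sp(V')|=n/2$. These checks are sound and complete: if $\pi(\cC)$ equals $(U,U+V)$ for some codes $U,V$ then necessarily $U=U'$ and $V=V'$ (left-projecting and folding a $(U,U+V)$-code return $U$ and $V$ respectively), and since the map $(\uv,\vv)\mapsto(\uv,\uv+\vv)$ is injective the code $(U',U'+V')$ has dimension $\dim U'+\dim V'$; as $\pi(\cC)\subseteq(U',U'+V')$ always holds, the first condition forces $\pi(\cC)=(U',U'+V')$, and then the other two conditions say exactly that $\dim U=k_U$ and $|\Sp(V)|=n/2$.

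For NP-hardness I would reduce from three-dimensional matching: given three disjoint $m$-element sets $X,Y,Z$ and a family $T\subseteq X\times Y\times Z$, decide whether $T$ contains a perfect matching. The bridge is the reformulation already visible in the membership argument: $\cC$ is a permuted $(U,U+V)$-code with $\dim U=k_U$ and $|\Sp(V)|=n/2$ if and only if there is a partition of the $n$ coordinates into $n/2$ ordered pairs such that, writing $U'$ and $V'$ for the left-projection and the fold along that partition, one has $\dim\cC=\dim U'+\dim V'$, $\dim U'=k_U$, and $V'$ has full support $n/2$. So the reduction must manufacture, from $(X,Y,Z,T)$, a code $\cC$ and an integer $k_U$ in which the admissible coordinate pairings are exactly the perfect matchings of $T$. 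A natural construction attaches a small block of coordinates together with a few gadget codewords to each triple $t=(x,y,z)\in T$, wiring that block to coordinates that represent the base elements $x$, $y$ and $z$, plus padding coordinates and codewords whose only role is to set $n$, $\dim\cC$ and $k_U$ to the desired values. The gadgets are tuned so that the full-support requirement on $V'$ and the exact value $k_U$ of $\dim U'$ together force the pairing to select, for each element of $X\cup Y\cup Z$, exactly one incident triple block to pair with — that is, a perfect matching — while a perfect matching conversely induces such a pairing. One then checks that the resulting parameters fall in the regime $k_U<k_V$, consistent with the observation in the first part of the paper that the hull distinguisher does not apply there.

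The main obstacle is the ``only if'' half of the gadget analysis. Linear codes are flexible objects, so one must rule out spurious pairings: pairings that do not come from any matching of $T$ but nonetheless make $V'$ have full support and $\dim U'$ equal exactly $k_U$. This forces the gadget codewords to pin down the support pattern of every admissible pairing tightly — typically by designating ``forced-partner'' coordinates and letting the full-support condition on $V'$ carry out the combinatorial selection of an incident triple for each base element — while the dimension count $\dim U'=k_U$ must still come out exactly on all remaining degrees of freedom. Making the rigidity and the arithmetic coexist is where the work lies; computing $n$ and $k_U$, writing down a generator matrix for $\cC$, verifying the ``if'' direction, and checking that the whole reduction is polynomial time are then straightforward.
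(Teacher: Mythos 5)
Your membership argument is correct and in fact more explicit than the paper's one-line remark: the verification via the left projection $U'$ and the fold $V'$, with the dimension check $\dim\pi(\cC)=\dim U'+\dim V'$ forcing $\pi(\cC)=(U',U'+V')$, is exactly what is needed. You also correctly identify $3$-dimensional matching as the source of hardness, which is the same choice the paper makes. The gap is that the NP-hardness reduction, which is the entire content of the theorem, is presented only as a plan. There is no concrete generator matrix, no specified gadget, and no soundness argument; the passage ``a natural construction attaches a small block of coordinates \dots\ the gadgets are tuned so that \dots'' states a design goal rather than meeting it, and you yourself concede that the ``only if'' half is ``where the work lies'' without carrying it out.

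It is also worth noting that the mechanism the paper actually uses differs from what your sketch gestures at. Rather than ``forced-partner'' coordinates, the construction pads each $3$DM row with identity blocks so that every row of the generator matrix $\Gm$ has Hamming weight exactly $7$ (a block $I_t(7)$ for the prospective $U$ and a block $(I_s(4)\mid\Gm_{\textup{3DM}})$ for the prospective $V$); the resulting code has minimum distance $7$, achieved only by the rows (Lemma~\ref{lemm:distmin}). Combined with the classification of minimum-weight words of a $(U,U+V)$-code (Lemma~\ref{lemm:distminUV}) and a lower bound $|\Sp(U)|\ge 7t$, this pins the $U$-part down to $t$ rows of $(I_s(4)\mid\Gm_{\textup{3DM}})$ with pairwise disjoint supports, i.e.\ a perfect matching. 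Crucially, the argument needs a constraint on $|\Sp(U)|$ that is \emph{not} part of Problem~\ref{prob:UVDist}, so the paper first proves NP-hardness of an intermediate problem (Problem~\ref{prob:UVDistSup}) carrying that extra constraint, and then shows (Lemma~\ref{lemm:a}) that the constraint is equivalent to bounding the number of all-zero columns of $\Gm$, a polynomially checkable condition, which completes the reduction to Problem~\ref{prob:UVDist}. Your sketch proposes to extract all the needed rigidity from $|\Sp(V')|=n/2$ and $\dim U'=k_U$ alone, without explaining how to rule out spurious pairings; the paper's route suggests that some control on $\Sp(U)$ is needed. Until a concrete construction and a correctness proof are supplied, the NP-hardness half of the theorem remains open.
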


	\noindent This theorem is proved in Appendix \S\ref{subsec:proofUVDist}.

	Moreover, even a weaker version of this problem still stays NP-complete. This problem is related to the output of Algorithm \ref{algo:ComputeV} that we will give in 
Subsection \ref{ss:V}.
It is an algorithm that recovers in a permuted $(U,U+V)$-code the positions that belong to the support of the $V$-code.
This allows to reorder the positions of the public code such that the first half is a permutation of the
first $n/2$ positions of the $\UV$-code whereas the second half is a permutation of the $n/2$ last positions of the $\UV$-code.
Note that we just have to reorder the second half so that it corresponds to a valid $\UV$ code (where the new $U$-code is a permutation of the old $U$-code). In other words the problem we have to solve is the following.
\begin{problem}
	Let $U$ and $V$ be two binary linear codes of length $n/2$ and let $\pi$ be a permutation of length $n/2$.
	Let
	$$
	(U,\pi(U)+V) \eqdef \{(\uv,\pi(\uv)+\vv), \uv \in U, \vv \in V\}.
	$$
	Find a permutation $\pi'$ acting on the right-hand part which gives to $(U,\pi(U)+V)$ the structure of a $\UV$-code, i.e. find $\pi'$ a permutation of length $n/2$ such that $(U,\pi'(\pi(U)+V))=(U,U+V')$ for a certain binary linear code $V'$, where
	$$(U,\pi'(\pi(U)+V)) \eqdef \{(\uv,\pi'(\pi(\uv)+\vv)), \uv \in U, \vv \in V\}.$$
\end{problem}

\begin{remark}
	$\pi'=\pi^{-1}$ is a solution to this problem but there might be other solutions of course.
\end{remark}

The decision problem which is related to this search version is the following.
\begin{problem}[Problem P2': weak $\UV$-distinguishing]
	\label{prob:P1'}
	Consider a binary linear code $\Cc$ of length $n$ where $n$ is even. Do there exist two binary linear codes
	$U$ and $V$ of length $n/2$ and a permutation $\pi$ of length $n/2$ such that
	$$
	\UV = \{(\xv,\pi(\yv)) \mbox{ } : \mbox{ } (\xv,\yv) \in \Cc, \xv \in \F_2^{n/2}, \yv \in \F_2^{n/2}\}.
	$$
\end{problem}

This problem can be viewed as Problem P2 where we have some side information available where we have been revealed the split of the support 
of the $\UV$ construction in the left and the right part, but the left part and the right part have been
permuted ``internally''.
It turns out that this decision problem is already NP-complete
\begin{restatable}{theorem}{thNPcomplete}
	\label{th:NPcomplete}
	The weak-$\UV$ distinguishing Problem P2' is an NP-complete problem.
 \end{restatable}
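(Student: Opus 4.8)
The plan is to reduce the statement to a clean combinatorial question about coordinate permutations (which immediately settles membership in NP), and then to reduce three-dimensional matching --- equivalently, exact cover by $3$-sets --- to it; this is the classical source of NP-hardness for weight-constrained coding problems, and, as the remark following the theorem indicates, the parameters produced by the reduction fall in the regime $k_U<k_V$.

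\emph{A combinatorial reformulation, and NP-membership.} First I would prove the following elementary characterisation: a binary linear code $\Cc'$ of even length $n$, equipped with the coordinate partition into its first and second halves, is a $(U,U+V)$-code if and only if $(\mathbf a,\mathbf a)\in\Cc'$ for every codeword $(\mathbf a,\mathbf b)\in\Cc'$. The forward implication is immediate (take $\mathbf v=\mathbf 0$); for the converse one sets $U=\pi_1(\Cc')$, $V=\{\mathbf v:(\mathbf 0,\mathbf v)\in\Cc'\}$ and verifies both inclusions using $(\mathbf 0,\mathbf a+\mathbf b)=(\mathbf a,\mathbf b)+(\mathbf a,\mathbf a)$. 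Since permuting the last $n/2$ positions leaves $\pi_1(\Cc)$ unchanged, applying this to $\Cc'=\{(\xv,\pi(\yv)):(\xv,\yv)\in\Cc\}$ shows that Problem~\ref{prob:P1'} is a yes-instance for $\Cc$ exactly when there is a permutation $\sigma$ of $\{1,\dots,n/2\}$ with $(\mathbf a,\sigma(\mathbf a))\in\Cc$ for every $\mathbf a$ in a fixed basis of $\pi_1(\Cc)$ (linearity of $\Cc$ and of $\mathbf a\mapsto\sigma(\mathbf a)$ makes checking a basis enough). Hence $\sigma$ is a polynomial-size certificate checkable by solving a few linear systems, and the problem lies in NP.

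\emph{The reduction.} Write $U=\pi_1(\Cc)$, $V_0=\{\mathbf v:(\mathbf 0,\mathbf v)\in\Cc\}$, and fix a linear section $s\colon U\to\F_2^{n/2}$ so that the fibre of $\Cc$ above $\uv$ is $s(\uv)+V_0$; the reformulated question becomes ``is there a coordinate permutation $\sigma$ with $\sigma(\uv)+s(\uv)\in V_0$ for all $\uv$ in a basis of $U$?''. Conversely any triple $(U,V_0,s)$ is realised by the code $\Cc=\{(\uv,s(\uv)+\vv):\uv\in U,\vv\in V_0\}$, so it suffices to build such a triple from a matching instance. Given a family $t_1,\dots,t_m$ of $3$-subsets of a $3q$-element ground set, let $U$ have a basis $\uv_1,\dots,\uv_m$ of pairwise disjoint weight-$3$ supports spanning all first-half coordinates (so the attainable tuples $(\sigma(\uv_1),\dots,\sigma(\uv_m))$ are exactly the ordered partitions of the $n/2$ right-hand coordinates into $m$ weight-$3$ blocks), split the $n/2=3m$ right-hand coordinates into $3q$ ``element'' coordinates and $3(m-q)$ ``dummy'' coordinates, and choose $V_0,s$ so that the weight-$3$ vectors of the coset $s(\uv_j)+V_0$ are precisely $\mathbf 1_{t_j}$ (among the element coordinates) together with all weight-$3$ vectors supported on dummy coordinates. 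A valid $\sigma$ then sends each $\uv_j$ either to $\mathbf 1_{t_j}$ (``$t_j$ selected'') or onto three dummy coordinates (``$t_j$ unselected''), with all $m$ images disjoint; the $3(m-q)$ dummy coordinates permit at most $m-q$ unselected triples while the $3q$ element coordinates permit at most $q$ selected ones, forcing exactly $q$ selected triples that partition the ground set, i.e.\ an exact cover. The converse direction (an exact cover yields such a $\sigma$) is immediate, since every weight-$3$ dummy pattern is admissible for every $\uv_j$.

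\emph{Main obstacle.} The crux is to actually exhibit $V_0$ and the section $s$ with the prescribed coset behaviour \emph{simultaneously for all $j$} while keeping $n$ polynomial: $V_0$ must contain all even-weight vectors supported on the dummy block together with vectors linking each $\mathbf 1_{t_j}$ to a fixed dummy pattern, and one then has to verify that no unintended weight-$3$ word --- mixing element and dummy coordinates, or combining several element triples --- slips into any coset. A second point needing care is ruling out ``degenerate'' permutations: since only a basis of $U$ is constrained, $\sigma$ could a priori exploit a relation $\mathbf 1_{t_{i_1}}+\dots+\mathbf 1_{t_{i_r}}$ lying in the restriction of $V_0$ to the element coordinates; the pairwise-disjoint-support structure of the $\uv_j$ is precisely what blocks this, because it pins $\sigma$ down block by block. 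Once the gadget is fixed, both implications and the bound on $n$ are routine bookkeeping.
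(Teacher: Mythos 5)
Your NP-membership argument is correct and rather nicer than what the paper writes: the characterisation ``$\Cc'$ is a $(U,U+V)$-code iff $(\mathbf a,\mathbf b)\in\Cc'$ implies $(\mathbf a,\mathbf a)\in\Cc'$'' holds, and the ensuing reformulation of Problem P2' (``does there exist a coordinate permutation $\sigma$ of $\{1,\dots,n/2\}$ with $(\mathbf a,\sigma(\mathbf a))\in\Cc$ for every $\mathbf a$ in a basis of the left projection $\pi_1(\Cc)$?'') is exactly right and gives a certificate checkable by solving linear systems. The paper merely asserts membership in NP. Where you and the paper part ways is the hardness reduction, and that is where your proposal has a genuine gap.

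The paper does \emph{not} reduce P2' from three-dimensional matching; that route is used only for the stronger Theorem~\ref{theo:UVNP} (where the permutation ranges over all $n$ coordinates, which gives much more freedom for gadgetry). For the weak problem P2' the paper reduces from the \emph{subcode equivalence} problem (Problem~\ref{prob:SE}), taken NP-complete from \cite{BGK17}. Given codes $\cC,\cD$ of length $n/2$, it forms the direct sum $\cC\oplus\cD$ with block-diagonal generator matrix; applying your own characterisation to the permuted code $\cC\oplus\pi(\cD)$ shows immediately that it is a $(U,U+V)$-code iff $\cC\subseteq\pi(\cD)$, i.e.\ iff $\pi^{-1}(\cC)\subseteq\cD$. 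So P2' on the direct sum is literally the subcode equivalence question, and no gadget is needed at all.

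Your 3DM/X3C reduction, by contrast, is only sketched, and the ``crux'' you flag is not a routine verification but the whole difficulty. To put all dummy weight-$3$ patterns $\delta$ into the coset $s(\uv_j)+V_0$ you are forced to put $\mathbf 1_{t_j}+\delta$ into $V_0$ for every $j$ and every $\delta$; this immediately forces $\mathbf 1_{t_j}+\mathbf 1_{t_{j'}}\in V_0$ for all $j,j'$, so \emph{all} the cosets $s(\uv_j)+V_0$ collapse to a single coset, which in particular contains every $\mathbf 1_{t_{j'}}$ and every odd-size sum $\sum_{j'\in S}\mathbf 1_{t_{j'}}$ that happens to have weight $3$ --- and such sums need not correspond to any triple $t_{j'}$. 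Your remark that ``the pairwise-disjoint-support structure of the $\uv_j$ is precisely what blocks this'' does not address the problem: that structure only forces the images $\sigma(\uv_1),\dots,\sigma(\uv_m)$ to partition the right half of the positions, a constraint that a spurious selection of weight-$3$ coset elements could still satisfy. You would need to argue that any admissible partition can always be converted into an exact cover by the original triples (or redesign $V_0$ so that the cosets do not collapse), and no such argument is supplied. As written, the reduction is therefore incomplete, and the paper's subcode-equivalence route is the one that closes the theorem.
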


The proof of this theorem is also given in Appendix \S\ref{subsec:proofUVDist}.

\subsection{Main idea used in the algorithms distinguishing or recovering the structure  of a $\UV$-code we present here}
A $(U,U+V)$ code where $U$ and $V$ are random seems very close to a random linear code. There is for instance only a very slight difference between the weight distribution of a random linear code and the weight distribution of a random $(U,U+V)$-code of the same length and dimension. This slight difference happens for small and large weights and is due to codewords of the form $(\uv,\uv)$ where $\uv$ belongs to $U$ or codewords of the form
$(\mathbf{0},\vv)$ where $\vv$ belongs to $V$. More precisely, we have the following proposition
\begin{restatable}{proposition}{propdensity}
  \label{prop:density}
  Assume that we choose a $(U,U+V)$ code  by picking
  the parity-check matrices of $U$ and $V$  uniformly at random among the
  binary matrices of size $(n/2-k_U) \times n/2$ and $(n/2-k_V) \times n/2$ respectively.
  Let $a_{(U,U+V)}(w)$,  $a_{(U,U)}(w)$ and $a_{(0,V)}(w)$ be the expected number of codewords of weight 
  $w$ that are respectively in the $(U,U+V)$ code,  of the form $(\uv,\uv)$ where $\uv$ belongs to 
  $U$ and of the form $(\mathbf{0},\vv)$ where $\vv$ belongs to $V$.
  These numbers are given for even $w$ in $\{0,\dots,n\}$   by
  \begin{eqnarray*}
    a_{(U,U+V)}(w) & =& \frac{\binom{n/2}{w/2}}{2^{n/2-k_U}} + \frac{\binom{n/2}{w}}{2^{n/2-k_V}} + \frac{1}{2^{n-k_U-k_V}} \left( \binom{n}{w}-\binom{n/2}{w}-\binom{n/2}{w/2}\right)
  \end{eqnarray*}
  \begin{displaymath}
    a_{(U,U)}(w)  = \frac{\binom{n/2}{w/2}}{2^{n/2-k_U}} \quad ; \quad 
    a_{(0,V)}(w) =  \frac{\binom{n/2}{w}}{2^{n/2-k_V}}
  \end{displaymath}
  and for odd $w$ in $\{0,\dots,n\}$   by
  \begin{eqnarray*}
    a_{(U,U+V)}(w) & =&   \frac{\binom{n/2}{w}}{2^{n/2-k_V}} + \frac{1}{2^{n-k_U-k_V}} \left( \binom{n}{w}-\binom{n/2}{w}\right)
  \end{eqnarray*}
  \begin{displaymath}
    a_{(U,U)}(w)  =   0 \quad ; \quad 
    a_{(0,V)}(w)  = \frac{\binom{n/2}{w}}{2^{n/2 - k_{V}}}
  \end{displaymath}
  On the other hand, when we choose a code of length $n$ with a random parity-check matrix of size $(n-k_U-k_V)\times n$ 
  chosen uniformly at random, then the expected number $a(w)$ of codewords of weight $w>0$ is given by
  \begin{displaymath}
    a(w) = \frac{\binom{n}{w}}{2^{n-k_U-k_V}}.
  \end{displaymath}
\end{restatable}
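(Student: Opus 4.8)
The plan is to obtain every formula by a first-moment computation. The only probabilistic input needed is the elementary fact that if $\Hm$ is drawn uniformly among the $r\times m$ binary matrices and $\xv\in\F_2^m$, then $\prob[\Hm\xv^T=\mathbf 0]$ equals $1$ when $\xv=\mathbf 0$ and $2^{-r}$ otherwise: the $r$ rows act as independent uniformly random linear forms on $\F_2^m$, and a fixed nonzero $\xv$ lies in the kernel of each with probability $1/2$. Applied to $\Hm_U$ and $\Hm_V$ this gives $\prob[\uv\in U]\in\{1,2^{-(n/2-k_U)}\}$ and $\prob[\vv\in V]\in\{1,2^{-(n/2-k_V)}\}$ according to whether the vector is zero, and since $\Hm_U$ and $\Hm_V$ are chosen independently these two events are independent. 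The formulas $a_{(U,U)}(w)$, $a_{(0,V)}(w)$ and $a(w)=\binom{n}{w}/2^{n-k_U-k_V}$ are then immediate: sum the relevant indicator probabilities over the $\binom{n/2}{w/2}$ vectors $\uv$ of weight $w/2$ (only for even $w$, giving $(\uv,\uv)$ of weight $w$), over the $\binom{n/2}{w}$ vectors $\vv$ of weight $w$, and over the $\binom{n}{w}$ vectors of weight $w$ in $\F_2^n$, respectively.

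For $a_{(U,U+V)}(w)$ I would first note that $(\uv,\vv)\mapsto(\uv,\uv+\vv)$ is a bijection of $U\times V$ onto $(U,U+V)$, so the expected number of weight-$w$ codewords equals $\sum\prob[\uv\in U]\prob[\vv\in V]$ over all pairs $(\uv,\vv)\in(\F_2^{n/2})^2$ with $|\uv|+|\uv+\vv|=w$. Split this sum according to whether $\uv$ and $\vv$ vanish. The pair $(\mathbf 0,\vv)$ with $\vv\neq\mathbf 0$ contributes only when $|\vv|=w$, giving exactly $a_{(0,V)}(w)$; the pair $(\uv,\mathbf 0)$ with $\uv\neq\mathbf 0$ has weight $2|\uv|$, hence contributes $\binom{n/2}{w/2}/2^{n/2-k_U}$ for even $w$ and nothing for odd $w$, which is exactly $a_{(U,U)}(w)$; and $(\mathbf 0,\mathbf 0)$ only matters at $w=0$. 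For the generic term, with $\uv\neq\mathbf 0\neq\vv$, I substitute $\cv\eqdef\uv+\vv$ (a bijection on pairs), turning the condition into $|\uv|+|\cv|=w$ with $\uv\neq\mathbf 0$ and $\cv\neq\uv$. The number of pairs $(\uv,\cv)$ with $|\uv|+|\cv|=w$ is $\sum_i\binom{n/2}{i}\binom{n/2}{w-i}=\binom{n}{w}$ by Vandermonde; from this I remove the $\binom{n/2}{w}$ pairs with $\uv=\mathbf 0$ and, when $w$ is even, the $\binom{n/2}{w/2}$ pairs with $\cv=\uv$. Multiplying the result by $2^{-(n-k_U-k_V)}$ and adding the two boundary contributions gives precisely the claimed expression, the odd-$w$ case differing only by the absence of the $\binom{n/2}{w/2}$ terms.

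There is no genuine difficulty here; the computation is entirely elementary once the first-moment viewpoint and the $2^{-r}$ kernel probability are in place. The only point that requires a little care is the bookkeeping in the case split — in particular checking that the two degenerate subfamilies $\{\uv=\mathbf 0\}$ and $\{\cv=\uv\}$ appearing in the generic term are disjoint once $w>0$ (their intersection forces $w=0$), so that plain subtraction rather than inclusion–exclusion is correct, and tracking the parity of $w$ consistently through all the terms.
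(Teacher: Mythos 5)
Your proof is correct and follows essentially the same route as the paper's: a first-moment computation using the uniform-kernel probability $2^{-r}$ for a random parity-check matrix, a case split according to whether the $U$-component and the $V$-component vanish, and the Vandermonde identity for the generic count. The only cosmetic difference is that you parametrize by pairs $(\uv,\vv)$ via the bijection $(\uv,\vv)\mapsto(\uv,\uv+\vv)$ whereas the paper sums directly over codewords $\xv=(\xv_1,\xv_2)$ of weight $w$; the two decompositions coincide under $\uv=\xv_1$, $\vv=\xv_1+\xv_2$.
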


\begin{remark}
  When the $(U,U+V)$ code is chosen in this way, its dimension is $k_U+k_V$ with probability $1-O\left(\max(2^{k_U-n/2},2^{k_V-n/2})\right)$. This also holds for the random codes of length $n$.
\end{remark}
We have plotted in Figure \ref{fig:density} the normalized logarithm of the density of codewords of the form $(\uv,\uv)$ and $(\mathbf{0},\vv)$ of relative 
{\em even} weight $x \eqdef \frac{w}{n}$ against $x$ in the case $U$ is of rate $\frac{k_U}{n/2}=0.6$ and
$V$ is of rate $\frac{k_V}{n/2}=0.4$. These two relative densities are defined respectively by
\begin{displaymath}
  \alpha_{(U,U)}(w/n) = \frac{\log_2(a_{(U,U)}(w)/a(w))}{n} \quad ; \quad 
  \alpha_{(0,V)}(w/n) =  \frac{\log_2(a_{(0,V)}(w)/a(w))}{n}
\end{displaymath}
We see that for a relative weight $w/n$ below approximately $0.18$ almost all the codewords are of the form $(\mathbf{0},\vv)$ in this case.

\begin{figure}
  \centering
  \includegraphics[scale = 0.2,height=6cm]{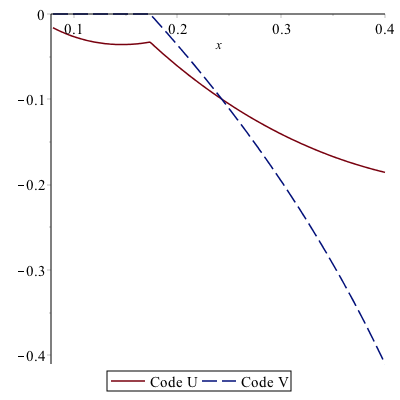}
  \caption{$\alpha_{(U,U)}(w/n)$ and $\alpha_{(0,V)}(w/n)$ against $x \eqdef \frac{w}{n}$.\label{fig:density}}
\end{figure}

Since the weight distribution is invariant by permuting the positions, this slight difference also survives
in the permuted version of $(U,U+V)$. These considerations lead to the
best attack we have found for recovering the structure of a permuted $(U,U+V)$ code.
It consists in applying known algorithms aiming at recovering low weight codewords in a linear code.
We run such an algorithm until getting  at some point either a permuted $(\uv,\uv)$ codeword where $\uv$ is in $U$ or
a permuted $(\mathbf{0},\vv)$ codeword where $\vv$ belongs to $V$.  
The rationale behind this algorithm is that the
density of codewords of the form $(\uv,\uv)$ or $(\mathbf{0},\vv)$ is bigger when the weight of the codeword gets smaller.

Once we have such a codeword we can bootstrap from there
very similarly to what has been done in \cite[Subs. 4.4]{OT11}.
Note that this attack is actually very close in spirit to the attack that was devised on the KKS signature scheme \cite{OT11}.
In essence, the attack against the KKS scheme really amounts to recover the support of the $V$ code. 
The difference with the KKS scheme is that the support of $V$ is much bigger in our case. As explained in the conclusion of \cite{OT11} the attack against the KKS scheme has in essence
an exponential complexity. This exponent becomes really prohibitive in our case when the parameters of $U$ and $V$
are chosen appropriately as we will now explain.

\subsection{Recovering the $V$ code up to a permutation}
\label{ss:V}

The aforementioned attack recovers $V$ up to some permutation of the positions. In a first step it recovers a basis of 
\begin{displaymath}
  V' \eqdef (0,V)\Pm = \{(\mathbf{0},\vv)\Pm: \vv \in V\}.
\end{displaymath}
Once this is achieved, the support $\Sp(V')$ of $V'$ can be obtained. Recall that this is the set of positions for which there exists at least one codeword
of $V'$ that is non-zero in this position. This allows to recover the code $V$ up to some permutation. 
The basic algorithm for recovering the support of $V'$ and a basis of $V'$ is given in Algorithm 
\ref{algo:ComputeV}.

\begin{algorithm}[htbp]
  \textbf{Parameters: }  (i) $\ell$ : small integer ($\ell \leqslant 40$),\\
  (ii) $p$ : very small integer (typically $1 \leqslant p
  \leqslant 10$).\\
  {\bf Input:} (i) $\Cpub$ the public code used for verifying signatures.\\
  (ii)       $N$ a certain number of iterations\\
  {\bf Output:} an independent set of elements in $V'$
  \begin{algorithmic}[1]
    \Function{ComputeV}{$\Cpub$,$N$}
    \For{$i=1,\dots,N$}
    \State $B \leftarrow \emptyset$
    \State Choose a set $I\subset \{1,\dots,n\}$  of size $n-k-\ell$ uniformly at random
    \State $\Lc \leftarrow$ \Call{Codewords}{$\punc_{I}(\Cpub),p$} \label{l:codewords}
    \ForAll{$\xv \in \Lc$}
    \State $\xv \leftarrow$ \Call{Complete}{$\xv,I,\Cpub$}
    \If{\Call{CheckV}{$\xv$}}
    \State add $\xv$ to $B$ if $\xv \notin <B>$ 
    \EndIf
    \EndFor
    \EndFor
    \State \Return $B$
    \EndFunction
  \end{algorithmic}
  \caption{\textsf{ComputeV}: algorithm that computes a set of independent elements in $V'$.} \label{algo:ComputeV}
\end{algorithm}
It uses other auxiliary functions
\begin{itemize}
\item \textsf{Codewords}$(\punc_{I}(\Cpub),p)$ which computes all (or a big fraction of) codewords of weight $p$ of the punctured public code 
  $\punc_{I}(\Cpub)$. All modern \cite{D91,FS09,MMT11,BJMM12,MO15} algorithms for decoding linear codes perform such 
  a task in their inner loop.
\item \textsf{Complete}$(\xv,I,\Cpub)$ which computes the codeword $\cv$ in $\Cpub$ such that its restriction outside $I$ is equal to $\xv$.
\item \textsf{CheckV}$(\xv)$ which checks whether $\xv$ belongs to $V'$. 
\end{itemize}

\subsubsection{Choosing $N$ appropriately.}
Let us first analyze how we have to choose $N$ such that
\textsf{ComputeV} returns $\Omega(1)$ elements. This is essentially
the analysis which can be found in \cite[Subsec 5.2]{OT11}.  This
analysis leads to
\begin{proposition}
  The probability $\Psucc$ that one iteration of the for loop (Instruction 2) in Algorithm
  \ref{algo:ComputeV} adds elements to the list $B$ is lower-bounded by 
  \begin{equation}
    \Psucc \geq \sum_{w=0}^{n/2} \frac{\binom{n/2}{w} \binom{n/2}{n-k-\ell-w}}{\binom{n}{n-k-\ell)}} f\left(  \binom{n/2-w}{p}2^{k_V+w-n/2} \right)
  \end{equation}
  where $f$ is the function 
  defined by 
  $f(x) \eqdef \max \left(x(1-x/2),1-\frac{1}{x} \right)$.
  Algorithm \ref{algo:ComputeV} returns a non zero list with probability $\Omega(1)$ when $N$ is chosen as 
  $N = \Omega\left( \frac{1}{\Psucc}\right)$.
\end{proposition}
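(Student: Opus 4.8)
The plan is to follow the analysis of \cite[Subsec.~5.2]{OT11}. First I would reduce the success of one iteration of the for loop (Instruction~2) to a purely code-theoretic event. Writing $V'=(0,V)\Pm\subseteq\Cpub$ for the permuted copy of $V$ inside the public code, I would note that for a random public code the shortened code of $\Cpub$ on $I$ is trivial, so \textsf{Complete} is well defined and, identifying a weight-$p$ codeword of $\punc_I(\Cpub)$ with its unique lift in $\Cpub$, the iteration adds an element to $B$ if and only if there is a nonzero $\xv\in V'$ with $|\xv_{\bar I}|=p$, i.e.\ if and only if the punctured code $\punc_I(V')$ contains a codeword of weight $p$. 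Here I assume, as usual, that \textsf{Codewords} returns all weight-$p$ codewords of the punctured code; the inner-loop success probability of the chosen ISD variant is accounted for separately.

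Next I would condition on $|I\cap\Sp(V')|=w$. Since $|I|=n-k-\ell$ and $|\Sp(V')|=|\overline{\Sp(V')}|=n/2$, this event has probability $\binom{n/2}{w}\binom{n/2}{n-k-\ell-w}\big/\binom{n}{n-k-\ell}$ by the standard hypergeometric count, which produces the outer sum. On this event $|\bar I\cap\Sp(V')|=n/2-w$ and $\punc_I(V')$ is the projection of $V'$ onto these $n/2-w$ coordinates; invoking the randomness of $V$ (the random-linear-code heuristic that underlies this whole section) this projection behaves like a uniformly random linear code of length $n/2-w$ and dimension $\min(k_V,n/2-w)$, whose expected number of nonzero weight-$p$ codewords equals, up to a $1-o(1)$ factor, $x_w\eqdef\binom{n/2-w}{p}\,2^{k_V+w-n/2}$.

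It then remains to show that a random linear code whose number $N$ of nonzero weight-$p$ codewords has mean $x$ satisfies $\prob(N\ge1)\ge f(x)$. Two fixed distinct nonzero vectors of $\F_2^{n'}$ are automatically independent over $\F_2$, so they both lie in a random $[n',k']$ code with probability at most $2^{2(k'-n')}$; hence $\esp\bigl(N(N-1)\bigr)\le(\esp N)^2$. The Bonferroni inequality $\prob(N\ge1)\ge\esp N-\tfrac12\esp\bigl(N(N-1)\bigr)$ then gives $\prob(N\ge1)\ge x(1-x/2)$, and the Paley--Zygmund inequality $\prob(N\ge1)\ge(\esp N)^2/\esp(N^2)\ge x^2/(x+x^2)=1-1/(1+x)\ge1-1/x$ gives the complementary bound; together they yield $\prob(N\ge1)\ge\max\left(x(1-x/2),\,1-\tfrac1x\right)=f(x)$. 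Multiplying $f(x_w)$ by the hypergeometric weight and summing over $w$ gives the stated lower bound on $\Psucc$. For the final claim, the $N$ iterations are independent and each succeeds with probability at least $\Psucc$, so the probability of a non-empty output is at least $1-(1-\Psucc)^N\ge1-\mathrm e^{-N\Psucc}=\Omega(1)$ once $N=\Omega(1/\Psucc)$.

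The hard part is the second-moment estimate and making $f(x)$ come out uniformly in $w$: one must handle the degenerate regime $n/2-w<k_V$, where $\punc_I(V')$ actually fills the whole space $\F_2^{n/2-w}$ so that $N=\binom{n/2-w}{p}$ deterministically and $x_w\ge\binom{n/2-w}{p}\ge1$, whence the true probability $1$ still dominates $f(x_w)\le1$; and one must absorb the $1-o(1)$ discrepancy between $\esp N$ and $x_w$ into a negligible additive loss in $f$. As always in this section, the genuinely non-rigorous ingredient is the heuristic that the relevant punctured codes behave like uniformly random linear codes, exactly as already used for the weight-distribution estimates of Proposition~\ref{prop:density} and in \cite{OT11}.
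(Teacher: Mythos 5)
Your argument is correct and follows the same overall strategy as the paper: condition on the random variable $W=|I\cap\Sp(V')|$, produce the hypergeometric weight, and reduce the conditional success probability to a lower bound of the form $f(x_w)$ on the probability that a random linear code meets a prescribed set. The one place where you diverge from the paper is the object to which you apply the random-code argument. You puncture $V'$ on $I$ and then assume that $\punc_{I'}(V')$ itself behaves like a fresh uniformly random $[n/2-w,\min(k_V,n/2-w)]$-code; you explicitly flag this as a heuristic. The paper avoids that heuristic entirely: instead of passing to the punctured code, it applies Lemma~\ref{lem:lower_bound} (Lemma~3 of \cite{OT11}) directly to $V'$ restricted to its support (which is a \emph{genuinely} random $[n/2,k_V]$-code by the construction of the scheme) with the target set $X=\{\xv\in\F_2^{n/2}:|\xv_{\bar{I'}}|=p\}$ of size $\binom{n/2-w}{p}2^w$. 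Since $\punc_{I'}(V')$ has a weight-$p$ codeword iff $X\cap V'\ne\emptyset$, this yields the same $x_w=\binom{n/2-w}{p}2^{k_V+w-n/2}$ while keeping the randomness assumption exactly where it belongs. If you reformulate your second-moment argument for $N_X=|X\cap V'|$ rather than for the number of weight-$p$ words of the punctured code, the heuristic disappears and your proof becomes fully rigorous; this is worth doing since over $\F_2$ any two distinct nonzero vectors in $X$ are automatically linearly independent, so $\esp(N_X(N_X-1))\le(\esp N_X)^2$ holds for this set $X$ just as for yours. Finally, your Bonferroni/Paley--Zygmund derivation of $\prob(N\ge1)\ge\max(x(1-x/2),1-1/x)=f(x)$ is correct and is essentially a self-contained proof of the \cite{OT11} lemma that the paper cites as a black box, which is a nice bonus; the concluding $1-(1-\Psucc)^N=\Omega(1)$ step matches the paper's claim as well.
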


\begin{proof}
  It will be helpful to recall \cite[Lemma 3]{OT11}

  \begin{lemma}
    \label{lem:lower_bound}
    Choose a random code $\Cr$ of length $n$ from a parity-check matrix of size $r \times n$ chosen uniformly at random
    in $\mathbb{F}_{2}^{r \times n}$.
    Let $X$ be some subset of $\mathbb{F}_{2}^n$ of size $m$.
    We have
    \begin{displaymath}
      \prob(X \cap \Cr \neq \emptyset) \geq f\left( \frac{m}{2^{r}}\right).
    \end{displaymath}
  \end{lemma}
  To lower-bound the 
  probability $\Psucc$ that an iteration is successful, we
  bring in the following random variables 
  \begin{displaymath}
    I' \eqdef I \cap \supp(I'') ~~~\mbox{and}~~~ W \eqdef \left| I'\right|
  \end{displaymath}
  where $I''$ is the set of positions that are of the images of the permutation $\Pm$ of the $n/2$ last positions. \textsf{ComputeV} outputs at least one element of $V'$ if
  there is an element of weight $p$ in $\punc_{I'}(V')$.
  Therefore the probability of success $\Psucc$ is given by
  \begin{equation}
    \label{eq:Psucc}
    \Psucc = \sum_{w=0}^{n/2}  \prob(W=w) 
    \prob\left(\exists \xv \in V' : |\xv_{\bar{I'}}|=p \mbox{ }| \mbox{ } W=w  \right)
  \end{equation}
  where $\bar{I'} \eqdef \supp(V') \setminus I'$.
  On the other hand, by using Lemma \ref{lem:lower_bound} with the
  set 
  \begin{displaymath}
    X \eqdef \left\{\xv=(x_j)_{j \in \supp(V')}: |\xv_{\bar{I'}}|=p  \right\}
  \end{displaymath}
  which is of size $\binom{n/2-w}{p} 2^{w}$, we obtain
  \begin{equation}
    \label{eq:p1}
    \prob\left(\exists \xv \in V' : |\xv_{\bar{I'}}|=p |W=w  \right)
    \geq f(x).
  \end{equation}
  with 
  \begin{displaymath}
    x \eqdef \frac{\binom{n/2-w}{p} 2^{w}}{2^{n/2-k_V}}= \binom{n/2-w}{p}2^{k_V+w-n/2}
  \end{displaymath}
  The first quantity is clearly equal to
  \begin{equation}
    \label{eq:pw1w2}
    \prob(W=w) = \frac{\binom{n/2}{w} \binom{n/2}{n-k-\ell-w}}
    {\binom{n}{n-k-\ell}}.
  \end{equation}
  Plugging in the expressions obtained in \eqref{eq:p1} and \eqref{eq:pw1w2} in \eqref{eq:Psucc} we have an explicit 
  expression of a lower bound on $\Psucc$
  \begin{equation}
    \Psucc \geq \sum_{w=0}^{n/2} \frac{\binom{n/2}{w} \binom{n/2}{n-k-\ell-w}}{\binom{n}{n-k-\ell)}} f\left(  \binom{n/2-w}{p}2^{k_V+w-n/2} \right)
  \end{equation}
  The claim on the 
  number $N$ of iterations follows directly from this. $\qed$ 
\end{proof}

\subsubsection{Complexity of recovering a permuted version of $V$.}
The complexity of a call to \textsf{ComputeV} can be estimated as follows. The complexity of computing the list of codewords of weight $p$ in a 
code of length $k+\ell$ and dimension $k$ is equal to $C_1(p,\ell)$ (this quantity is introduced in \S\ref{sec:attsourcedistortion}). It depends on the particular algorithm used here \cite{D91,FS09,MMT11,BJMM12,MO15}. This is  the complexity of the call \textsf{Codewords}$(\punc_{I}(\Cpub),p)$ in Step 
\ref{l:codewords} in Algorithm \ref{algo:ComputeV}. The complexity of  \textsf{ComputeV} and hence the complexity of recovering a permuted version of 
$V$ is clearly lower bounded by
$\Omega\left( \frac{C_1(p,\ell)}{\Psucc} \right)$. 
It turns out that the whole complexity of recovering 
a permuted version of $V$ is actually of this order, namely $ \Theta\left( \frac{C_1(p,\ell)}{\Psucc} \right)$. This can be done by a combination of two techniques
\begin{itemize}
\item Once a non-zero element of $V'$ has been identified, it is much easier to find other ones. This uses one of the tricks for breaking the KKS scheme
  (see \cite[Subs. 4.4]{OT11}). The point is the following: if we start again the procedure \textsf{ComputeV}, but this time by choosing a set $I$
  on which we puncture the code which contains the support of the codeword that we already found, then the number $N$ of iterations that we have to perform until finding a new element is negligible
  when compared to the original value of $N$. 
\item The call to \textsf{CheckV} can be implemented in such a way that the additional complexity coming from all the calls to this function is of the same order as the $N$ calls 
  to \textsf{Codewords}. The strategy to adopt depends on the values of the dimensions $k$ and $k_V$. In certain cases, it is easy to detect such codewords since they have 
  a typical weight that is significantly smaller than the other codewords. In more complicated cases, we might have to combine a technique checking first the weight of $\xv$, if it is
  above some prescribed threshold, we decide that it is not in $V'$, if it  is below the threshold, we decide that it is a suspicious candidate and use then the previous trick.
  We namely check whether  
  the support of the codeword $\xv$ can be used to find other suspicious candidates much more quickly than performing $N$ calls to \textsf{CheckV}.
\end{itemize}
To keep the length of this paper within some reasonable limit we avoid here giving the analysis of those steps and we will just use
the aforementioned lower bound on the complexity of recovering a permuted version of $V$.  

\subsection{Recovering the $U$ code up to permutation}	

We consider here the permuted code
\begin{displaymath}
  U' \eqdef (U,U)\Pm = \{(\uv,\uv)\Pm: \uv \in U\}.
\end{displaymath}
The attack in this case consists in recovering a basis of $U'$. Once this is done, it is easy to recover the $U$ code up to permutation by matching the pairs of coordinates which are equal in $U'$. The algorithm for recovering
$U'$ is the same as the algorithm for recovering $V'$.  We call the
associated function \textsf{ComputeU} though since they differ in the
choice for $N$. The analysis is slightly different indeed.

\subsubsection{Choosing $N$ appropriately.} As in the previous subsection let us analyze  how we have to choose $N$ in order that \textsf{ComputeU} returns 
$\Omega(1)$ elements of $U'$. 
We have in this case the following result.

\begin{proposition}
  The probability $\Psucc$ that one iteration of the for loop (Instruction 2) in \textsf{ComputeU} 
  adds elements to the list $B$ is lower-bounded by 
  \begin{equation}
    \Psucc \geq \sum_{w=0}^{n/2}  \frac{  \binom{n/2}{w} \binom{n/2-w}{k+\ell-2w}2^{k+\ell-2w}}
    {\binom{n}{k+\ell}} \max_{i=0}^{\lfloor p/2 \rfloor} f\left(\frac{\binom{k+\ell-2w}{p-2i} \binom{w}{i} }{2^{\max(0,k+\ell-w-k_U)}}\right)
  \end{equation}
  where $f$ is the function 
  defined by 
  $f(x) \eqdef \max \left(x(1-x/2),1-\frac{1}{x} \right)$.
  \textsf{ComputeU}  returns a non zero list with probability $\Omega(1)$ when $N$ is chosen as 
  $N = \Omega\left( \frac{1}{\Psucc}\right)$.
\end{proposition}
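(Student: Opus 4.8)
The plan is to reproduce, with the combinatorial change forced by the ``doubled'' structure of $U'=(U,U)\Pm$, the argument used for the analogous statement about \textsf{ComputeV}; the only external ingredient is Lemma~\ref{lem:lower_bound}, which bounds from below the probability that a random code meets a fixed set of words. Write $\sigma$ for the permutation realized by $\Pm$ and call a \emph{pair} any two-element set $\{\sigma(t),\sigma(t+n/2)\}$ with $t\in\{1,\dots,n/2\}$; these $n/2$ pairs partition $\{1,\dots,n\}$ and every codeword of $U'$ takes equal values on the two positions of a pair. One iteration of the for loop keeps the $k+\ell$ positions of $\bar I=\{1,\dots,n\}\setminus I$; let $W$ be the number of pairs entirely contained in $\bar I$. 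Counting (choose the $w$ complete pairs, then the $k+\ell-2w$ pairs meeting $\bar I$ in exactly one position, then which of the two positions is kept) gives
\[
\prob(W=w)=\frac{\binom{n/2}{w}\binom{n/2-w}{k+\ell-2w}\,2^{\,k+\ell-2w}}{\binom{n}{k+\ell}}.
\]

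Next I would condition on $W=w$. Then $\bar I$ consists of $w$ complete pairs and $k+\ell-2w$ half-pairs, so the restriction to $\bar I$ of a codeword $(\uv,\uv)\Pm\in U'$ is a function of $\uv_J$ alone, where $J\subset\{1,\dots,n/2\}$ is the set of $|J|=k+\ell-w$ coordinates of $\uv$ that occur, $w$ of them ``doubled''. That restriction has weight $p$ exactly when $\uv_J$ has $i$ ones among the $w$ doubled coordinates and $p-2i$ ones among the other $k+\ell-2w$, for some $0\le i\le\lfloor p/2\rfloor$; the set of such patterns $\uv_J$ has size $\binom{w}{i}\binom{k+\ell-2w}{p-2i}$. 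Because $U$ is drawn with a uniform generator matrix, the code $\{\uv_J:\uv\in U\}$ is, for fixed $I$, a uniformly random code of length $k+\ell-w$ and dimension $\min(k_U,k+\ell-w)$, i.e.\ of codimension $\max(0,k+\ell-w-k_U)$, and it is independent of those target sets (which depend only on $I$ and $\sigma$). Applying Lemma~\ref{lem:lower_bound} with $r=\max(0,k+\ell-w-k_U)$ lower-bounds, for a fixed $i$, the probability that some codeword of $U'$ has weight exactly $p$ on $\bar I$ by $f\left(\binom{w}{i}\binom{k+\ell-2w}{p-2i}/2^{\max(0,k+\ell-w-k_U)}\right)$; taking the best $i$ and averaging over $w$ against $\prob(W=w)$ yields the claimed lower bound on $\Psucc$. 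As in the \textsf{ComputeV} analysis, whenever such a codeword exists it is returned by the inner call to \textsf{Codewords}, lifted by \textsf{Complete} and accepted by the membership check, so its existence lower-bounds the probability that the iteration adds an element to $B$. The statement about $N$ is then immediate: the $N$ iterations are independent and each succeeds with probability at least $\Psucc$, so at least one succeeds with probability $1-(1-\Psucc)^N\ge 1-e^{-N\Psucc}$, which is $\Omega(1)$ as soon as $N=\Omega(1/\Psucc)$.

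The main obstacle here is purely bookkeeping. One has to check carefully that, conditionally on $W=w$, the restriction of a $U'$-codeword to $\bar I$ is governed by exactly $k+\ell-w$ coordinates of $\uv$, of which $w$ are doubled, so that the codimension fed to $f$ comes out as $\max(0,k+\ell-w-k_U)$ and the target-set sizes come out as $\binom{w}{i}\binom{k+\ell-2w}{p-2i}$; and one should note that the coordinate projection of the random code $U$ is ``random enough'' for Lemma~\ref{lem:lower_bound} (which is immediate in the uniform-generator-matrix model) and that no injectivity of the puncturing is needed, since we only seek a lower bound.
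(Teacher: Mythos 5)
Your proof follows essentially the same route as the paper: you condition on the number $W$ of "matched" (doubled) pairs of positions lying entirely in $\bar I$, observe that the restriction of a $U'$-codeword to $\bar I$ is governed by $k+\ell-w$ independent coordinates of $\uv$ of which $w$ are doubled, count the target patterns of weight $p$ as $\binom{w}{i}\binom{k+\ell-2w}{p-2i}$, apply Lemma~\ref{lem:lower_bound} to the punctured random code of codimension $\max(0,k+\ell-w-k_U)$, take the maximum over $i$, and average against $\prob(W=w)$. The only superficial difference is that you derive $\prob(W=w)$ by a direct count whereas the paper invokes Proposition~\ref{prop:unifDistrib}; the resulting formula and the rest of the argument coincide.
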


\begin{proof}
  Here the crucial notion is the concept of {\em matched positions}. We say that two positions $i$ and $j$ are matched if and only if
  $c_i=c_j$ for every $\cv \in U'$. There are clearly $n/2$ pairs of matched positions.
  $W$ will now be defined by the number of matched pairs that are included in $\{1,\dots,n\} \setminus I$.  
  We compute the probability of success as before by conditioning on the values taken by $W$:
  \begin{equation}
    \label{eq:Psucc2}
    \Psucc = \sum_{w=0}^{n/2}  \prob(W=w) 
    \prob\left(\left.\exists \xv \in U' : |\xv_{\bar{I}}|=p \;\right|W=w  \right)
  \end{equation}
  where $\bar{I} \eqdef \{1,\dots,n\} \setminus I$.    
  Notice that we can partition $\bar{I}$ as $\bar{I}= J_1 \cup J_2$
  where $J_2$ consists in the union of the matched pairs in $\bar{I}$. Note that $|J_2|=2w$.
  We may further partition $J_2$ as $J_2 = J_{21} \cup J_{22}$ where the elements of a matched pair are divided into the two sets.
  In other words, neither $J_{21}$ nor $J_{22}$ contains a matched pair. We are going to consider the codes
  \begin{eqnarray*}
    U" &\eqdef &\punc_{I}(U')\\
    U'''& \eqdef &\punc_{I \cup J_{22}}(U')
  \end{eqnarray*}
  The last code is of length $k+\ell-w$. The point of defining the first code is that 
  \begin{displaymath}
    \prob\left(\exists \xv \in U' : |\xv_{\bar{I}}|=p \mbox{ }| \mbox{ } W=w  \right)
  \end{displaymath} 
  is equal to the probability that $U"$ contains a codeword of weight $p$.
  The problem is that we can not apply Lemma \ref{lem:lower_bound} to it due to the matched positions it contains. 
  This is precisely the point of defining $U'''$. In this case,  we can consider that it is a random code whose  parity-check matrix is chosen uniformly at random among 
  the set of matrices of size $\max(0,k+\ell-w-k_U) \times (k+\ell-w)$. We can therefore apply Lemma \ref{lem:lower_bound} to it.
  We have to be careful about the words of weight $p$ in $U"$ though, since they do not have the same probability of occurring in $U"$ 
  due to the possible presence of matched pairs in the support. This is why we introduce for $i$ in $\{0,\dots,\lfloor p/2 \rfloor\}$ the sets 
  $X_i$ defined as follows
  \begin{displaymath}
    X_i \eqdef \{\xv=(x_i)_{i \in \bar{I} \setminus J_{22}} \mathbb{F}_{2}^{k+\ell-w}: |\xv_{J_1}|=p-2i,\mbox{ } |\xv_{J_{21}}|=i\}
  \end{displaymath}
  A codeword of weight $p$ in $U"$ corresponds to some word in one of the $X_i$'s by puncturing it in $J_{22}$. We obviously have the lower bound
  \begin{equation} 
    \prob\left\{\exists \xv \in U' : |\xv_{\bar{I}}|=p \mbox{ } | \mbox{ } W=w  \right\} \geq \mathop{\max}\limits_{i=0}^{\lfloor p/2 \rfloor} \left\{ \prob(X_i \cap U''' \neq \emptyset) \right\}
  \end{equation}
  By using Lemma \ref{lem:lower_bound} we have
  \begin{equation}
    \prob(X_i \cap U''' \neq \emptyset) \geq f\left(\frac{\binom{k+\ell-2w}{p-2i} \binom{w}{i} }{2^{\max(0,k+\ell-w-k_U)}}\right).
  \end{equation}
  On the other hand, we may notice that 
  $\prob(W=w)=\prob(w_2(\ev)=w)$ when $\ev$ is drawn uniformly at random among the binary words of weight $k+\ell$ and length $n$.
  By using Proposition \ref{prop:unifDistrib} we deduce 
  \begin{displaymath}
    \prob(W=w) = \frac{  \binom{n/2}{w} \binom{n/2-w}{k+\ell-2w}2^{k+\ell-2w}}
    {\binom{n}{k+\ell}}.
  \end{displaymath}
  These considerations lead to the following lower bound on $\Psucc$
  \begin{equation}
    \Psucc \geq \sum_{w=0}^{n/2}  \frac{  \binom{n/2}{w} \binom{n/2-w}{k+\ell-2w}2^{k+\ell-2w}}
    {\binom{n}{k+\ell}} \max_{i=0}^{\lfloor p/2 \rfloor} f\left(\frac{\binom{k+\ell-2w}{p-2i} \binom{w}{i} }{2^{\max(0,k+\ell-w-k_U)}}\right)
  \end{equation}
  $\qed$
\end{proof}

\subsubsection{Complexity of recovering a permuted version of $U$.} As for recovering the permuted $V$ code, the complexity for recovering the permuted $U$ is of order 
$\Omega\left( \frac{C_1(p,\ell)}{\Psucc} \right)$.

\subsection{Distinguishing a $(U,U+V)$ code}

It is not clear in the first case that from the single knowledge of $V'$ and a permuted version of $V$ we are able to find a permutation of the positions
which gives to the whole code the structure of a $(U,U+V)$-code. However in both cases as single successful call to 
\textsf{ComputeV} (resp. \textsf{ComputeU})  is really distinguishing the code from a random code
of the same length and dimension. In other words, we have a distinguishing attack whose complexity is given by 
$\min(O(C_U),O(C_V))$ where 
\begin{eqnarray*}
  C_U & \eqdef & \frac{C_1(p,\ell)}{\sum_{w=0}^{n/2}  \frac{  \binom{n/2}{w} \binom{n/2-w}{k+\ell-2w}2^{k+\ell-2w}}
                 {\binom{n}{k+\ell}} \max_{i=0}^{\lfloor p/2 \rfloor} f\left(\frac{\binom{k+\ell-2w}{p-2i} \binom{w}{i} }{2^{\max(0,k+\ell-w-k_U)}}\right)}\\
  C_V & \eqdef & \frac{C_1(p,\ell)}{\sum_{w=0}^{n/2} \frac{\binom{n/2}{w} \binom{n/2}{n-k-\ell-w}}{\binom{n}{n-k-\ell}} f \left(  \binom{n/2-w}{p}2^{k_V+w-n/2} \right)}
\end{eqnarray*}
and $f(x) \eqdef \max \left( x(1-x/2),1 - 1/x\right)$. As for the
decoders of \S\ref{sec:attsourcedistortion} the above numbers are
minimized (independently) over $p$ and $\ell$.

We end this section by remarking that the dual of a code $(U,U+V)$ is $(U^{\bot}+V^{\bot},V^{\bot})$ thus we have the same attack with the dual. With $k/n = 0.5$, these two attacks have the same complexity as $C_{U} = C_{V^{\bot}}$ and $C_{V} = C_{U^{\bot}}$.

 \section{Parameter Selection}
\label{sec:parameters}

In the light of the security proof in \S\ref{sec:securityProof} and
the rejection sampling method in \S\ref{sec:statDist}, we need to derive
parameters which lead to negligible success for the two following
problems:
\begin{enumerate}
\item Solve a syndrome decoding problem with $\qhash$ instances (DOOM)
  for parameters $n,k,w$.

\item Distinguish public matrices of the code family $(U,U+V)$ from
  random matrices of same size.
\end{enumerate}

In the security proof we required a salt size
$\lambda_{0}=\lambda+2\log_{2}(\qsig)$ where $\qsig$ is the number
of signature queries allowed to the adversary. Since
$\qsig\le 2^\lambda$ ($\lambda$ the security parameter) we choose a
conservative $\lambda_0=3\lambda$. We gave in
\S\ref{sec:attsourcedistortion} and \S\ref{sec:keyAttack}
state-of-the-art algorithms for the two problems mentioned above. This
served as a basis for the parameters proposed in
Table~\ref{tab:parameters}. For the key security, the estimates $C_U$
and $C_V$ are derived from the formulas at the end of
\S\ref{sec:keyAttack}. In those formulas the $C_1(p,\ell)$ term
derives from Dumer's algorithm. Using more involved techniques
\cite{MMT11,BJMM12,MO15} will reduce the key security but will leave
it above the security claims. For the message security ($\log_2 \WF$),
it is based on the DOOM variant of Dumer's algorithm, which is the
current state-of-the-art. Algorithmic improvements, like adapting DOOM
to BJMM, may lower the message security and require an adjustment of
the sizes.

\begin{table}[htb]
  \centering
  \begin{tabular}{|c||c|c|c|}
    \hline
    $\lambda$ (security) & 80 & 128 & 256 \\ 
    \hline
    $n$ & $4800$ & $7700$ & $15400$ \\
    $k=k_U+k_V$ & $2400$ & $3850$ & $7700$ \\
    $k_{V}$ & $916$ & $1470$ &$2940$ \\
    $w$ & $916$ & $1470$ & $2940$ \\
    \hline
    Signature length (bits) & $4940$ & $8084$ & $16168$
    \\
    Public key size (MBytes) & $0.720$ & $1.853$ & $7.411$ \\
    Secret key size (MBytes) & $0.347$ & $0.887$ & $3.525$ \\
    \hline
    $\log_2(\qhash\sqrt{\varepsilon})$ (Prop. \ref{prop:statDist} \S\ref{subsec:cbProb}) &  $-208$ & $-334$ & $-668$ \\ 
    \hline
    $\log_2 C_V$ (\S\ref{sec:keyAttack}) & 171 & 275 & 550 \\
    $\log_2 C_U$ (\S\ref{sec:keyAttack}) & 250 & 401 & 803 \\
    $\log_2 \WF$ (\S\ref{sec:attsourcedistortion}) & 80 & 128 & 256 \\
    \hline
  \end{tabular}
\vspace{0.3cm}
  \caption{Proposed Parameters for the $(U, U+V)$ Signature Scheme}
  \label{tab:parameters} \vspace{-1cm}
\end{table}

\subsubsection{Key Sizes.} The public key is a parity-check matrix, given in
systematic form $\Hpub=(\Imat\mid \Rm)$ which requires $k(n-k)$
bits. The secret key consists of a non-singular matrix $\Sm$, a secret
parity-check matrix $\Hsec$ and a permutation matrix $\Pm$ where
\begin{displaymath}
\Hpub = \Sm \Hsec \Pm 
\mbox{ with } \Hsec = \begin{pmatrix}
\Hm_{U} & \mathbf{0} \\
\Hm_{V} & \Hm_{V}
\end{pmatrix},
\end{displaymath}
and $\Hm_U$ and $\Hm_V$ are parity-check matrices of $U$ and
$V$. Here, we not not need to include $\Sm$ in the secret key, it is
used in the signature to compute $\Sm^{-1}\sv^T$ which can be derived
from $\Hsec$ and $\Pm$ when the public key is systematic. We have
\begin{displaymath}
\Sm^{-1}\sv^T = \Sm^{-1}(\Imat\mid\Rm)\left(\begin{array}{c}\sv^T\\\mathbf{0}\end{array}\right) =
\Sm^{-1}\Hpub\left(\begin{array}{c}\sv^T\\\mathbf{0}\end{array}\right) = \Hsec \Pm \left(\begin{array}{c}\sv^T\\\mathbf{0}\end{array}\right).
\end{displaymath}
The secret thus consists of $\Pm$, that is stored with $\log (n!)\leq n\ln(n)$ bits, and
$\Hm_U,\Hm_V$ in systematic form, that is stored with $k_U(n/2-k_U)+k_V(n/2-k_V)$ bits.
\subsubsection{Implementation.}
In Table~\ref{tab:parameters} the ratio $w/n$ is chosen close to
$0.191$ to minimize the rejection probability (see \S\ref{sec:statDist}). For the
three security levels we need to perform on average 27, 37, or 75
Gaussian eliminations to produce a signature. Most of those Gaussian
elimination are performed on parity-check matrices of shortened
codes. Finally, let us mention that the signature length ($n+3\lambda$
in the table) can be reduced (by about 30\%) by choosing a compact
representation of the sparse error vector.

 \section{Concluding remarks}
\label{sec:conclusion}

We have presented the first code-based signature scheme whose security parameter scales polynomially in key size. 
By code-based scheme, we mean here the restricted case of the Hamming metric for 
expressing the decoding problem. This setting presents the advantage that we are in the case where the decoding problem
has been thoroughly studied for many decades and where it can be considered that the complexity of the best known 
attacks has not dramatically changed since the early sixties.

\subsubsection{Comparison with TESLA-2.}
Contrarily to the overwhelming majority of lattice-based or code-based
proposals during the last decade, we avoided here the use of structured codes based on a ring structure (such
as quasi-cyclic codes). The entropy of our public key is a constant fraction of the entropy of a random matrix
of the same size. Note that the entropy loss in our case is much lower than the one observed for the aforementioned structured cases.
This is the first code-based signature scheme where such a low entropy loss
in the public key has been obtained. With a parameter selection matching tightly with the 
security reduction, the public key size stays reasonable:
less than 2 megabytes (MB) for 128 bits of classical security and less than 6 MB with the QROM reduction of \cite{CD17} for 128 bits of quantum 
security. This is strongly related to the tightness of our security reduction.
There are no other (Hamming metric) code-based to compare with our scheme.
The closest scheme we can compare with is TESLA-2 
\cite{ABBDEGKP17} which is an unstructured lattice based scheme that has a quantum security reduction in the QROM too.
The public key sizes are much bigger in their case: almost 22 MB for the same level of quantum security or
more than 11 MB for the same level of classical security.

\subsubsection{Problem P2: the $\UV$-code distinguishing problem.}
The second problem on which our security relies is indeed very clean and simple.
It consists in deciding whether there exists for a given linear code, a permutation of its positions that makes 
it a $\UV$ code.
This simplicity makes the problem really appealing in a cryptographic context. It departs from the actual trend in 
code-based or lattice-based cryptography which relies solely on the difficulty of decoding or
deciding whether or not a code/lattice contains codewords/lattice elements of 
low weight/norm.
Despite its simplicity, there are strong reasons such as its NP-completeness (see Theorem \ref{theo:UVNP}) to believe in the hardness of this problem.
Even weak versions of this problem are NP-complete. For instance, even the restricted problem of 
deciding whether there 
exists a permutation of the second half of the code positions which makes the code 
to be a $\UV$-code is NP-complete (see Theorem \ref{th:NPcomplete}).
The fact that $\UV$ codes seem to depart from random codes solely by their lowest/highest
weight codewords and the fact that the best algorithms to date for solving this problem use 
low weight codeword finding algorithms might indicate that there might be a tight connection of this problem 
to the problem of finding low weight codewords in a code. This is a tantalizing connection which is worth investigating. 

\section*{Acknowledgments}

We would thank Andr\'e Chailloux for a careful reading of a preliminary version of this paper \cite{DST17}, for finding an error in the security proof
and for suggesting Proposition \ref{prop:statDist} to fix this error.

\newcommand{\etalchar}[1]{$^{#1}$}

\newpage
\appendix
\section{Proofs for \S\ref{sec:securityProof}}
\subsection{List Emulation}
In the security proof, we need to build lists of indices (salts) in
$\F_2^{\lambda_0}$. Those lists have size $\qsig$, the maximum
number of signature queries allowed to the adversary, a number which
is possibly very large. For each message $\mv $ which is either hashed
or signed in the game we need to be able to
\begin{itemize}
\item create a list $\listM$ of $\qsig$ random elements of
  $\F_2^{\lambda_0}$, when calling the constructor {\tt new list()};
\item pick an element in $\listM$, using the method
  $\listM.\mathtt{next}()$, this element can be picked only once;
\item decide whether or not a given salt $\rv$ is in $\listM$, when calling  $\listM.\mathtt{contains}(\rv)$.
\end{itemize}

The straightforward manner to achieve this is to draw $\qsig$ random
numbers when the list is constructed, this has to be done once for
each different message $\mv $ used in the game. This may result in a
quadratic cost $\qhash\qsig$ just to build the lists. Once the
lists are constructed, and assuming they are stored in a proper data
structure (a heap for instance) picking an element or testing
membership has a cost at most $O(\log \qsig)$, that is at most
linear in the security parameter $\lambda$.
\begin{figure}[h!]
\centering
  \begin{tabular}{|l|l|}
    \hline
    \underline{class list} & \underline{method list.contains$(\rv)$} \\
    \quad elt, index & \quad return $\rv\in\{\mathtt{elt}[i],1\le i\le \qsig\}$ \\
    \quad list$()$ & \\\cline{2-2}
    \qquad $\mathtt{index}\gets0$ & \underline{method list.next$()$}  \\
    \qquad for $i=1,\ldots{},\qsig$ & \quad $\mathtt{index}\gets\mathtt{index}+1$ \\
    \qquad\quad $\mathtt{elt}[i]\gets\mathtt{randint}(2^{\lambda_0})$ & \quad return $\mathtt{elt[index]}$ \\
    \hline
  \end{tabular}
\caption{Standard implementation of the list operations.\label{fig:standard}}
\end{figure}

Note that in our game we condition on the event that {\em all elements of $\listM$ are different}.
This implies that now  $\listM$ is obtained by choosing among the  subsets of size $\qsig$ of $\F_2^{\lambda_0}$ uniformly at random.
We wish to emulate the list operations and never construct them explicitly such that the probabilistic model for 
$\listM\mathtt{.next()}$ and $\listM\mathtt{.contains}(\rv)$ stays the same as above (but again conditioned on the event that  
all elements of $\listM$ are different).
For this purpose, we want to ensure that at any time we call either $\listM\mathtt{.contains}(\rv)$ or $\listM\mathtt{.next()}$ we have 
\begin{eqnarray}
\label{eq:list.contains}
\prob(\listM.\mathtt{contains}(\rv)=\mathtt{true})  &= & \prob(\rv \in \listM | \Qc)\\
\prob(\rv = \listM.\mathtt{next()}) & = & p(\rv|\Qc) \label{eq:list.next}
\end{eqnarray}
for every $\rv \in \F_2^{\lambda_0}$. Here $\Qc$ represents the queries to $\rv$ made so far and whether or not these $\rv$'s belong to
$\listM$. Queries to $\rv$ can be made through two different calls. The first one is a call of the form {\tt Sign}$(\mv)$ when it 
chooses $\rv$ during the random assignment $\rv \Unif \{ 0,1 \}^{\lambda_{0}}$. This results in a call to {\tt Hash}$(\mv,\rv)$ which queries itself
whether $\rv$ belongs to $\listM$ or not through the call $\listM\mathtt{.contains}(\rv)$. 
 The answer is necessarily positive in this case. 
The second way to query $\rv$ is by calling 
{\tt Hash}$(\mv,\rv)$ directly. In this case, both answers {\tt true} and
{\tt false} are possible.
$p(\rv|\Qc)$ represents the probability distribution of $\listM\mathtt{.next()}$ that we have in the above implementation of the list operations 
given the previous queries $\Qc$.

A convenient way to represent $\Qc$ is  through three lists $\Lcts$, $\Lcth$ and $\Lcf$.
$\Lcts$ is the list of $\rv$'s that have been queried through a call {\tt Sign}$(\mv)$.
They belong necessarily to $\listM$. $\Lcth$ is the set of $\rv$'s that have not been queried so far through a call to {\tt Sign}$(\mv)$ but have been queried through a direct call {\tt Hash}$(\mv,\rv)$ and for which $\listM\mathtt{.contains}(\rv)$ returned {\tt true}.
$\Lcf$ is the list of $\rv$'s that have been queried by a call of the form {\tt Hash}$(\mv,\rv)$ and $\listM\mathtt{.contains}(\rv)$ returned 
{\tt false}. 
 
We clearly have
\begin{eqnarray}
\prob(\rv \in \listM | \Qc) &= & 0 \;\;\text{if $\rv \in \Lcf$} \label{eq:prrvinLm1}\\
\prob(\rv \in \listM | \Qc) &= & 1 \;\;\text{if $\rv \in \Lcts \cup \Lcth$} \label{eq:prrvinLm2}\\
\prob(\rv \in \listM | \Qc) &= & \frac{\qsig - |\Lcth| - |\Lcts|}{2^{\lambda_0}-|\Lcth| - |\Lcts|- |\Lcf|} \label{eq:prrvinLm3} \;\;\text{else.}
\end{eqnarray}
To compute the probability distribution $p(\rv|\Qc)$ it is helpful to notice that
\begin{equation}
\label{eq:nextisinLcth}
\prob(\listM\mathtt{.next()}\text{ outputs an element of $\Lcth$ }) = \frac{|\Lcth|}{\qsig - |\Lcts|}.
\end{equation}
This can be used to derive $p(\rv|\Qc)$ as follows
\begin{eqnarray}
p(\rv  | \Qc) &= & 0 \;\;\text{if $\rv \in \Lcf \cup \Lcts$} \label{eq:prq1}\\
p(\rv|\Qc) &= & \frac{1}{\qsig - \Lcts} \;\;\text{if $\rv \in \Lcth$} \label{eq:prq2} \\
p(\rv|\Qc) &= & \frac{\qsig - |\Lcts|-|\Lcth| }{(\qsig - \Lcts) (2^{\lambda_0}-|\Lcth| - |\Lcts|- |\Lcf|)} \;\;\text{else.}\label{eq:prq3}
\end{eqnarray}

\eqref{eq:prq1} is obvious. \eqref{eq:prq2} follows from that all elements of $\Lcth$ have the same probability to be chosen 
as return value for $\listM\mathtt{.next()}$ and \eqref{eq:nextisinLcth}. \eqref{eq:prq3} follows by a similar reasoning by arguing
(i) that all the elements of $\F_2^{\lambda_0} \setminus \left( \Lcts \cup \Lcth \cup \Lcf \right)$ have the same probability to be chosen as
return value for  $\listM\mathtt{.next()}$, (ii) the probability that $\listM\mathtt{.next()}$ outputs an element of $\F_2^{\lambda_0} \setminus \left( \Lcts \cup \Lcth \cup \Lcf \right)$ is the probability that it does not output an element of $\Lcth$ which is 
$1-\frac{|\Lcth|}{\qsig - |\Lcts|} = \frac{\qsig - |\Lcts| - |\Lcth|}{\qsig - |\Lcts|}$.

Figure \ref{fig:implementation} explains how we perform the emulation of the 
list operations so that they perform similarly to genuine list operations as specified above. 
The idea is to create and to operate explicitly on the lists $\Lcts$, $\Lcth$ and $\Lcf$ described earlier.
We have chosen there
\begin{displaymath}
  \beta = \frac{\qsig - |\Lcth| - |\Lcts|}{2^{\lambda_0}-|\Lcth| - |\Lcts|- |\Lcf|}
  \mbox{ and } \gamma = \frac{\wt{\Lcth}}{\qsig-\wt{\Lcts}}.
\end{displaymath}
we also assume that when we call {\tt randomPop()} on a list it outputs an element of the list uniformly at random
and removes this element from it.
The method {\tt
  push} adds an element in a list. The procedure $\mathtt{rand}()$
picks a real number between 0 and 1  uniformly at random. 

\begin{figure}
\centering
  \begin{tabular}{|l|l|l|}
    \hline
    \underline{class list} & \underline{method list.contains$(\rv)$} & \underline{method list.next$()$} \\
    \quad $\Lcth$, $\Lcf$, $\Lcts$ & \quad if $\rv\not\in\Lcth \cup\Lcf \cup\Lcts $ & \quad if $\mathtt{rand}()\le\gamma$ \\
    \quad list$()$ & \qquad if rand$()\le\beta$ & \qquad $\rv\gets\Lcth.\mathtt{randomPop}()$ \\
    \qquad $\Lcth \gets\emptyset$ & \quad\qquad $\Lcth.\mathtt{push}(\rv)$ & \quad else \\
    \qquad $\Lcf \gets\emptyset$ & \qquad else & \qquad $\rv\Unif\F_2^{\lambda_0}\setminus(\Lcth \cup\Lcts  \cup\Lcf)$ \\
    \qquad $\Lcts \gets\emptyset$ & \quad\qquad $\Lcf.\mathtt{push}(\rv)$ & \quad $\Lcts.\mathtt{push}(\rv)$ \\
    & \quad return $\rv\in\Lcth \cup\Lcts$ & \quad return $\rv$ \\
    \hline
  \end{tabular}
\caption{Emulation of the list operations. 
\label{fig:implementation}}
\end{figure}

The correctness of this emulation follows directly from the calculations given above.
For instance the correctness of the call $\listM.\mathtt{next()}$ follows from the fact that with probability 
$\frac{\wt{\Lcth}}{\qsig-|\Lcts|}=\gamma$ it outputs an element of $\Lcth$ chosen uniformly at random (see \eqref{eq:nextisinLcth}).
In such a case the corresponding element has to be moved from $\Lcth$ to $\Lcts$ (since it has been queried now through a call 
to {\tt Sign}$(\mv)$). The correctness of $\listM.\mathtt{contains}(\rv)$ is a direct consequence of the formulas for 
$\prob(\rv \in \listM | \Qc)$ given in \eqref{eq:prrvinLm1}, \eqref{eq:prrvinLm2} and \eqref{eq:prrvinLm3}.
All {\tt push}, {\tt pop}, membership testing above can be implemented
in time proportional to $\lambda_0$.

\subsection{Proof of Lemma \ref{lemm:fEvent}}
\label{lemm:failEvent}

	The goal of this subsection is to estimate the probability of a collision in a signature query for a message $\mv$ when we allow at most $\qsig$ queries (the event $F$ in the security proof) and to deduce Lemma \ref{lemm:fEvent} of \S\ref{sec:securityProof3}. We recall that in $\mathcal{S}_{\textup{code}}$ for each signature query, we  pick $\rv$ uniformly at random in $\{0,1\}^{\lambda_{0}}$. Then the probability we are looking for is bounded by the probability to pick the same $\rv$ at least twice after $\qsig$ draws. The following lemma will be useful.

	\begin{lemma} The probability to have at least one collision
          after drawing uniformly and independently $t$ elements in a
          set of size $n$ is upper bounded by ${t^{2}}/{n}$ for
          sufficiently large $n$ and $t^2< n$.
	\end{lemma}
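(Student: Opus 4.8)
The plan is to prove this standard birthday bound by a union bound over pairs of draws. Write $X_1,\dots,X_t$ for the $t$ independent uniform draws in a set $\mathcal{E}$ with $|\mathcal{E}|=n$. The event that some collision occurs is exactly $\bigcup_{1\le i<j\le t}\{X_i=X_j\}$, a union of $\binom{t}{2}$ events. For each pair, independence and uniformity give
$$
\prob(X_i=X_j)=\sum_{x\in\mathcal{E}}\prob(X_i=x)\,\prob(X_j=x)=n\cdot\tfrac1{n^2}=\tfrac1n .
$$

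First I would apply the union bound to get
$$
\prob\!\left(\bigcup_{1\le i<j\le t}\{X_i=X_j\}\right)\le\sum_{1\le i<j\le t}\prob(X_i=X_j)=\binom{t}{2}\cdot\frac1n=\frac{t(t-1)}{2n},
$$
and then conclude with the crude estimate $\frac{t(t-1)}{2n}\le\frac{t^2}{2n}\le\frac{t^2}{n}$, valid for all $t\ge1$. Under the stated hypotheses this is of course fine; in fact the hypotheses ``$n$ sufficiently large'' and ``$t^2<n$'' are not needed for this argument and can simply be carried along. An equivalent presentation, which may be what motivates those hypotheses, computes the complementary probability directly, $\prob(\text{no collision})=\prod_{i=0}^{t-1}(1-i/n)$, and uses $\prod_i(1-x_i)\ge1-\sum_i x_i$ for $x_i\in[0,1]$ (or the exponential estimate $1-x\le e^{-x}$ together with $1-e^{-y}\le y$, in which case one likes $t^2/n$ bounded so the linearisation is a genuine overestimate). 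In every presentation the whole content is the per-pair collision probability $1/n$ and the pair count $\binom{t}{2}$.

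The honest assessment is that there is no real obstacle: this is the textbook birthday estimate, and the union bound even yields the slightly stronger $t^2/(2n)$. The only points requiring a word of care are the independence of distinct draws, which is precisely what makes $\prob(X_i=X_j)=1/n$ rather than something larger, and the direction of the final chain of inequalities; the side conditions on $n$ and $t$ are harmless and are invoked, if at all, only to license the exponential-estimate variant of the argument.
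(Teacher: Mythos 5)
Your proof is correct, and it takes a genuinely different route from the paper's. You apply a union bound directly to the collision events: the event of a collision is $\bigcup_{i<j}\{X_i=X_j\}$, each pair collides with probability $1/n$ by independence and uniformity, so the union bound gives $\binom{t}{2}/n = t(t-1)/(2n) \le t^2/n$. The paper instead works with the complementary event, writing the no-collision probability exactly as $p_{n,t}=\prod_{i=0}^{t-1}(1-i/n)$ and lower-bounding it via $\prod_i(1-x_i)\ge 1-\sum_i x_i$, which again yields $1-p_{n,t}\le t(t-1)/(2n)$. The two are genuinely different decompositions of the same counting: yours sums over unordered pairs, the paper's peels off draws sequentially. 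The union bound is arguably the more elementary (it never needs the product inequality and requires no condition on $t$ relative to $n$), while the sequential-product form has the minor advantage of giving an exact expression for the no-collision probability, which is useful if one also wants lower bounds. You are also right that the side conditions in the statement are not load-bearing in either argument; they merely keep the bound $t^2/n$ below $1$ so it is informative, and in the paper's version they ensure the factors $1-i/n$ stay nonnegative. Both proofs land on the same intermediate bound $t(t-1)/(2n)$, so the conclusions agree.
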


	\begin{proof} The probability of no collisions after drawing
          independently $t$ elements among $n$ is:
          \[ p_{n,t} \eqdef \prod_{i=0}^{t-1}\left(1-\frac{i}{n}\right)
          \ge 1 - \sum_{i=0}^{t-1}\frac{i}{n} = 1 - \frac{t(t-1)}{2n} \]
          from which we easily get $1-p_{n,t}\le t^2/n$, concluding the proof.
	\end{proof}

	In our case, the probability of the event $F$ is bounded by
        the previous probability for $t = \qsig$ and $n =
        2^{\lambda_{0}}$, so, with $\lambda_0=\lambda+2\log_2\qsig$, we can conclude that
	\begin{displaymath} 
	\mathbb{P}\left( F \right) \leq \frac{\qsig^{2}}{2^{\lambda_{0}}}= \frac{1}{2^{\lambda_{0} - 2 \log_{2}(\qsig)}}= \frac{1}{2^{\lambda}}
	\end{displaymath} 
	which concludes the proof of Lemma \ref{lemm:fEvent}.
	
\subsection{Proof of Proposition \ref{prop:statDist} and Lemma \ref{lem:distribi}}
\label{lem:distrib}

Our goal in this subsection is to prove Lemma \ref{lem:distribi} of \S\ref{sec:securityProof3} and to achieve this we will first prove Proposition \ref{prop:statDist} of \S\ref{subsec:cbProb} which asserts that syndromes by $\Hpub$ of errors of weight $w$ are indistinguishable
from random elements in $\F_2^{n-k}$:
\propoDist*

Proposition \ref{prop:statDist} is based on two lemmas. The first one is a general lemma given in \S\ref{sec:securityProof3}.
\lemleftoverHash*

\begin{proof}
Let $q_{h,f}$ be the probability distribution of the discrete random variable
$(h_0,h_0(e))$ where $h_0$ is drawn uniformly at random in $\Hc$ and $e$ drawn uniformly at random in $E$
(i.e. $q_{h,f} = \prob_{h_0,e}(h_0=h,h_0(e)=f)$).
By definition of the statistical distance we have
\begin{eqnarray}
 \esp_h\left\{\rho(\Dc(h),\Uc)\right\} &= & \sum_{h \in \Hc} \frac{1}{|\Hc|} \rho(\Dc(h),\Uc) \nonumber \\
&= & \sum_{h \in \Hc} \frac{1}{2|\Hc|} \sum_{f \in F} \left| \prob_e(h(e)=f) - \frac{1}{|F|} \right| \nonumber \\
& = & \frac{1}{2} \sum_{(h,f) \in \Hc \times F}   \left| \prob_{h_0,e}(h_0=h,h_0(e)=f) - \frac{1}{|\Hc| \cdot |F|} \right| \nonumber \\
& = & \frac{1}{2} \sum_{(h,f) \in \Hc \times F}   \left| q_{h,f} - \frac{1}{|\Hc| \cdot |F|} \right|.
\end{eqnarray}
Using the Cauchy-Schwarz inequality, we obtain 
\begin{equation} \label{eq:cauchy-schwarz}
\sum_{(h,f) \in \Hc \times F}   \left| q_{h,f} - \frac{1}{|\Hc| \cdot |F|} \right| \leq  \sqrt{\sum_{(h,f) \in \Hc \times F}
\left( q_{h,f} - \frac{1}{|\Hc| \cdot |F|}\right)^2} \cdot \sqrt{|\Hc| \cdot |F|}.
\end{equation}
Let us observe now that
	\begin{align}
	\sum_{(h,f) \in \Hc \times F}
\left( q_{h,f} - \frac{1}{|\Hc| \cdot |F|}\right)^2 & =  \sum_{h,f}
 \left( q_{h,f}^2 - 2 \frac{q_{h,f}}{|\Hc| \cdot |F|}+ \frac{1}{|\Hc|^2 \cdot |F|^2}\right) \nonumber \\
& = \sum_{h,f} q_{h,f}^2 -2 \frac {\sum_{h,f} q_{h,f}}{|\Hc| \cdot |F|} + \frac{1}{|\Hc| \cdot |F|} \nonumber \\
& = \sum_{h,f} q_{h,f}^2  - \frac{1}{|\Hc| \cdot |F|}  \label{eq:q}.
	\end{align}
Consider for $i \in \{0,1\}$ independent random variables $h_i$ and $e_i$ 
that are  drawn uniformly at random in $\Hc$ and $E$ respectively.
We continue this computation by noticing now that
\begin{align}
\sum_{h,f} q_{h,f}^2 & =  \sum_{h,f} \prob_{h_0,e_0} (h_0 = h,h_0(e_0)=f) \prob_{h_1,e_1} (h_1 = h,h_1(e_1)=f)   \nonumber \\
& = \prob_{h_0,h_1,e_0,e_1} \left(h_0=h_1,h_0(e_0)=h_1(e_1)\right) \nonumber  \\
& = \frac{\prob_{h_0,e_0,e_1}\left(h_0(e_0)=h_0(e_1)\right) }{|\Hc|}\nonumber \\
& = \frac{1+\varepsilon }{|\Hc| \cdot |F|}.\label{eq:collision}
\end{align}
By substituting for $\sum_{h,f} q_{h,f}^2$ the expression obtained in \eqref{eq:collision} into
\eqref{eq:q} and then back into \eqref{eq:cauchy-schwarz} we finally obtain
$$
\sum_{(h,f) \in \Hc \times F}   \left| q_{h,f} - \frac{1}{|\Hc| \cdot |F|} \right|\leq  \sqrt{\frac{1+\varepsilon}{|\Hc| \cdot |F|}
- \frac{1}{|\Hc| \cdot |F|}}\sqrt{|\Hc| \cdot |F|} = \sqrt{\frac{\varepsilon}{|\Hc| \cdot |F|}}\sqrt{|\Hc| \cdot |F|} = \sqrt{\varepsilon}.
$$
This finishes the proof of our lemma.	$\qed$
\end{proof}
In order to use this lemma to bound the statistical distance we are interested in, we perform now the following computation
\begin{lemma}\label{lem:syndromeDistribution}
Assume that $\xv$ and $\yv$ are random vectors of $S_w$ that are drawn uniformly at random in this set. We have
	$$
	\mathbb{P}_{\Hpub,\xv,\yv}
\left( \Hpub \xv^{T} = \Hpub \yv^{T} \right)  \leq \frac{1}{2^{n-k}} (1 + \varepsilon) \mbox{ with } \varepsilon \mbox{ given in Proposition \ref{prop:statDist}.}  $$
\end{lemma}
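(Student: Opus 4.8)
The idea is to strip the harmless randomness out of $\Hpub$, reduce the collision event to membership in the secret $\UV$ code, exploit the independence of $\Hm_{U}$ and $\Hm_{V}$ to get an exact closed form, and finally bound the handful of combinatorial quantities that survive by the three summands of $\varepsilon$.

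First, $\Hpub\xv^{T}=\Hpub\yv^{T}$ is equivalent to $\xv+\yv\in\ker\Hpub=\Cpub$. Since $\Hpub=\Sm\Hsec\Pm$ with $\Sm$ invertible, $\ker\Hpub=\{\zv:\zv\Pm^{T}\in\Csec\}$ with $\Csec=\ker\Hsec=\UV$, so the probability we want is $\prob_{\xv,\yv,\Pm,\Hm_{U},\Hm_{V}}\big((\xv+\yv)\Pm^{T}\in\UV\big)$ — already $\Sm$ has disappeared. Because $\Pm$ is a uniform permutation independent of everything else, for each fixed $\Pm$ the map $(\xv,\yv)\mapsto(\xv\Pm^{T},\yv\Pm^{T})$ is a bijection of $S_{w}\times S_{w}$, so $(\xv\Pm^{T},\yv\Pm^{T})$ is again uniform on $S_{w}\times S_{w}$ and still independent of $(\Hm_{U},\Hm_{V})$. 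Hence the collision probability equals $\prob_{\xv,\yv,\Hm_{U},\Hm_{V}}(\xv+\yv\in\UV)$ and $\Pm$ has dropped out as well.

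Now write $\xv=(\xv_{1},\xv_{2})$, $\yv=(\yv_{1},\yv_{2})$ with halves in $\F_2^{n/2}$. From the parity-check shape of $\Hsec$, $\xv+\yv\in\UV$ iff $\xv_{1}+\yv_{1}\in U$ \emph{and} $(\xv_{1}+\yv_{1})+(\xv_{2}+\yv_{2})\in V$. Put $a\eqdef 2^{-(n/2-k_{U})}$, $b\eqdef 2^{-(n/2-k_{V})}$, so $ab=2^{-(n-k)}$. Conditioning on $\xv,\yv$: since $\Hm_{U}$ is uniform, $\prob_{\Hm_{U}}(\uv\in U)$ is $1$ if $\uv=\mathbf{0}$ and $a$ otherwise, and similarly for $V$ with $b$, and the two events are independent. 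Therefore
$$
\prob_{\Hpub,\xv,\yv}(\Hpub\xv^{T}=\Hpub\yv^{T})=\esp_{\xv,\yv}\Big[\big(a+(1-a)\mathbf{1}[\xv_{1}=\yv_{1}]\big)\big(b+(1-b)\mathbf{1}[\xv_{1}+\xv_{2}=\yv_{1}+\yv_{2}]\big)\Big].
$$
Expanding, and using $\{\xv_{1}=\yv_{1}\}\cap\{\xv_{1}+\xv_{2}=\yv_{1}+\yv_{2}\}=\{\xv=\yv\}$, this equals
$$
ab+a(1-b)\,\prob(\xv_{1}+\xv_{2}=\yv_{1}+\yv_{2})+b(1-a)\,\prob(\xv_{1}=\yv_{1})+(1-a)(1-b)\,\prob(\xv=\yv).
$$

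Finally, since $\prob(\xv=\yv)=1/\binom{n}{w}$ and $1-a,1-b\le1$, dividing by $ab=2^{-(n-k)}$ shows the collision probability is at most $2^{-(n-k)}\big(1+\tfrac{2^{n-k}}{\binom{n}{w}}+\tfrac1b\prob(\xv_{1}+\xv_{2}=\yv_{1}+\yv_{2})+\tfrac1a\prob(\xv_{1}=\yv_{1})\big)$, so it remains to identify the last two quantities with the third and second summands of $\varepsilon$. I would count directly: grouping weight-$w$ words by the weight $j$ of the left half gives $\prob(\xv_{1}=\uv)=\binom{n/2}{w-j}/\binom{n}{w}$ for $|\uv|=j$, hence $\prob(\xv_{1}=\yv_{1})=\binom{n}{w}^{-2}\sum_{j}\binom{n/2}{j}\binom{n/2}{w-j}^{2}$; and the number of weight-$w$ words whose fold $\xv_{1}+\xv_{2}$ equals a fixed weight-$j$ word is $2^{j}\binom{n/2-j}{(w-j)/2}$ (choose the side in each of the $j$ "mismatch" coordinates, then which $(w-j)/2$ of the remaining coordinates carry a $1$ in both halves), so $\prob(\xv_{1}+\xv_{2}=\yv_{1}+\yv_{2})=\binom{n}{w}^{-2}\sum_{j\equiv w\,(2)}\binom{n/2}{j}\big(2^{j}\binom{n/2-j}{(w-j)/2}\big)^{2}$. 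Using the identity $\binom{N}{p}\binom{N-p}{q}=\binom{N}{q}\binom{N-q}{p}$ to rewrite $\binom{n/2}{j}\binom{n/2-j}{(w-j)/2}^{2}=\binom{n/2}{(w-j)/2}^{2}\binom{n/2-(w-j)/2}{j}/\binom{n/2}{j}$, the $V$-contribution (times $1/b=2^{n/2-k_{V}}$) is exactly the third summand of $\varepsilon$, and the $U$-contribution (times $1/a=2^{n/2-k_{U}}$) is to be bounded by the second summand. The main obstacle is precisely this last step: producing these closed forms and the elementary but fiddly estimate that bounds $\binom{n}{w}^{-2}\sum_{j}\binom{n/2}{j}\binom{n/2}{w-j}^{2}$ by $\binom{n/2}{w/2}/\binom{n}{w}$; once the collision probabilities are pinned down, everything else is bookkeeping.
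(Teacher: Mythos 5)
Your proof is correct and follows essentially the same route as the paper: both strip $\Sm$ and $\Pm$, reduce the collision event to $\xv_1+\yv_1\in U$ and $\xv_1+\xv_2+\yv_1+\yv_2\in V$, use the independence and uniformity of $\Hm_U$ and $\Hm_V$ (your affine-product expansion is algebraically identical to the paper's partition into the four disjoint events $\Ec_1,\dots,\Ec_4$, since writing $\prob(\Ec_i)$ in terms of $P_1,P_2,P_{12}$ recovers your four terms exactly), and then bound the three surviving probabilities $\prob(\xv=\yv)$, $\prob(\xv_1=\yv_1)$, $\prob(\xv_1+\xv_2=\yv_1+\yv_2)$ by the same combinatorial computations. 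The only slip is a dropped square in the trinomial rewrite, which should read $\binom{n/2}{j}\binom{n/2-j}{(w-j)/2}^2=\binom{n/2}{(w-j)/2}^2\binom{n/2-(w-j)/2}{j}^2/\binom{n/2}{j}$; this is consistent with the slightly ambiguous placement of the exponent $2$ in the statement of $\varepsilon$ in Proposition~\ref{prop:statDist}.
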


\begin{proof}
Recall that $\Hpub$ is obtained as
	\[ \Hm_{\text{pub}} = \Sm \Hsec \Pm   \;\;\;\text{  with  } \;\;\;
	\Hsec \eqdef 
	\begin{pmatrix}
	\Hm_{U} & \mathbf{0} \\
	\Hm_{V} & \Hm_{V}
	\end{pmatrix}
 \]
where $\Hm_{U}$ has been chosen uniformly at random in $\F_2^{(n/2 - k_{U}) \times n/2}$, $\Hm_V$ has been chosen uniformly in 
$\F_{2}^{(n/2- k_{V}) \times n/2}$, $\Sm$ has been chosen uniformly at random among the invertible matrices in $\F_{2}^{(n-k)\times (n-k)}$
and $\Pm$ among the $n \times n$ permutation matrices.
As $\Sm$ is non-singular and $\Pm$ is a permutation, the probability of  the event $\Hpub \xv^{T} = \Hpub\yv^{T}$ is the same as the probability of the event
	\begin{align*}
	\begin{pmatrix}
	\Hm_{U} & \mathbf{0} \\
	\Hm_{V} & \Hm_{V}
	\end{pmatrix}  \xv^{T} =  
	\begin{pmatrix}
	\Hm_{U} & \mathbf{0} \\
	\Hm_{V} & \Hm_{V}
	\end{pmatrix} \yv^{T}. 
	\end{align*} 
	Let $\xv$ be a vector of $\mathbb{F}_{2}^{n}$, we will denote in the following by $\xv_{1}$ (resp. $\xv_{2}$) the vector formed by its first (resp. last) $n/2$ coordinates. 
In other words, the probability we are looking for is

	\begin{align*}
	\mathbb{P}&_{\Hm_U,\Hm_V,\xv,\yv} \big(  \Hm_{U}  (\xv_{1} + \yv_{1})^{T} = \mathbf{0} \wedge \Hm_{V} (\xv_{1} + \xv_{2} + \yv_{1} +  \yv_{2})^{T} = \mathbf{0} \big).
\end{align*}

To compute this probability we use Lemma \ref{lem:inrc} which says that:
	\begin{equation} 
	\label{eq:prob} 
	\mathbb{P}_{\Hm }\left( \Hm \ev^{T} = \mathbf{0}\right) = \frac{1}{2^{n-k}} \mbox{ if } \ev \neq 0 \mbox{ and } 1 \mbox{ otherwise}
	\end{equation} 
when $\Hm$ is chosen uniformly at random in $\F_2^{(n-k)\times n}$. This lemma motivates to distinguish between four disjoint events	
	\newline

	{\bf Event 1:}
	$$
	\mathcal{E}_{1} \eqdef \{  \xv_{1} + \yv_{1} = \mathbf{0} \wedge \xv_{1} + \xv_{2} + \yv_{1} + \yv_{2} \neq \mathbf{0}  \} 
	$$

	{\bf Event 2:}
	$$
	\mathcal{E}_{2} \eqdef \{  \xv_{1} + \yv_{1} \neq \mathbf{0} \wedge \xv_{1} + \xv_{2} + \yv_{1} + \yv_{2} = \mathbf{0}  \}  
	$$

	{\bf Event 3:}
	$$
	\mathcal{E}_{3} \eqdef \{  \xv_{1} + \yv_{1} \neq \mathbf{0} \wedge \xv_{1} + \xv_{2} + \yv_{1} + \yv_{2} \neq \mathbf{0}  \}  
	$$

	{\bf Event 4:} 
	$$
	\mathcal{E}_{4} \eqdef \{  \xv_{1} + \yv_{1} = \mathbf{0} \wedge \xv_{1} + \xv_{2} + \yv_{1} + \yv_{2} = \mathbf{0}  \} 
	$$

	Under these events we get thanks to \eqref{eq:prob}:
	\begin{align}
		&\mathbb{P}_{\Hsec,\xv,\yv} \left( \Hsec \xv^{T} = \Hsec \yv^{T} \right) \nonumber\\
	&= \sum_{i=1}^{4} \mathbb{P}_{\Hsec} \left(  \Hsec \xv^{T} = \Hsec \yv^{T} | \mathcal{E}_{i} \right) \mathbb{P}_{\xv,\yv} \left( \mathcal{E}_{i} \right) \nonumber\\ 
	&= \frac{\mathbb{P}_{\xv,\yv} \left( \mathcal{E}_{1} \right)}{2^{n/2 - k_{V}}}  + \frac{\mathbb{P}_{\xv,\yv } \left( \mathcal{E}_{2} \right)}{2^{n/2 - k_{U}}}  + \frac{\mathbb{P}_{\xv,\yv} \left( \mathcal{E}_{3} \right)}{2^{n - k}}  + \mathbb{P}_{\xv,\yv} \left( \mathcal{E}_{4} \right) \nonumber \\
	&= \frac{1}{2^{n-k}} \left(  \frac{\mathbb{P} \left( \mathcal{E}_{1} \right)}{2^{n/2 - k_{V} - n + k}}  + \frac{\mathbb{P} \left( \mathcal{E}_{2} \right)}{2^{n/2 - k_{U}-n+k}}  + \mathbb{P} \left( \mathcal{E}_{3} \right)  + 2^{n-k} \mathbb{P} \left( \mathcal{E}_{4} \right) \right)  \nonumber \\ 
	&\leq \frac{1}{2^{n-k}} \left( 1 +  2^{n/2-k_{U}}\mathbb{P} \left( \mathcal{E}_{1} \right) + 2^{n/2-k_{V}} \mathbb{P} \left( \mathcal{E}_{2} \right) + 2^{n-k} \mathbb{P}(\mathcal{E}_{4})  \right), \label{eq:sum4terms}
	\end{align}	
where we used for the last inequality the trivial upper-bound $	\mathbb{P} \left( \mathcal{E}_{3} \right) \leq 1$.
Let us now upper-bound (or compute) the probabilities of the events $\Ec_1$, $\Ec_2$ and $\Ec_4$.
For $\Ec_4$ we clearly have
	$$
	\mathbb{P}_{\xv,\yv } \left( \mathcal{E}_{4} \right) = \prob (\xv = \yv) = \frac{1}{\binom{n}{w}}.
	$$	
For $\Ec_1$ we derive the following upper-bound	
	\begin{align} 
	\mathbb{P}_{\xv,\yv} \left( \mathcal{E}_{1} \right) &\leq \mathbb{P}\left( \xv_{1} =\yv_{1} \right) \nonumber \\ 
	&= \sum_{w_{1} = 0}^{w} \frac{\binom{n/2}{w_{1}} \binom{n/2}{w-w_{1}}^{2} }{\binom{n}{w}^{2}} \nonumber \\
	&\leq \sum_{w_{1} = 0}^{w} \frac{\binom{n/2}{w_{1}} \binom{n/2}{w-w_{1}} }{\binom{n}{w}^{2}} \binom{n/2}{w/2} \label{eq:upper_bound_binomial}\\
	& = \frac{\binom{n/2}{w/2}}{\binom{n}{w}}                                  \label{eq:simplification} 
	\end{align} 
where \eqref{eq:upper_bound_binomial} follows from $ \binom{n/2}{w-w_{1}}^{2} \leq \binom{n/2}{w-w_{1}} \binom{n/2}{w/2}$
for all $w_1$ in $\{0,\dots,w\}$ and \eqref{eq:simplification} from \\  $\sum_{w_{1} = 0}^{w} \binom{n/2}{w_{1}} \binom{n/2}{w-w_{1}} = \binom{n}{w}$.
To upper-bound $\mathbb{P}\left( \mathcal{E}_{2} \right)$, let us first derive the distribution of 
$\xv_1+\xv_2$. 
We first observe that 
\begin{align}
\prob(\xv_1 + \xv_2 =\ev) & = \prob\Big( \xv_1 + \xv_2=\ev \Big| \;|\xv_1+\xv_2|=w_e \Big) \prob(|\xv_1+\xv_2|=w_e)\nonumber\\
&= \frac{1}{\binom{n/2}{w_e}}  2^{w_e} \frac{\binom{n/2}{(w-w_e)/2} \binom{n/2-(w-w_e)/2}{w_e}}{\binom{n}{w}} \;\text{(by Prop.  \ref{prop:unifDistrib})}
\label{eq:E2}
\end{align}
if $w_e \equiv w \pmod{2}$, where $w_e$ is the Hamming weight of $\ev$. 
If $w_e$ does not have the same parity as $w$, then this probability is equal to $0$.
From this we deduce that
	\begin{align*}
\mathbb{P}_{\xv,\yv}\left( \mathcal{E}_{2} \right)	& \leq   \mathbb{P} \left( \xv_{1} + \xv_{2} = \yv_{1} + \yv_{2} \right) \\
	&= \sum_{\substack { j \in \{0,\dots,w\} \\ j \equiv w \pmod{2} }} \sum_{\ev \in \mathbb{F}_{2}^{n/2} : |\ev| = j} \mathbb{P}_{\xv} \left( \xv_{1} + \xv_{2} = \ev \right)^{2} \\
& = \sum_{\substack { j \in \{0,\dots,w\} \\ j \equiv w \pmod{2} }} \frac{1}{\binom{n/2}{j}} 2^{2j} \frac{\binom{n/2}{(w-j)/2}^2 \binom{n/2-(w-j)/2}{j}^2}{\binom{n}{w}^2} \;\text{ (by Eq. \eqref{eq:E2})}
	\end{align*}	
		
	By plugging these upper-bounds in \eqref{eq:sum4terms}, we finally obtain:
\begin{align*}
& \mathbb{P}_{\Hpub,\xv,\yv} \left( \Hpub \xv^{T} = \Hpub \yv^{T} \right)\\
& \leq \frac{1}{2^{n-k}}\left( 1+ \frac{2^{n-k}}{\binom{n}{w}} + \frac{2^{n/2-k_U} \binom{n/2}{w/2}}{\binom{n}{w}}
+ \sum_{\substack { j \in \{0,\dots,w\} \\ j \equiv w \pmod{2} }}^w \frac{ 2^{2j+n/2-k_V}\binom{n/2}{(w-j)/2}^2 \binom{n/2-(w-j)/2}{j}}{ \binom{n/2}{j} \binom{n}{w}^2}^2 \right)
\end{align*}
	which concludes the proof. $\qed$

\end{proof}

These two lemmas imply directly  Proposition \ref{prop:statDist}.

\begin{proof}[Proposition \ref{prop:statDist}]
Indeed we let in Lemma \ref{lem:leftoverHash}, $E \eqdef \F_2^n$, $F \eqdef \F_2^{n-k}$ and $\Hc$ be the set of functions associated to 
the $4$-tuples $(\Hm_U,\Hm_V,\Sm,\Pm)$ used to generate a public parity-check matrix $\Hpub$ through \eqref{eq:def_Hpub}. These functions
$h$ are given by $h(\ev) = \Hpub \ev^T$.
Lemma \ref{lem:syndromeDistribution} gives an upper-bound for the $\varepsilon$ term in Lemma \ref{lem:leftoverHash} and this finishes the
proof of Proposition \ref{prop:statDist}. $\qed$	
	\end{proof}

We are now able to prove Lemma \ref{lem:distribi} (we use here notations of the security proof in \S\ref{sec:securityProof3}).
\lemdistribi*

\begin{proof}[Lemma \ref{lem:distribi}]
To simplify notation we let $q \eqdef \qhash$.
Then we notice that 
\begin{equation}
\label{eq:first}
\prob(S_1) \leq \prob(S_2) + \rho(\Dpubwq,\Dpub\otimes \Uc^{\otimes q}),
\end{equation}
where 
\begin{itemize}
\item $\Uc$ is the uniform distribution over $\F_2^{n-k}$;
\item $\Dpubwq$ is the distribution of the $(q+1)$-tuples
$(\Hpub,\Hpub \ev_1^T,\cdots,\Hpub \ev_q^T)$ where the $\ev_i$'s are independent and uniformly distributed in $S_w$;
\item
$\Dpub\otimes \Uc^{\otimes q}$  is the distribution of the $(q+1)$-tuples
$(\Hpub,\sv_1^T,\cdots, \sv_q^T)$ where the $\sv_i$'s are independent and uniformly distributed in $\F_2^{n-k}$.
\end{itemize}
We now observe that 
\begin{eqnarray*}
\rho(\Dpubwq,\Dpub\otimes \Uc^{\otimes q}) &= & \sum_{\Hm \in \F_2^{(n-k) \times n}} \prob(\Hpub=\Hm) \rho((\Dc_w^\Hm)^{\otimes q},\Uc^{\otimes q}) \\
& \leq & q \sum_{\Hm \in \F_2^{(n-k) \times n}} \prob(\Hpub=\Hm) \rho(\Dc_w^{\Hm},\Uc)\;\;\text{(by Prop. \ref{prop:product})} \\
& = & q \esp_{\Hpub} \left\{\rho(\Dpubw,\Uc)\right\} \\
& \leq & q \frac{\sqrt{\varepsilon}}{2} \;\;\text{(by Prop. \ref{prop:statDist})}.
\end{eqnarray*}
$\qed$
\end{proof}

\section{Proofs for  \S\ref{sec:statDist}}

\subsection{Proof of Proposition \ref{prop:sddDistrib} and Theorem \ref{th:statDec}}\label{sec:prop:uniform}
\label{subsec:distribu}

First of all it is
straightforward to check that the distributions $p_i^u$ are given by
\begin{proposition}[Distribution of $w_{1}$ and $w_{2}$]$ $
	\label{prop:unifDistrib}
	For all $i$ in $\{0,\dots,w\}$ such that $w \equiv i \pmod{2}$ 
	\begin{displaymath}
	p_{2}^{u}\left(\frac{w-i}{2}\right) = p_{1}^{u}(i) = 2^{i} \frac{\binom{n/2}{(w-i)/2} \binom{n/2-(w-i)/2}{i}}{\binom{n}{w}}
	\end{displaymath}
	and for other choices of $i$, $p_{1}(i)$ and $p_{2}(i)$ are equal to $0$.
\end{proposition}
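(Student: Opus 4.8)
This statement is Proposition~\ref{prop:unifDistrib}, a purely combinatorial computation of the weight distributions $p_1^u$ and $p_2^u$ of the two statistics $w_1$ and $w_2$ evaluated on a uniformly random word $\ev' \in S_w$. The plan is to count, for a fixed target value $i$, the number of weight-$w$ vectors $\ev' = (\ev'_1, \ev'_2) \in \F_2^{n/2} \times \F_2^{n/2}$ with $w_1(\ev') = i$, then divide by $\binom{n}{w}$.

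First I would partition the $n/2$ coordinate positions $\{1,\dots,n/2\}$ according to the pair $(e'_j, e'_{j+n/2})$: let $A$ be the set of positions where exactly one of the two is $1$ (so $|A| = w_1(\ev') = i$), and $B$ the set where both are $1$ (so $|B| = w_2(\ev')$); on the remaining positions both are $0$. The total weight is then $|\ev'| = |A| + 2|B|$, which forces $|B| = (w-i)/2$, and hence $w \equiv i \pmod 2$ is necessary (otherwise the count is $0$, as claimed). Now the counting factors cleanly: choose the $(w-i)/2$ positions for $B$ in $\binom{n/2}{(w-i)/2}$ ways; among the remaining $n/2 - (w-i)/2$ positions choose the $i$ positions for $A$ in $\binom{n/2-(w-i)/2}{i}$ ways; and for each of those $i$ positions in $A$ there are $2$ choices (namely $(1,0)$ or $(0,1)$), giving a factor $2^i$. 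Multiplying and dividing by $\binom{n}{w}$ yields exactly
$$
p_1^u(i) = 2^i\,\frac{\binom{n/2}{(w-i)/2}\binom{n/2-(w-i)/2}{i}}{\binom{n}{w}}.
$$
The identity $p_2^u\!\left(\tfrac{w-i}{2}\right) = p_1^u(i)$ is then immediate: the event $w_2(\ev') = (w-i)/2$ is exactly the event $w_1(\ev') = i$ under the constraint $|\ev'| = w$, since $|A| + 2|B| = w$ makes $|A|$ and $|B|$ determine each other.

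There is essentially no obstacle here — the only thing to be a little careful about is bookkeeping the order in which the position sets $B$ and $A$ are chosen (one must choose $B$ first, then $A$ from the complement, to avoid double counting) and checking the degenerate/parity cases. I would also remark in passing that equation~\eqref{eq:E2} used later in the paper is just a conditional rephrasing of this same count, so the proposition doubles as the justification there.
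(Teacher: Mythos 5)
Your proof is correct, and since the paper offers no proof at all (it simply states that "it is straightforward to check that the distributions $p_i^u$ are given by" the displayed formula), your counting argument is precisely the verification the paper is implicitly invoking. The partition of $\{1,\dots,n/2\}$ into the set $B$ of doubly-supported pairs, the set $A$ of singly-supported pairs, and the rest, together with the weight constraint $|A| + 2|B| = w$, is the natural decomposition, and the factor $2^i$ for the two orientations $(1,0)$ or $(0,1)$ of each position in $A$ is exactly what produces the stated formula.

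One small remark on the final sentence of your write-up: the identity $p_2^u\bigl(\tfrac{w-i}{2}\bigr) = p_1^u(i)$ does follow as you say because $w_1(\ev') + 2\,w_2(\ev') = |\ev'| = w$ deterministically ties the two statistics together on $S_w$; it is worth stating explicitly that this linear relation is what makes the two events coincide, which you do. Your observation that equation~\eqref{eq:E2} in the appendix is a conditional rephrasing of the same count is also accurate. No gaps.
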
 
On the other hand the distributions $p_i^{sdd}$ of the source distortion decoder are given by
\begin{restatable}{proposition}{propsddDistrib}
	\label{prop:sddDistrib}
	Let $\theta$ denote the internal coin used in the probabilistic
	algorithm $D$ and 
	$$p(i) \eqdef \prob_{\sv,\theta}(|D(\Hm,\sv)|=i)$$ 
	If two
	executions of $D$ are independent, then for all $i$ in
	$\{0,\dots,w\}$ such that $w-i \equiv 0 \pmod{2}$ we have
	\begin{equation}
	\label{eq:p1sddi}
	p_{2}^{sdd}\left( \frac{w-i}{2} \right) = p_{1}^{sdd}(i) = \frac{x_{i} \;  p(i)}{p_{w}^{1}}
	\end{equation}
	where 
	\begin{displaymath}
	p_{w}^{1}  \eqdef \sum_{\substack{0 \leq j \leq w \\ j \equiv w
			\pmod {2} }} x_{j} \; p(j)
	\end{displaymath}
	and $p_{1}^{sdd}(i) = 0$ for other choices of $i$.
\end{restatable}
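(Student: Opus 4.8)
The plan is to track the vector $\ev_V$ produced inside Algorithm~\ref{alg:2algsgn}, since the whole statement is about its weight. I would start from two deterministic observations about an output $\ev=(\ev_U,\ev_U+\ev_V)$ of the algorithm. First, a coordinate $j\in\{1,\dots,n/2\}$ is counted by $w_1$ exactly when $(\ev_U)_j\neq(\ev_U+\ev_V)_j$, i.e.\ when $(\ev_V)_j\neq0$, so $w_1(\ev)=|\ev_V|$. Second, every $\ev\in S_w$ obeys the elementary identity $w_1(\ev)+2\,w_2(\ev)=|\ev|$, and Algorithm~\ref{alg:2algsgn} only outputs vectors of weight $w$, hence $w_2(\ev)=(w-|\ev_V|)/2$. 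Consequently $\{w_1(\ev)=i\}$ and $\{w_2(\ev)=(w-i)/2\}$ are the same event, which already gives $p_2^{sdd}\!\big((w-i)/2\big)=p_1^{sdd}(i)$; moreover the stopping test of the loop only accepts $\ev_V$ with $|\ev_V|\le w$ and $w-|\ev_V|$ even, so $p_1^{sdd}(i)=0$ for all other $i$. It remains to compute $p_1^{sdd}(i)=\prob(|\ev_V|=i)$ for admissible $i$.

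Next I would read the rejection loop as a rejection sampler, conditioning on the syndrome $\sv_2$ (the only part of the input the loop sees). Write $q_{\sv_2}(i)\eqdef\prob_\theta(|D(\Hm_V,\sv_2)|=i)$ for the weight law of one call of $D$. Since distinct executions of $D$ are independent (the hypothesis of the proposition), the iterations of the loop are i.i.d.\ given $\sv_2$: one iteration is accepted, with accepted weight exactly $i$, with probability $x_i\,q_{\sv_2}(i)$ when $i\le w$ and $w-i$ is even, and is accepted at all with probability $Z(\sv_2)\eqdef\sum_{j\le w,\ w-j\ \text{even}}x_j\,q_{\sv_2}(j)$, which is positive for the $\xv$ we use, so the loop terminates almost surely. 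The usual geometric-series computation then gives
\[
\prob\bigl(\,|\ev_V|=i \mid \sv_2\,\bigr)=\frac{x_i\,q_{\sv_2}(i)}{Z(\sv_2)}.
\]

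It then remains to remove the conditioning. Since $\sv$, hence $\sv_2$, is uniform, $p_1^{sdd}(i)=\esp_{\sv_2}\bigl[x_i\,q_{\sv_2}(i)/Z(\sv_2)\bigr]$. Here I would invoke that $D$ behaves uniformly for $\Hm_V$ --- the standing assumption under which Theorem~\ref{th:statDec} is proved --- so that the single-call weight law $q_{\sv_2}$, and therefore the normalisation $Z(\sv_2)$, do not depend on $\sv_2$: $q_{\sv_2}(i)=p(i)$ for all $i$ and $Z(\sv_2)=\sum_{j\le w,\ w-j\ \text{even}}x_j p(j)\eqdef p_w^1$. The expectation then factors as $x_i\,\esp_{\sv_2}[q_{\sv_2}(i)]/p_w^1=x_i\,p(i)/p_w^1$, and summing over admissible $i$ confirms $p_w^1$ is indeed the normalising constant. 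Together with the identity of the first paragraph this is exactly the claimed formula.

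The main obstacle is this last interchange of the average over $\sv_2$ with the rejection-sampling normalisation: one has to argue that the weight distribution of a single call of $D$ is the same whichever syndrome is being decoded. This is precisely what the ``behaves uniformly'' property encodes, and it is where the randomness of $U$ and $V$ enters --- for a random code the number of weight-$i$ words with a prescribed syndrome is essentially the same for every syndrome, so the output weight of Prange's decoder is essentially independent of its input. The bookkeeping with $w_1,w_2$ in the first paragraph and the geometric-series evaluation are routine by comparison.
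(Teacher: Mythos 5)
Your proof follows the same geometric-series evaluation of the rejection loop as the paper's, but you condition on $\sv_2$ where the paper does not, and this extra care surfaces a real subtlety that the paper's argument glosses over. The paper writes the per-iteration probabilities directly as $\beta_i = x_i\,p(i)$ and $\alpha = 1-\sum_j x_j\,p(j)$ with the \emph{unconditional} $p(i)=\prob_{\sv,\theta}(|D(\Hm_V,\sv)|=i)$, i.e.\ it treats the loop iterations as unconditionally i.i.d.\ and sums $\sum_l\alpha^l\beta_i$. But the loop fixes $\sv_2$ and only refreshes $\theta$, so the iterations are i.i.d.\ only \emph{given} $\sv_2$; the unconditional answer is therefore $\esp_{\sv_2}\bigl[\beta_i(\sv_2)/(1-\alpha(\sv_2))\bigr]$, which is not in general $\esp[\beta_i]/(1-\esp[\alpha])$. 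You are right that an extra ingredient is needed here to match the claimed formula, and you are right that the ingredient is that the conditional weight law of a single call of $D$ should not depend on the syndrome.

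Where you slip is in asserting that ``behaves uniformly for $\Hm_V$'' by itself gives this. Unwinding the definition, ``behaves uniformly'' says $\prob_{\sv,\theta}(\ev = D(\Hm_V,\sv))$ is a function of $|\ev|$ only; since the output necessarily lies in the coset of syndrome $\sv$, this amounts to $\prob_\theta\bigl(\ev = D(\Hm_V,\ev\Hm_V^T)\bigr) = g(|\ev|)$ for some function $g$. The conditional weight law is then $q_{\sv_2}(i) = g(i)\cdot A_i(\sv_2)$, where $A_i(\sv_2)$ is the number of weight-$i$ words with syndrome $\sv_2$ — and this coset weight enumerator does depend on $\sv_2$ in general (for instance $\sv_2=\mathbf{0}$ is degenerate). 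What you actually need is that $A_i(\sv_2)$ is essentially independent of $\sv_2$, a stronger property that holds only approximately and only for typical random $\Hm_V$. So the hypothesis you reach for is the right kind of hypothesis but is not quite sufficient as stated; the paper's own proof has the same gap without naming it, and indeed the proposition, read literally with only the independence assumption, is not exact. Your write-up is therefore the more honest account of what the argument establishes.
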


\begin{proof}
	Let $\ev$ be the output of Algorithm \ref{alg:2algsgn}. Recall that 
	$$
	p_{1}^{sdd}(j) \eqdef \mathbb{P}_{\ev} \left( w_{1}(\ev) = j \right) = \mathbb{P}_{\sv,\theta}(|D(\Hm_{V},\sv)| = j).
	$$
	As two executions of $D$ are independent, by a disjunction of independent events the probability to get an error $\ev$ such that $w_{1}(\ev) = i$ is given by:
	\begin{equation} \sum_{l=0}^{+\infty} \alpha^{l} \beta_{i}= \frac{\beta_i}{1-\alpha} \label{eq:geometric}
	\end{equation}
	where $\alpha$ denotes the probability that the output of $D$ at Instruction \ref{ins:evv} of Algorithm \ref{alg:2algsgn}
	is rejected and $\beta_{i}$ the probability to have an error of weight $i$ which is accepted. 
	These probabilities are readily seen to be equal to:	
	\[ \beta_{i} = p(i)  x_{i} \quad ; \quad \alpha = 1 - \sum_{\substack{0 \leq j \leq w \\ j \equiv w \pmod {2} }} x_{j}  p(j). \]
	Plugging this expression in \eqref{eq:geometric} finishes the proof.
	$\qed$

\end{proof}

Let us recall that $\cD_w$ is the distribution $\{ D_{\Hsec,w}(\sv) : \sv \Unif \mathbb{F}_{2}^{n-k} \}$ where $D_{\Hsec,w}$ is Algorithm \ref{alg:2algsgn}. Recall now Theorem \ref{th:statDec}
\propostatDec*

\begin{proof}
	Let us first introduce some notation. Let $\ev$ be a random variable whose distribution is $\cU_w$, \textit{i.e.} the uniform distribution over 
	$S_w$, and let $\tilde{\ev}$ be a random variable whose distribution is $\cD_w$. The last random variable can be viewed in a 
	natural way as the output of Algorithm \ref{alg:2algsgn} and is of the form $\tilde{\ev} = (\ev_U,\ev_U+\ev_V)$. We view $\ev_U$ and $\ev_V$ as random variables. We have
	\begin{equation}\label{eq:rhoDwUw}
	\rho\left( \mathcal{D}_{w}, \mathcal{U}_{w} \right) = \sum_{\ev_1,\ev_2 \in \F_2^{n/2}:|(\ev_1,\ev_2)|=w} 
	\left| \prob(\tilde{\ev}=(\ev_1,\ev_2)) - \prob(\ev=(\ev_1,\ev_2)) \right|.
	\end{equation}
	We notice now that
	\begin{eqnarray}
	\prob(\tilde{\ev}=(\ev_1,\ev_2)) & = & \prob(\ev_U=\ev_1|\ev_V=\ev_1+\ev_2)\prob(\ev_V=\ev_1+\ev_2) \nonumber\\
	& = & \prob(\ev_U=\ev_1|\ev_V=\ev_1+\ev_2)\prob_{\sv_2,\theta}(D(\Hm_{V},\sv_2)=\ev_1+\ev_2).\label{eq:evU}
	\end{eqnarray}
	From the assumption on the uniform behavior of $D$ we deduce that
	$\prob_{\sv_2,\theta}(D(\Hm_{V},\sv_{2})=\ev_1+\ev_2)$ only depends on the Hamming weight $|\ev_1+\ev_2|$ of $\ev_1+\ev_2$.
	We recall now that in Algorithm \ref{alg:2algsgn} we have
	$$ 
		\ev_{U} = D(\Hm_{U},\sv_{1},\ev_{V}) 
	$$

	Let 
	$$
		n ' \eqdef  n/2 - |\ev_{V}| \quad ; \quad
		w' \eqdef \frac{w - |\ev_{V}|}{2}
	$$
	and $\Hm_{U}'',\sv_{1}''$ are elements given by $(\Hm_{U},\sv_{1},\ev_{V})$ in Proposition \ref{prop:syndPuncAlt} in \S\ref{subsec:2.2}.
	It will now be convenient to split $\ev_U$ and $\ev_1$  into two parts: 
	the first one, denoted respectively by $\ev'_U$, and $\ev'_1$ is the restriction of these vectors
	to the complement of the support of $\ev_V$, whereas the second one, denoted respectively by 
	$\ev^{\prime\prime}_U$ and $\ev^{\prime\prime}_1$  is the restriction of these vectors to the support of $\ev_V$.
	With this notation, we now notice that
	\begin{eqnarray}
	\prob(\ev_U=\ev_1|\ev_V=\ev_1+\ev_2) & =  &
	\prob_{(\sv_1,\sv_2),\theta}(\ev'_U=\ev'_1,\ev^{\prime\prime}_U=\ev^{\prime\prime}_1|\ev_V=\ev_1+\ev_2) \nonumber\\
	& = & \prob_{\sv_1,\theta}(\ev'_U=\ev'_1)\prob_{\sv_1,\theta}(\ev^{\prime\prime}_U=\ev^{\prime\prime}_1) \nonumber\\
	& = & \prob_{\sv_1,\theta}(D_{w'}(\Hm_{U}'',\sv_{1}'')=\ev'_1)\prob_{\sv_1,\theta}(\ev^{\prime\prime}_U=\ev^{\prime\prime}_1) \nonumber\\
	& = & \frac{1}{\binom{n'}{w'}} \frac{1}{2^{n/2-n'}}.\label{eq:evUcond}
	\end{eqnarray}
	The last equality follows from the fact that 
	$D$ behaves uniformly on $\Hm_{U}''$ for all patterns $\ev_{V}$ and therefore the output of $D_{w'}(\Hm_{U}'',\sv_{1}'')$ is the uniform distribution over the
	set of words of weight $w'$ in $\F_2^{n'}$. Equality \eqref{eq:evUcond} implies that 
	$\prob(\ev_U=\ev_1|\ev_V=\ev_1+\ev_2)$ only depends on the weight of $w'$ which itself only depends on the weight 
	of $\ev_1+\ev_2$. Since $\prob_{\sv_2,\theta}(D(\Hm_{V},\sv_2)=\ev_1+\ev_2)$ has the same property, we deduce 
	from \eqref{eq:evU}, that $\prob(\tilde{\ev}=(\ev_1,\ev_2))$ only depends on the weight of $\ev_1+\ev_2$.
	Obviously $\prob(\ev=(\ev_1,\ev_2))$ also has this property. We may therefore write
	\begin{eqnarray*}
		\prob(\tilde{\ev}=(\ev_1,\ev_2)) & = &f(|\ev_1+\ev_2)|)\\
		\prob(\ev=(\ev_1,\ev_2)) & = &g(|\ev_1+\ev_2)|)
	\end{eqnarray*}
	for some functions $f$ and $g$. Plugging these expressions in \eqref{eq:rhoDwUw} yields
	by bringing in the quantity 
	$m_i$ which is the number of $\ev$ in $S_w$ such that $w_1(\ev)=i$:
	\begin{eqnarray*}
		\rho\left( \mathcal{D}_{w}, \mathcal{U}_{w} \right) &=& 
		\sum_{\substack{0 \leq i \leq w \\ i \equiv w \pmod {2} }} \sum_{\mv \in S_{w}|w_{1}(\mv)=i} 
		\left| \prob(\tilde{\ev}=\mv) - \prob(\ev=\mv) \right|\\
		&= & \sum_{\substack{0 \leq i \leq w \\ i \equiv w \pmod {2} }}
		m_i \left|f(i) - g(i) \right|\\
		&= & \sum_{\substack{0 \leq i \leq w \\ i \equiv w \pmod {2} }}
		\left|m_i(f(i) - g(i)) \right|\\
		& = & \sum_{\substack{0 \leq i \leq w \\ i \equiv w \pmod {2} }}
		\left|\prob(w_1(\tilde{\ev})=i) - \prob(w_1(\ev)=i)  \right|\\
		& = & \rho(p_1^{sdd},p_1^u).
	\end{eqnarray*}
	The last part of the proposition follows from the fact that the $p_{1}^{sdd}(i)$'s are functions of the non-rejection probability vector $\xv = (x_{i})$.
	Thanks to what we just proved, we can compute the $x_{i}$'s to have $\rho(p_1^{sdd},p_1^u)=0$.
	This will imply that the output of Algorithm \ref{alg:2algsgn} is the uniform distribution. Indeed,
	we first notice that for all $i$:
	\[0 \leq  x_{i} = \frac{1}{M_{rs}} \;\; \frac{p_{1}^{u}(i)}{p(i)} = \left( \mathop{\inf}\limits_{\substack{0 \leq j \leq w \\ w \equiv j \pmod {2} }} \frac{p(j)}{p_{1}^{u}(j)} \right) \frac{p_{1}^{u}(i)}{p(i)} \leq   \frac{p(i)}{p^{1}_{u}(i)} \;\; \frac{p_{1}^{u}(i)}{p(i)} = 1   \]
	which allows to assert that $\xv$ is a probability vector. We now use the following equations for all $i$:
	\begin{align*}
	p_{1}^{sdd}(i) &=  \frac{x_{i} \;  p(i)}{p_{w}^{1}} \\
	&= \frac{p_{1}^{u}(i)}{M_{rs} \mathop{\sum}\limits_{\substack{0 \leq j \leq w \\ w \equiv j \pmod {2} }}  \frac{1}{M_{rs}} p_{1}^{u}(j)} \\
	&= p_{1}^{u}(i) \\
	\end{align*}
	where the last line relies on the equality $\mathop{\sum}\limits_{\substack{0 \leq j \leq w \\ w \equiv j \pmod {2} }} p_{1}^{u}(j) = 1$.
	$\qed$
\end{proof}

\subsection{Proof of Proposition \ref{prop:weightDistribPrange} and discussion related to it}
\label{sec:proofPrange}

Here the internal coins are over the choices of the $n-k$ positions $I$ (columns of the parity-check matrix $\Hm$) 
we choose to invert in the Prange algorithm.
We have here
\begin{eqnarray*}
	p(w) & = & \sum_{\ev: |\ev|=w} \mathbb{P}_{\sv,I} \left( \ev = \Psd(\Hm,\sv) \right)\\
	& = & \sum_{I \subset \{1,\dots,n\}:|I|=n-k} \prob(I)\sum_{\ev: |\ev|=w}  \prob_{\sv}(\ev = \Psd(\Hm,\sv)|I)\\
	& = & \sum_{I \subset \{1,\dots,n\}:|I|=n-k} \prob(I)\sum_{\ev: |\ev|=w, \supp(\ev) \subset I} \frac{1}{2^{n-k}}\\
	& = & \frac{\binom{n-k}{w}}{2^{n-k}}.
\end{eqnarray*}
This ends the proof of Proposition \ref{prop:weightDistribPrange}. $\qed$

\section{Proofs of results of \S\ref{sec:keyAttack}}
\subsection{Proof of Proposition \ref{prop:density} in \S\ref{sec:keyAttack}}
\label{sec:proof_prop:density}
 Let us recall Proposition \ref{prop:density}
\propdensity*

We will need the following lemma.

\begin{lemma}
	\label{lem:inrc}
 Let $\yv$ be a non-zero vector of $\mathbb{F}_{2}^{n}$ and $\sv$ an arbitrary element in $\mathbb{F}_{2}^{r}$. We choose a matrix $\Hm$ of size $r \times n$ uniformly at random among the set of
 $r \times n$ binary matrices. In this case
 \[ \prob \left( \Hm \yv^{T} = \sv^{T} \right) = \frac{1}{2^{r}} \]
\end{lemma}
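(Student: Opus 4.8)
The statement to prove is Lemma~\ref{lem:inrc}: for a nonzero $\yv \in \F_2^n$ and arbitrary $\sv \in \F_2^r$, a uniformly random $r \times n$ binary matrix $\Hm$ satisfies $\prob(\Hm\yv^T = \sv^T) = 2^{-r}$.

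The plan is to decompose the event row by row. Writing $\Hm = (\hv_1, \dots, \hv_r)^T$ where the rows $\hv_1, \dots, \hv_r$ are the rows of $\Hm$, the condition $\Hm\yv^T = \sv^T$ is equivalent to the conjunction of the $r$ scalar conditions $\langle \hv_i, \yv \rangle = s_i$ for $i = 1, \dots, r$. Since the entries of $\Hm$ are independent and uniform over $\F_2$, the rows $\hv_1, \dots, \hv_r$ are independent and each is uniformly distributed over $\F_2^n$. Hence the $r$ events $\{\langle \hv_i, \yv \rangle = s_i\}$ are mutually independent, and it suffices to show that for a single uniformly random $\hv \in \F_2^n$ and any target bit $s \in \F_2$ we have $\prob(\langle \hv, \yv \rangle = s) = \tfrac12$; the result then follows by multiplying the $r$ equal factors.

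For the single-row claim, I would argue as follows. Since $\yv \neq \mathbf{0}$, fix an index $j$ with $y_j = 1$. Condition on the values of all coordinates $h_k$ for $k \neq j$; write $t \eqdef \sum_{k \neq j} h_k y_k$. Then $\langle \hv, \yv \rangle = t + h_j$, so the condition $\langle \hv, \yv \rangle = s$ becomes $h_j = s - t = s + t$ (in $\F_2$), a single determined value for $h_j$. Since $h_j$ is uniform over $\F_2$ and independent of the other coordinates, this occurs with probability $\tfrac12$ regardless of $t$. Averaging over the conditioning gives $\prob(\langle \hv, \yv \rangle = s) = \tfrac12$.

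There is essentially no obstacle here; the argument is elementary linear algebra over $\F_2$ combined with independence of the matrix entries. The only point requiring a line of care is the reduction from the matrix-level event to the product of row-level events, which rests on the fact that a uniform random binary matrix has i.i.d.\ uniform rows — immediate from the definition of the uniform distribution on $\F_2^{r \times n}$ as the product measure on entries. Combining the $r$ independent factors of $\tfrac12$ yields $\prob(\Hm\yv^T = \sv^T) = 2^{-r}$, as claimed. I would also remark that the same computation shows the map $\Hm \mapsto \Hm\yv^T$ pushes the uniform distribution on matrices to the uniform distribution on $\F_2^r$ whenever $\yv \neq \mathbf 0$, which is the form in which the lemma is used (e.g.\ as Equation~\eqref{eq:prob} in the proof of Lemma~\ref{lem:syndromeDistribution}).
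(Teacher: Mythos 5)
Your proof is correct and follows essentially the same route as the paper: decompose row by row, use the fact that $\yv\neq\mathbf{0}$ to pick a pivot coordinate, condition on the remaining entries of the row, and observe that the pivot entry of $\Hm$ is forced to a single value, which happens with probability $\tfrac12$; the $r$ independent rows then give $2^{-r}$. The only cosmetic difference is that the paper assumes $y_1=1$ without loss of generality and solves for $h_{i1}$, whereas you pick an arbitrary index $j$ with $y_j=1$.
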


\begin{proof}
The coefficient of $\Hm$ at row $i$ and column $j$ is denoted by $h_{ij}$, whereas the coefficients of $\yv$ and $\sv$ are denoted 
by $y_i$ and $s_i$ respectively.
The probability we are looking for is the probability to have 
\begin{equation}
\label{eq:probabilite}
\sum_{j} h_{ij} y_j = s_i
\end{equation} for all $i$ in $\{1,\dots, r\}$. 
Since $\yv$ is non zero, it has at least one non-zero coordinate. Without loss of generality, we may assume that $y_1=1$. We may rewrite 
\eqref{eq:probabilite} as 
$h_{i1} = \sum_{j>1} h_{ij} y_j$. This event happens with probability $\frac{1}{2}$ for a given $i$ and with probability $ \frac{1}{2^{r}}$
on all $r$ events simultaneously due to the independence of the $h_{ij}$'s. 
\end{proof}

The last part of Proposition \ref{prop:density} is a direct application of this lemma.
We namely have

\begin{proposition}$ $
	\label{prop:wDistribRCode}
	Let $a(w)$ be the expected number of codewords of weight $w$ in a binary linear code $\cC$ of length $n$ whose parity-check matrix is chosen 
	$\Hm$ uniformly at random among all binary matrices of size $r \times n$. We have
	$$a(w) = \frac{\binom{n}{w}}{2^{r}}.$$
\end{proposition}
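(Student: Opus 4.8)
The plan is to prove Proposition~\ref{prop:wDistribRCode} by a direct application of linearity of expectation together with Lemma~\ref{lem:inrc}. For a fixed weight $w>0$, observe that a vector $\yv \in \F_2^n$ of Hamming weight $w$ is automatically non-zero, so Lemma~\ref{lem:inrc} applies to it. The key idea is to write the number of weight-$w$ codewords as a sum of indicator random variables, one for each candidate vector, and to take expectations term by term.

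First I would introduce, for each $\yv \in \F_2^n$ with $|\yv| = w$, the indicator random variable $X_\yv$ which equals $1$ if $\Hm \yv^T = \mathbf{0}$ (equivalently $\yv \in \cC$) and $0$ otherwise, the randomness being over the uniform choice of $\Hm$ among all $r \times n$ binary matrices. Then the number of codewords of weight $w$ in $\cC$ is exactly $N_w \eqdef \sum_{\yv : |\yv| = w} X_\yv$, since every such $\yv$ is non-zero and hence the weight-$w$ codewords are precisely those $\yv$ of weight $w$ lying in $\cC$. Applying Lemma~\ref{lem:inrc} with the target syndrome $\sv = \mathbf{0}$ gives $\esp(X_\yv) = \prob(\Hm \yv^T = \mathbf{0}) = \tfrac{1}{2^r}$ for each such $\yv$.

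Finally, by linearity of expectation,
$$
a(w) = \esp(N_w) = \sum_{\yv : |\yv| = w} \esp(X_\yv) = \binom{n}{w} \cdot \frac{1}{2^r},
$$
since there are exactly $\binom{n}{w}$ vectors of weight $w$ in $\F_2^n$. This is the claimed formula, and it completes the proof of Proposition~\ref{prop:wDistribRCode}; the remark in Proposition~\ref{prop:density} that the random code of length $n$ has dimension $k_U+k_V$ with high probability follows from a standard rank argument and is not needed here.

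There is no real obstacle in this argument: the only point requiring a moment's care is the restriction $w>0$, which guarantees that every weight-$w$ vector is non-zero so that Lemma~\ref{lem:inrc} is applicable (for $w=0$ the zero word is always a codeword and the formula would need separate handling, but this case is explicitly excluded in the statement of Proposition~\ref{prop:density}).
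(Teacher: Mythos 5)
Your proof is correct and follows essentially the same route as the paper's: both express the number of weight-$w$ codewords as a sum of indicator variables over the $\binom{n}{w}$ vectors of weight $w$, apply linearity of expectation, and invoke Lemma~\ref{lem:inrc} (with $\sv=\mathbf{0}$) to get $\prob(\Hm\yv^T=\mathbf{0})=2^{-r}$ for each non-zero $\yv$. The only difference is cosmetic: you explicitly flag that $w>0$ is what makes each weight-$w$ vector non-zero and hence Lemma~\ref{lem:inrc} applicable, a point the paper's proof leaves implicit.
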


	\begin{proof} Let $Z \eqdef \sum_{ \xv \in \mathbb{F}_{2}^{n}: |\xv|=w} Z_{\xv}$ where $Z_{\xv}$ is the indicator function of the event ``$\xv$ is in $\cC$''. 
	We have
	\begin{eqnarray*}
	a(w) & = & \esp(Z) \\
	& = & \sum_{ \xv \in \mathbb{F}_{2}^n: |\xv|=w} \esp(Z_{\xv}) \\
	& = & \sum_{ \xv \in \mathbb{F}_{2}^n: |\xv|=w} \prob(\xv \in \cC)\\
	& = &  \sum_{ \xv \in \mathbb{F}_{2}^n: |\xv|=w} \prob(\Hm \xv^T = 0)\\
	& = &  \sum_{ \xv \in \mathbb{F}_{2}^n: |\xv|=w} \frac{1}{2^{r}} \\
	& = & \frac{\binom{n}{w}}{2^{r}}.
	\end{eqnarray*}
 	\end{proof}
	 This proves the part of Proposition \ref{prop:density} dealing with the expected weight distribution of a random 
	 linear code. 
	We are ready now to prove Proposition \ref{prop:density} concerning the expected weight distribution of a random 
	$(U,U+V)$ code.
	
\noindent
{\bf Weight distributions of} $(U,U) \eqdef \{(\uv,\uv): \uv \in U\}$ {\bf and} $(0,V) \eqdef \{(\mathbf{0},\vv): \vv \in V\}$.	This follows directly from Proposition \ref{prop:wDistribRCode} 
since $a_{(U,U)}(w)=0$ for odd and $a_{(U,U)}(w)$ is equal to the expected number of codewords of weight $w/2$ in 
a random linear code of length $n/2$ with a parity-check matrix of size $(n/2-k_U) \times n/2$ when $w$ is even. On the other hand $a_{(0,V)}$ is equal to the expected number of weight $w$ in a random linear code of length $n/2$ and with a 
parity-check matrix of size $(n/2-k_V)\times n/2$. In other words
\begin{eqnarray*}
a_{(U,U)}(w) & = & 0 \;\;\text{if $w$ is odd}\\
a_{(U,U)}(w) & = & \frac{\binom{n/2}{w/2}}{2^{n/2-k_U}}  \;\;\text{if $w$ is even}\\
a_{(0,V)}(w) & = & \frac{\binom{n/2}{w}}{2^{n/2-k_V}}
\end{eqnarray*}

\noindent
{\bf Weight distributions of} $(U,U+V)$.
The code $(U,U+V)$ is chosen randomly by picking up a parity-check matrix $\Hm_U$ of $U$ uniformly at random among the set of $(n/2-k_U)\times n/2$ binary 
matrices and a parity-check matrix $\Hm_V$ of $V$ uniformly at random among the set of $(n/2-k_V)\times n/2$ binary 
matrices.  Let $Z \eqdef \sum_{ \xv \in \mathbb{F}_{2}^{n}: |\xv|=w} Z_{\xv}$ where $Z_{\xv}$ is the indicator function of the event ``$\xv$ is in $(U,U+V)$''.

We have
\begin{eqnarray}
a_{(U,U+V)}(w)& =& \esp(Z) \nonumber \\
&= &\sum_{\xv \in \mathbb{F}_{2}^{n}:|\xv|=w} \esp(Z_{\xv}) \nonumber\\
&=& \sum_{\xv \in \mathbb{F}_{2}^{n}:|\xv|=w} \prob(Z_{\xv}=1) \nonumber \\
& = & \sum_{\xv \in \mathbb{F}_{2}^{n}:|\xv|=w} \prob(\xv \in (U,U+V)) \label{eq:auuvw}
\end{eqnarray}
By writing $\xv=(\xv_1,\xv_2)$ where $\xv_i$ is in $\mathbb{F}_{2}^{n/2}$ we know that $\xv$ is in $(U,U+V)$ if and only if
at the same time $\xv_1$ is in $U$ and $\xv_2 + \xv_1$ is in $V$, that is
$$\Hm_U \xv_1^T = 0,\;\;\Hm_V \xv_1^{T} = \Hm_V \xv_2^{T}.$$ 

There are three cases to consider

\noindent
{\bf Case 1:} $\xv_1=0$ and $\xv_2 \neq 0$. In this case
\begin{equation}
\label{eq:prob1}
 \prob(\xv \in (U,U+V))= \prob(\Hm_V \xv_2^{T}=\mathbf{0}) = \frac{1}{2^{n/2-k_V}}
\end{equation}

\noindent
{\bf Case 2:} $\xv_1=\xv_2$. 
In this case
\begin{equation}
\label{eq:prob2}
 \prob(\xv \in (U,U+V))= \prob(\Hm_U \xv_1^{T}=\mathbf{0}) = \frac{1}{2^{n/2-k_U}}
\end{equation}

\noindent
{\bf Case 3:} $\xv_1 \neq \xv_2$ and $\xv_1 \neq 0$.
In this case
\begin{equation}
\label{eq:prob3}
 \prob(\xv \in (U,U+V))= \prob(\Hm_U \xv_1^{T}=\mathbf{0} \wedge \Hm_V (\xv_1^{T}+\xv_2^{T})=\mathbf{0} ) = \frac{1}{2^{n/2-k_U}} \frac{1}{2^{n/2-k_V}}
\end{equation}
 Note that we used in each case Lemma \ref{lem:inrc}.
 
 By substituting $ \prob(\xv \in (U,U+V))$  in \eqref{eq:auuvw} we obtain for
 even $0< w \leq n $  
\begin{eqnarray*}
a_{(U,U+V)}(w) & =& \frac{\binom{n/2}{w/2}}{2^{n/2-k_U}} + \frac{\binom{n/2}{w}}{2^{n/2-k_V}} + \frac{1}{2^{n-k_U-k_V}} \left( \binom{n}{w}-\binom{n/2}{w}-\binom{n/2}{w/2}\right)
\end{eqnarray*}
and for odd $w \leq n$  
\begin{eqnarray*}
a(w) & =&   \frac{\binom{n/2}{w}}{2^{n/2-k_V}} + \frac{1}{2^{n-k_U-k_V}} \left( \binom{n}{w}-\binom{n/2}{w}\right)
\end{eqnarray*}
which concludes the proof. $\qed$

\subsection{Proof of Theorem \ref{prob:UVDist}}
	\label{subsec:proofUVDist}

	Recall first our problem.

	\probUVNP*

	\theoUVNP*

	\noindent We will use the generator matrix point of view to prove this theorem. Recall that a generator matrix of binary linear code of length $n$ is a matrix $\Gm \in \F_{2}^{k\times n}$ with $k \leq n$ such that:
	$$
	\cC = \{ \mv\Gm \mbox{ } : \mbox{ } \mv \in \F_{2}^{k} \}.
	$$
	In other words $\cC$ consists of all linear combinations of rows of $\Gm$ (they form a generator family of $\cC$).
	Moreover when $\Gm$ is of full rank the code $\cC$ has dimension $k$. The proof that Problem \ref{prob:UVDist} is NP-complete relies on the NP-completeness of the Three Dimensional Matching problem:
	
	\begin{problem}[\textup{3DM}]~\\
		\label{prob:3DM}
		\begin{tabular}{ll}
			Instance: & A matrix $\Gm_{\textup{3DM}} \in \F_{2}^{s\times 3t}$ where all its rows have a Hamming weight of $3$,\\
			Question: & Do there exist $t$ rows of $\Gm_{\textup{3DM}}$ which have pairwise disjoint supports? \\
		\end{tabular}
	\end{problem}

	\begin{remark} Without loss of generality we can always assume for this problem that $s \geq t + 1$ and there are no zero columns in $\Gm_{\textup{3DM}}$, otherwise verifying whether the problem has a solution is straightforward.
	\end{remark}
	Moreover, we are going to use this problem for our reduction by using some tricks taken from \cite{BGK17,W06} (by adding some identity matrices and using minimum distance arguments). 
	The following fact will be useful for our proof:

	\begin{fact} \label{lemm:spcode} Let $(\cv_{1},\cdots,\cv_{k})$ be a basis of a code $\cC$. We have
		$$
		|\Sp(\cC)| \leq \sum_{i=1}^{k}|\cv_{i}|.
		$$
		Furthermore, $|\Sp(\cC)| = \sum_{i=1}^{k} |\cv_{i}| \iff (\cv_{i})_{1\leq i \leq k} \mbox{\textup{ have pairwise disjoint supports}}$.
	\end{fact}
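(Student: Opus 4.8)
The plan is to reduce Fact~\ref{lemm:spcode} to an elementary set-counting inequality. First I would establish that the support of the code is exactly the union of the supports of the basis vectors, namely
$$
\Sp(\cC) = \bigcup_{i=1}^{k} \Sp(\cv_i).
$$
The inclusion ``$\supseteq$'' is immediate since each $\cv_i$ belongs to $\cC$, so $\Sp(\cv_i) \subseteq \Sp(\cC)$. For ``$\subseteq$'', I would take an arbitrary $\cv = \sum_{i=1}^{k} m_i \cv_i \in \cC$ (possible because $(\cv_i)_i$ is a basis) and observe coordinatewise that $c_j \neq 0$ forces $(\cv_i)_j \neq 0$ for at least one $i$; hence $\Sp(\cv) \subseteq \bigcup_i \Sp(\cv_i)$, and taking the union over all $\cv \in \cC$ gives the claim.

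Next, setting $A_i \eqdef \Sp(\cv_i)$ so that $|A_i| = |\cv_i|$, I would use the multiplicity-counting identity
$$
\sum_{i=1}^{k} |A_i| = \sum_{x \in \bigcup_{i} A_i} \bigl|\{\, i : x \in A_i \,\}\bigr| \;\geq\; \Bigl|\bigcup_{i=1}^{k} A_i\Bigr|,
$$
which together with the first display and $|A_i| = |\cv_i|$ yields $|\Sp(\cC)| \leq \sum_{i=1}^{k} |\cv_i|$. For the equality case, equality holds in the inequality above exactly when $|\{\, i : x \in A_i \,\}| = 1$ for every $x$ in the union, i.e.\ when no coordinate belongs to two distinct $A_i$'s; since each basis vector is nonzero (so each $A_i$ is nonempty), this is precisely the statement that the supports $\Sp(\cv_1), \dots, \Sp(\cv_k)$ are pairwise disjoint.

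This argument is short and essentially self-contained. The only point needing a touch of care — and the closest thing to an ``obstacle'' — is the characterization of the equality case in the union bound; I would handle it through the multiplicity-counting identity displayed above rather than by induction on $k$, since counting how many $A_i$ contain a given coordinate makes the equivalence with pairwise disjointness completely transparent. No result beyond the definitions of $\Sp(\cdot)$ and of a basis is required.
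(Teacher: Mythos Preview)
Your argument is correct. Note that the paper states this as a \emph{Fact} without proof --- it is treated as an elementary observation and no argument is supplied --- so there is nothing to compare against; your route via $\Sp(\cC)=\bigcup_i \Sp(\cv_i)$ followed by the multiplicity-counting form of the union bound is a clean and complete justification. The parenthetical remark that basis vectors are nonzero is harmless but not actually needed for the equality characterization, since pairwise disjointness of the $A_i$'s is equivalent to every coordinate lying in at most one $A_i$ regardless of whether some $A_i$ happen to be empty.
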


	\noindent In order to prove Theorem \ref{theo:UVNP} we introduce an ad-hoc problem, namely
	
	\begin{problem}[$(U,U+V)$-support distinguishing]~\\
		\label{prob:UVDistSup}
		\begin{tabular}{ll}
			Instance: & A generator matrix $\Gm \in \F_{2}^{k \times n}$, integers $k_{U}$ and $M$, \\
			Question: & Is there a permutation matrix
			$\Pm \in \F_{2}^{n\times n}$ such that $\Gm\Pm$ is a generator matrix of\\
			& a $(U,U+V)$-code where $\dim(U) = k_{U}$,  $|\Sp(U)| \geq M$ and $|\Sp(V)| = n/2$?
		\end{tabular}
	\end{problem}
	
	\noindent It is clearly an NP-problem, the following proposition gives its completeness.

	\begin{proposition} \label{prop:UVnp} Problem \ref{prob:UVDistSup} is \textup{NP}-complete.
	\end{proposition}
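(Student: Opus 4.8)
The plan is to establish membership in \textup{NP} and then give a polynomial-time many-one reduction from the Three Dimensional Matching problem (Problem~\ref{prob:3DM}). Membership is routine: given a candidate permutation matrix $\Pm$, standard linear algebra over $\F_2$ extracts from $\Gm\Pm$ the candidate $U$ (the projection of the generated code onto the first $n/2$ coordinates) and $V$ (the set of right halves of those codewords that vanish on the first $n/2$ coordinates); one then checks the dimension identity (dimension of the code generated by $\Gm\Pm$ equals $\dim U+\dim V$), checks that $(\xv,\xv)$ lies in the code for each $\xv$ in a basis of $U$ (these two facts together certify that $\Gm\Pm$ generates a $(U,U+V)$-code), and checks $\dim U=k_U$, $|\Sp(U)|\ge M$ and $|\Sp(V)|=n/2$. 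All of this is polynomial time.

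For \textup{NP}-hardness I would start from an instance $\Gm_{\textup{3DM}}\in\F_2^{s\times 3t}$ of 3DM, which may be assumed to have no zero column and $s\ge t+1$, and build an instance $(\Gm,k_U,M)$ of Problem~\ref{prob:UVDistSup} with $k_U=t$ and $M=3t$. Here $\Gm$ is obtained by placing the $3t$ ``matching columns'' of $\Gm_{\textup{3DM}}$ next to a duplicated copy of those columns (this is what will carry the diagonal part $(\uv,\uv)$ of a $(U,U+V)$-structure) and padding with suitable identity blocks, in the spirit of the 3DM-reductions of \cite{W06,BGK17}. The identity blocks are calibrated so that: (i) the whole code has a prescribed dimension $k=k_U+k_V$ and prescribed length $n$; (ii) for the intended coordinate split, the $V$-part is forced to have full support $n/2$; (iii) the only way to realize a $t$-dimensional $U$ compatible with the code structure is to take $U$ spanned by $t$ distinct rows of $\Gm_{\textup{3DM}}$. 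The numbers $k_V$ and $n$ are read off from the block sizes, so the reduction is polynomial time.

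Correctness then splits as usual. If 3DM has a solution — rows $\rv_{i_1},\dots,\rv_{i_t}$ of $\Gm_{\textup{3DM}}$ with pairwise disjoint supports — then these weight-$3$ vectors are independent, so taking $U$ to be their span gives $\dim U=t=k_U$ and, by the equality case of Fact~\ref{lemm:spcode}, $|\Sp(U)|=3t=M$; one exhibits the permutation $\Pm$ that lines up these rows (and their duplicates) in the first (resp.\ second) half and sends the remaining generators into the $V$-part, whose appended identity block forces $|\Sp(V)|=n/2$, and the membership check above confirms validity. Conversely, a valid $\Pm$ forces $U$ to be spanned by $t$ distinct rows $\rv_{i_1},\dots,\rv_{i_t}$ of $\Gm_{\textup{3DM}}$, and then $M=3t\le|\Sp(U)|\le\sum_{j=1}^{t}|\rv_{i_j}|=3t$, so equality holds in Fact~\ref{lemm:spcode}, i.e.\ the $\rv_{i_j}$ have pairwise disjoint supports, giving a 3DM solution.

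The step I expect to be the main obstacle is proving the rigidity asserted in point (iii) and used in the converse direction: one must rule out every ``spurious'' decomposition, i.e.\ show that no permutation can manufacture a $(U,U+V)$-structure of the right dimensions and support sizes by mixing rows of $\Gm_{\textup{3DM}}$ with padding rows, or by building $U$ out of codewords that are not single rows of $\Gm_{\textup{3DM}}$. This is handled by a careful weight- and support-counting argument: the identity blocks and the value of $n$ are tuned so that the codewords that can serve as generators of $U$ are exactly the permuted duplicated rows $(\rv_i,\rv_i)$, and so that the requirement $|\Sp(V)|=n/2$ cannot be met unless the coordinate split is the intended one. This bookkeeping follows the template of \cite{W06,BGK17}; the rest is elementary linear algebra over $\F_2$.
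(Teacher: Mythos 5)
Your high-level plan (NP membership, plus a reduction from 3DM in the spirit of \cite{W06,BGK17} with identity-block padding and the equality case of Fact~\ref{lemm:spcode}) matches the paper's, but the actual construction you describe is both different from the paper's and, more importantly, left unspecified at the exact point where the proof is hard. The paper does \emph{not} build $\Gm$ by duplicating $\Gm_{\textup{3DM}}$ to pre-create rows of the form $(\rv_i,\rv_i)$; it places $\bigl(I_t(7)\ \mathbf{0}\bigr)$ on the left half of disjoint coordinates and $\bigl(I_s(4)\ \Gm_{\textup{3DM}}\bigr)$ on the right half (so $\Gm$ itself generates a direct-sum code, not a $(U,U+V)$ code), and it takes $M=7t$ rather than your $M=3t$. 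The point of the $I_s(4)$ and $I_t(7)$ padding is precisely to pump every row up to weight exactly $7$, so that Lemma~\ref{lemm:distmin} holds: the code generated by $\Gm$ has minimum distance $7$ and its \emph{only} weight-$7$ codewords are the rows of $\Gm$. This single fact, combined with Lemma~\ref{lemm:distminUV} on minimum-weight codewords in a $(U,U+V)$-code, is what makes the rigidity argument go through (Lemmas~\ref{lemm:urows} and \ref{lemm:intUV}, which rule out Case 4 codewords and force the $U$-basis to consist of $t$ actual rows with disjoint supports). Without an analogous weight-control lemma, an unpadded $\Gm_{\textup{3DM}}$ has minimum distance at most $3$ and the set of weight-$3$ codewords need not be the rows, so the counting you invoke with $M=3t$ would not force a matching.

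You explicitly flag the rigidity of point (iii) as ``the main obstacle'' and then dispose of it with ``this is handled by a careful weight- and support-counting argument'' and ``this bookkeeping follows the template.'' That is the gap: you have not specified the block sizes, the value of $n$, or the padding needed to make the minimum-distance lemma true, and you have not proven the lemma or its consequences. Since the entire converse direction of the reduction rests on exactly those missing pieces, the proposal is a correct outline but not a proof; to complete it you would need to reproduce the analogues of Lemma~\ref{lemm:distmin}, Lemma~\ref{lemm:urows}, and Lemma~\ref{lemm:intUV} for whatever concrete $\Gm$ you settle on.
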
 
	
	\noindent The following lemma will be useful for the proof. 

       \begin{restatable}{lemma}{lemmaUV} \label{lemm:distminUV} Let $U$ (\textup{resp.} $V$) be a code of minimum distance $d_{U}$ (\textup{resp.} $d_{V}$). The minimum distance $d$ of the $(U,U+V)$-code  is given by
       	$$
       	d = \min(2d_{U},d_{V}). 
         	$$
         	Moreover, codewords which achieve this minimum distance necessarily verify one of the following points: 
        	\begin{enumerate}

        		\item $(\uv,\uv)$ with $|\uv| = d_{U}$,

       		\item $(\mathbf{0},\vv)$ with $|\vv| = d_{V}$,

         		\item $(\uv,\mathbf{0})$ with $|\uv| = d_{V}$,

         		\item $(\uv,\uv+\vv)$ with $\uv \neq \mathbf{0}$, $\Sp(\uv) \subsetneq \Sp(\vv)$ and $|\vv| = d_{V}$.  
         	\end{enumerate}
         \end{restatable}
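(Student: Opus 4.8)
The plan is to prove both assertions by a direct case analysis on whether the $V$-component $\vv$ of a codeword $(\uv,\uv+\vv)$ vanishes, relying throughout on the elementary identity over $\F_2$ that for any two vectors $\mathbf{a},\mathbf{b}$ one has $|\mathbf{a}+\mathbf{b}| = |\mathbf{a}| + |\mathbf{b}| - 2|\mathbf{a}\wedge\mathbf{b}|$, where $\mathbf{a}\wedge\mathbf{b}$ denotes the coordinatewise product; in particular $|\mathbf{a}+\mathbf{b}| \le |\mathbf{a}| + |\mathbf{b}|$ with equality if and only if $\Sp(\mathbf{a})\cap\Sp(\mathbf{b}) = \emptyset$.

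First I would establish the lower bound $d \ge \min(2d_U,d_V)$. A nonzero codeword has the form $(\uv,\uv+\vv)$ with $\uv \in U$, $\vv \in V$, and weight $|\uv| + |\uv+\vv|$. If $\vv = \mathbf{0}$ then $\uv \ne \mathbf{0}$ and the weight equals $2|\uv| \ge 2d_U$. If $\vv \ne \mathbf{0}$, writing $\vv = \uv + (\uv+\vv)$ and applying the triangle inequality gives $d_V \le |\vv| \le |\uv| + |\uv+\vv|$, so the weight is $\ge d_V$. Hence every nonzero codeword has weight $\ge \min(2d_U,d_V)$. For the matching upper bound I would exhibit witnesses: taking $\uv \in U$ of weight $d_U$ yields $(\uv,\uv) \in (U,U+V)$ of weight $2d_U$, and taking $\vv \in V$ of weight $d_V$ yields $(\mathbf{0},\vv)$ of weight $d_V$. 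This gives $d = \min(2d_U,d_V)$ and already produces the codewords of types~1 and~2.

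Next I would characterize the minimum-weight codewords. Let $(\uv,\uv+\vv)$ be a nonzero codeword of weight exactly $d$. If $\vv = \mathbf{0}$, then $\uv \ne \mathbf{0}$ and $2|\uv| = d \le 2d_U$ forces $|\uv| = d_U$ (and $d = 2d_U$), which is type~1. If $\vv \ne \mathbf{0}$, then the chain $d_V \le |\vv| \le |\uv| + |\uv+\vv| = d \le d_V$ is forced to be a chain of equalities, so $|\vv| = d_V = d$ and the triangle inequality for $\vv = \uv + (\uv+\vv)$ is tight; by the $\F_2$ identity above this means $\Sp(\uv)\cap\Sp(\uv+\vv) = \emptyset$, whence $\Sp(\vv)$ is the disjoint union $\Sp(\uv)\sqcup\Sp(\uv+\vv)$ and in particular $\Sp(\uv)\subseteq\Sp(\vv)$. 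I would then split according to whether $\uv$ or $\uv+\vv$ is zero: if $\uv = \mathbf{0}$ we are in type~2; if $\uv+\vv = \mathbf{0}$, i.e.\ $\uv = \vv$, the codeword is $(\uv,\mathbf{0})$ with $|\uv| = |\vv| = d_V$, which is type~3; otherwise $\uv \ne \mathbf{0}$ and $\Sp(\uv+\vv) = \Sp(\vv)\setminus\Sp(\uv) \ne \emptyset$, giving the strict inclusion $\Sp(\uv)\subsetneq\Sp(\vv)$ with $|\vv| = d_V$, which is type~4. This exhausts all cases.

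I do not expect a genuine obstacle here; the only point requiring care is the equality case of the triangle inequality over $\F_2$ (the disjoint-support condition), which is precisely what produces the support containments in types~3 and~4, together with keeping the bookkeeping straight between the sub-cases $\uv = \mathbf{0}$, $\uv = \vv$, and $\uv \notin\{\mathbf{0},\vv\}$. I would also remark that the listed types need not be mutually exclusive (for instance when $2d_U = d_V$ several may occur at once), but the argument above shows every minimum-weight codeword falls into at least one of them, which is all that the lemma claims.
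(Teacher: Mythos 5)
Your proof is correct and follows essentially the same approach as the paper: the same lower bound via case analysis on whether $\vv=\mathbf{0}$, the same witnesses for the upper bound, and the same $\F_2$ weight identity to pin down the support condition in the equality case. You simply make explicit the final case split ($\uv=\mathbf{0}$, $\uv=\vv$, or neither) that the paper compresses into "which easily concludes the proof."
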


	\begin{proof} A $(U,U+V)$-code contains codewords of the form $(\uv,\uv)$ with $\uv \in U$ and $(\mathbf{0},\vv)$ with $\vv \in V$, therefore $d\leq \min(2d_{U},d_{V})$. Let $\uv \in U$ and $\vv \in V$ be such that  $(\uv,\uv+\vv)\neq\mathbf{0}$. Then if $\vv = \mathbf{0}$ we have $|(\uv,\uv+\vv)| = 2|\uv| \geq 2d_{U}$. Now if $\vv \neq \mathbf{0}$ we remark that:
		\begin{align*} 
		|(\uv,\uv+\vv)| &= |\uv| + |\uv + \vv| &\\
		&\geq |\uv| + |\vv| - |\uv| \quad (\mbox{triangle inequality})\\
		&=|\vv| \\
		&\geq d_{V}
		\end{align*} 
		then in both cases we have $d\geq \min(2d_{U},d_{V})$ which gives the first result about the minimum distance of a $(U,U+V)$-code.

		\noindent Let $(\uv,\uv+\vv)$ be a codeword of $(U,U+V)$ such that $\uv \neq \mathbf{0}$, $\vv \neq \mathbf{0}$, $\uv + \vv \neq \mathbf{0}$ and $\Sp(\uv) \not\subseteq \Sp(\vv)$. From $|\uv + \vv|=|\uv|-2|\Sp(\uv)\cap\Sp(\vv)|+|\vv|$ we deduce:
		\begin{equation}
		\label{eq:uv} 
		|(\uv,\uv+\vv)| = 2|\uv| - 2|\Sp(\uv)\cap\Sp(\vv)|  + |\vv|
		\end{equation}
		As $\Sp(\uv) \not\subseteq \Sp(\vv)$, we have $\Sp(\uv) \cap \Sp(\vv) \subsetneq \Sp(\uv)$. Therefore thanks to \eqref{eq:uv}: 
		$$
		|(\uv,\uv+\vv)| > |\vv| > 0
		$$
		which implies that $(\uv,\uv+\vv)$ cannot achieve the minimum distance as $(\mathbf{0},\vv) \in (U,U+V)$ which easily concludes the proof. 
		\qed  
	\end{proof} 
	\noindent We are now able to prove Proposition \ref{prop:UVnp}.

	\begin{proof}[Proposition \ref{prop:UVnp}] {\bf Polynomial time reduction from 3DM to the $(U,U+V)$-support distinguishing problem.} Let $\Gm_{\textup{3DM}} \in \F_{2}^{s \times 3t}$ be an instance of 3DM. Without loss of generality we can assume that it contains no zero column and that $s \geq t+1$. Let us now define for integers $p,u$:
		$$
		I_{p}(u) \eqdef \underbrace{\begin{pmatrix} I_{p} \cdots I_{p}\end{pmatrix} }_{\textrm{u times}} \in \F_{2}^{p\times up}
		$$
		and $\mathbf{0}_{p \times u}$ denotes the $\mathbf{0}$-matrix of size $p\times u$. We now build in polynomial time:
		$$
		\Gm = \left(\begin{array}{c|c}
		I_{t}(7) \ \ \mathbf{0}_{t \times 4(s-t)}  & \mathbf{0}_{t \times (4s + 3t)} \\
		\hline
		\mathbf{0}_{s\times(4s+3t)} & I_{s}(4) \ \ \Gm_{\textup{3DM}} \\  
		\end{array} \right) \in \F_{2}^{(s+t)\times 2(4s+3t)}
		$$
		and we consider the instance 
		$$
		(\Gm,t,7t)
		$$
		of the $(U,U+V)$-support distinguishing problem. 
		\newline

		\noindent{\bf YES-instance of 3DM $\implies$ YES-instance of $(U,U+V)$-support distinguishing.} Let us suppose that $\Gm_{\textup{3DM}}$ is a YES-instance of 3DM which means there exist $t$ rows which have pairwise disjoint supports. This gives the existence of a permutation $\Pm_{1}$ of size $(4s+3t)$ such that $t$ rows of $\begin{pmatrix}
		I_{s}(4) & \Gm_{\textup{3DM}} 
		\end{pmatrix}\Pm_{1}$ form the matrix $\begin{pmatrix}
		I_{t}(7) & \mathbf{0}_{t \times 4(s-t)} 
		\end{pmatrix}$. Then $\Gm\Pm$ where $\Pm$ is the permutation matrix $\Pm \eqdef \begin{pmatrix} I_{4s+3t} & \mathbf{0}_{4s+3t} \\ 
		\mathbf{0}_{4s+3t} & \Pm_{1} &  \end{pmatrix}$ which acts only on the last $4s+3t$ columns, generates a $(U,U+V)$-code where $U$ is generated by $\begin{pmatrix} I_{t}(7) & \mathbf{0}_{t \times 4(s-t)} \end{pmatrix}$ which has dimension $t$, support of size $7t$ and $V$ is generated by $\begin{pmatrix}
		I_{s}(4) & \Gm_{\textup{3DM}} 
		\end{pmatrix}\Pm_{1}$. As no column of $\Gm_{\textup{3DM}}$ is equal to $\mathbf{0}$ we have $|\Sp(V)| = 4s+3t$. 
		\newline

		\noindent{\bf YES-instance of $(U,U+V)$-support distinguishing $\implies$ YES-instance of 3DM.} Conversely, suppose there exists a permutation matrix $\Pm \in \F_{2}^{2(4s+3t)\times 2(4s+3t)}$ such that $\Gm\Pm$ generates a code $(U,U+V)$ where $\dim(U) = t$, $|\Sp(U)| \geq 7t$ and $|\Sp(V)| = 4s+3t$.

		\begin{lemma} \label{lemm:distmin} The matrix $\Gm$ generates a code of minimum distance $7$. Moreover, the codewords that achieve the minimum distance are the rows of $\Gm$. 
		\end{lemma}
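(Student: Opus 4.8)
The plan is to exploit the block structure of $\Gm$. The first step is to notice that the $t$ ``top'' rows of $\Gm$ are supported entirely inside the first $4s+3t$ coordinates, while the $s$ ``bottom'' rows are supported entirely inside the last $4s+3t$ coordinates. Consequently every codeword $\cv$ decomposes as $\cv = \cv_A + \cv_B$, where $\cv_A$ is the linear combination of the top rows indexed by some set $A\subseteq\{1,\dots,t\}$ and $\cv_B$ the linear combination of the bottom rows indexed by some set $B\subseteq\{1,\dots,s\}$, and these two parts have disjoint supports; hence $|\cv| = |\cv_A| + |\cv_B|$, and it suffices to analyse the two pieces independently.

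Next I would bound $|\cv_A|$ from below. Restricting $\cv_A$ to the first $t$ columns --- the first copy of $I_t$ inside $I_t(7)$ --- gives exactly the characteristic vector of $A$, and this same pattern is repeated $7$ times across $I_t(7)$ (the remaining coordinates of a top row being zero), so $|\cv_A| = 7|A|$. In particular a nonzero $\cv_A$ has weight $\ge 7$, with equality iff $|A|=1$, i.e.\ iff $\cv_A$ is a single top row. For $\cv_B$ the argument is similar but needs one more observation: restricting $\cv_B$ to the $I_s(4)$ block yields the characteristic vector of $B$ repeated $4$ times, contributing weight $4|B|$, while its restriction to the $\Gm_{\textup{3DM}}$ block contributes a nonnegative amount, so $|\cv_B|\ge 4|B|$. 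If $|B|\ge 2$ this already gives $|\cv_B|\ge 8>7$; if $|B|=1$ then a single weight-$3$ row of $\Gm_{\textup{3DM}}$ is involved and no cancellation is possible, so $|\cv_B| = 4+3 = 7$ exactly. Hence a nonzero $\cv_B$ also has weight $\ge 7$, with equality iff $\cv_B$ is a single bottom row.

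Combining the two bounds finishes the argument: for a nonzero codeword $\cv=\cv_A+\cv_B$, if both $\cv_A$ and $\cv_B$ are nonzero then $|\cv|\ge 14$, whereas if exactly one of them is nonzero then $|\cv|\ge 7$ with equality precisely when that part is one of the rows of $\Gm$. Since every row of $\Gm$ does have weight $7$ (the top rows from $I_t(7)$, and the bottom rows from $I_s(4)$ together with a weight-$3$ row of $\Gm_{\textup{3DM}}$), we conclude that the minimum distance is $7$ and that the minimum-weight codewords are exactly the rows of $\Gm$. I would also record in passing that $\Gm$ has full rank $s+t$ (the top rows are independent because of the embedded $I_t$, the bottom rows because of the embedded $I_s$, and the two groups have disjoint supports), which guarantees the rows are genuinely distinct codewords. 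There is no serious obstacle here; the only delicate point is making sure that cancellations in the $\Gm_{\textup{3DM}}$ block cannot push $|\cv_B|$ below $7$, and this is taken care of by the crude inequality $|\cv_B|\ge 4|B|$ together with the exact weight computation in the case $|B|=1$.
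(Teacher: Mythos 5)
Your proof is correct and follows essentially the same approach as the paper's (the paper's proof is terser, merely bounding the weight of a sum of $r>1$ rows from either block by $7r$ or $4r$ respectively and noting that all rows have weight $7$). Your version is a bit more explicit — in particular you make visible the disjoint-supports decomposition $\cv=\cv_A+\cv_B$ and the exact computations $|\cv_A|=7|A|$, $|\cv_B|\ge 4|B|$, $|\cv_B|=7$ when $|B|=1$, and the fact that $\Gm$ has full rank — but the underlying idea is identical.
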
 
		\begin{proof} The sum of $r>1$ rows of $\begin{pmatrix}
			I_{s}(4) & \Gm_{\textup{3DM}} 
			\end{pmatrix}$ (\textit{resp.} $\begin{pmatrix}
			I_{t}(7) & \mathbf{0}_{t \times 4(s-t)}
			\end{pmatrix}$) gives a word of weight at least $4r>7$ (\textit{resp.} $7r > 7$). Moreover, all rows of $\Gm$ have weight $7$ which concludes the proof of this lemma.  \qed 
		\end{proof} 
		It directly follows that the minimum distance of the code $(U,U+V)$ is $7$ and therefore, from Lemma \ref{lemm:distminUV}, we remark that:
		$$
		7 = \min(2d_{U},d_{V}) \Rightarrow d_{V} = 7 \mbox{ and } d_{U} \geq 4
		$$
		where $d_{U}$ (\textit{resp.} $d_{V}$) is the minimum distance of $U$ (\textit{resp.} $V$). This crucial property leads to the following lemmas which summarizes the structure of the code $(U,U+V)$ that $\Gm\Pm$ generates.

		\begin{lemma}
			\label{lemm:urows}
			For all $\uv\in U$ we have:
			$$
			(\uv,\mathbf{0})\Pm^{-1} \mbox{ is a row of } \Gm \iff (\mathbf{0},\uv)\Pm^{-1} \mbox{ is a row of } \Gm 
			$$
		\end{lemma}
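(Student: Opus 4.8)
The plan is to reduce everything to two facts that are already available: the characterisation of the weight-$7$ codewords of $\Gm$'s code from Lemma \ref{lemm:distmin}, and the elementary description of which vectors of the shapes $(\uv,\mathbf{0})$ and $(\mathbf{0},\uv)$ lie inside a $(U,U+V)$-code. Write $\Cc$ for the code generated by $\Gm$ and $\Cc' \eqdef \Cc\Pm$ for the code generated by $\Gm\Pm$, which by the standing hypothesis is the $(U,U+V)$-code under study; here $\uv$ denotes a vector of length $4s+3t$, while $(\uv,\mathbf{0})$ and $(\mathbf{0},\uv)$ denote their length-$2(4s+3t)$ concatenations, so that $\wt{(\uv,\mathbf{0})}=\wt{(\mathbf{0},\uv)}=\wt{\uv}$.

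First I would transport Lemma \ref{lemm:distmin} through $\Pm$: since a coordinate permutation preserves Hamming weight, $\Cc'$ also has minimum distance $7$ and its weight-$7$ codewords are exactly the rows of $\Gm\Pm$. Consequently, for any vector $\xv$ of the ambient length, the statement ``$\xv\Pm^{-1}$ is a row of $\Gm$'' is equivalent to ``$\xv$ is a row of $\Gm\Pm$'', which is in turn equivalent to ``$\xv\in\Cc'$ and $\wt{\xv}=7$'' (every row of $\Gm\Pm$ has weight $7$, and conversely every weight-$7$ codeword of $\Cc'$ is one of its rows). Next I would decode the membership condition: from $(U,U+V)=\{(\uv',\uv'+\vv'):\uv'\in U,\ \vv'\in V\}$ one reads off, by matching the two halves of a concatenation, that $(\uv,\mathbf{0})\in\Cc' \iff \uv\in U\cap V$ (take $\uv'=\vv'=\uv$) and $(\mathbf{0},\uv)\in\Cc' \iff \uv\in V$ (take $\uv'=\mathbf{0}$ and $\vv'=\uv$).

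Combining the two steps, for $\uv\in U$ we obtain: $(\uv,\mathbf{0})\Pm^{-1}$ is a row of $\Gm$ $\iff$ $\uv\in U\cap V$ and $\wt{\uv}=7$ $\iff$ (using $\uv\in U$) $\uv\in V$ and $\wt{\uv}=7$ $\iff$ $(\mathbf{0},\uv)\Pm^{-1}$ is a row of $\Gm$, which is exactly the assertion. I do not expect a serious obstacle: the genuinely hard part, namely computing the minimum distance and enumerating the minimum-weight codewords, has already been handled in Lemmas \ref{lemm:distmin} and \ref{lemm:distminUV}. The only points that need a line of care are passing Lemma \ref{lemm:distmin} through the permutation $\Pm$ and keeping track of which half of the $(U,U+V)$ splitting each of the two shapes occupies.
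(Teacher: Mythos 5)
Your proof is correct and follows essentially the same two-step structure as the paper's: use Lemma \ref{lemm:distmin} to identify rows of $\Gm$ with the weight-$7$ codewords, then observe that for $\uv\in U$ the codewords $(\uv,\mathbf{0})$ and $(\mathbf{0},\uv)$ lie in the $(U,U+V)$-code simultaneously. The only cosmetic difference is that the paper obtains the membership equivalence directly by linearity (adding the codeword $(\uv,\uv)$ converts one shape into the other), whereas you derive it from the explicit characterizations $(\uv,\mathbf{0})\in(U,U+V)\iff\uv\in U\cap V$ and $(\mathbf{0},\uv)\in(U,U+V)\iff\uv\in V$; both are immediate and the argument is the same.
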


		\begin{proof} We know that for all $\uv \in U$, the codeword $(\uv,\uv) \in (U,U+V)$. Therefore it is clear that for all $\uv \in U$:
			$$
			(\uv,\mathbf{0}) \in (U,U+V) \iff (\mathbf{0},\uv) \in (U,U+V).
			$$
			Moreover, rows of $\Gm$ are the codewords of weight $7$ (cf Lemma \ref{lemm:distmin}). Then $(\uv,\mathbf{0})\Pm^{-1}$ is a row of $\Gm$ if and only if $(\mathbf{0},\uv)\Pm^{-1}$ is a row of $\Gm$ which concludes the proof of this lemma. \qed

		\end{proof}

		\begin{lemma}\label{lemm:intUV} There are in $\Gm$ exactly:
			\begin{itemize}
				\item $t$ rows of the form $(\uv,\mathbf{0})\Pm^{-1}$ where these $\uv$'s form a basis of the code $U$ and $|\uv|=7$,

				\item $t$ rows of the form $(\mathbf{0},\uv)\Pm^{-1}$ where $\uv \in U$ and $|\uv|=7$,

				\item $s-t$ rows of the form $(\mathbf{0},\vv)\Pm^{-1}$ where $\vv\in V$ and $|\vv| = 7$ but $\vv \notin U$. 
			\end{itemize}

		\end{lemma}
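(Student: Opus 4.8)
The plan is to exploit the tension between two facts: every row of $\Gm$ has Hamming weight exactly $7$, while the hypothesis forces $|\Sp(U)|\ge 7t$ with $\dim U=t$. Write $\cv_1,\dots,\cv_{s+t}$ for the rows of $\Gm$; these are pairwise distinct (immediate from the block structure of $\Gm$ and $t\le s$) and, by Lemma~\ref{lemm:distmin}, are exactly the weight-$7$ codewords of the code $G$ generated by $\Gm$, where the code generated by $\Gm\Pm$ is $(U,U+V)$. For each $i$ write $\cv_i\Pm=(\uv_i,\zv_i)$, the left and right halves of length $n/2$. Since a codeword of $(U,U+V)$ has its left half in $U$, we have $\uv_i\in U$, and since $\Pm$ preserves weights, $|\uv_i|+|\zv_i|=7$; in particular $|\uv_i|\le 7$.

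The heart of the argument is a rigidity statement. The projection $(U,U+V)\to U$ onto the left $n/2$ coordinates is a surjective linear map, so the vectors $\uv_1,\dots,\uv_{s+t}$ generate $U$; as $\dim U=t$ we may pick a set $S$ of $t$ indices with $\{\uv_i:i\in S\}$ a basis of $U$. Applying Fact~\ref{lemm:spcode} to this basis and using $|\uv_i|\le 7$ and $|\Sp(U)|\ge 7t$, we obtain $7t\le|\Sp(U)|\le\sum_{i\in S}|\uv_i|\le 7t$, so all these inequalities are equalities. It follows that $|\uv_i|=7$ for every $i\in S$ — hence $|\zv_i|=0$, i.e.\ $\cv_i\Pm=(\uv_i,\mathbf{0})$ — and, by the equality case of Fact~\ref{lemm:spcode}, that the vectors $\{\uv_i:i\in S\}$ have pairwise disjoint supports. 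This produces the $t$ rows $(\uv_i,\mathbf{0})\Pm^{-1}$ ($i\in S$) of the first item, with the $\uv_i$ forming a basis of $U$ made of weight-$7$ words; and by Lemma~\ref{lemm:urows}, each $(\mathbf{0},\uv_i)\Pm^{-1}$ is also a row of $\Gm$, giving $t$ further rows, which are those of the second item and are distinct from the previous ones since $\uv_i\neq\mathbf{0}$.

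It remains to treat the other $s+t-2t=s-t$ rows. The sub-claim is that any row $\cv$ of $\Gm$ whose left half $\uv$ (in $\cv\Pm=(\uv,\zv)$) is non-zero must be one of the rows $(\uv_i,\mathbf{0})\Pm^{-1}$: indeed $\uv\in U\setminus\{\mathbf{0}\}$ is a sum over a non-empty subset $T\subseteq S$ of the $\uv_i$, which have disjoint supports, so $|\uv|=7|T|$; since $|\uv|\le 7$ this forces $|T|=1$, hence $\uv=\uv_i$, $\zv=\mathbf{0}$ and $\cv=(\uv_i,\mathbf{0})\Pm^{-1}$. Therefore every remaining row has zero left half, i.e.\ equals $(\mathbf{0},\vv)\Pm^{-1}$ with $\vv\in V$ (forced by $(\mathbf{0},\vv)\in(U,U+V)$) and $|\vv|=7$; moreover $\vv\notin U$, for otherwise Lemma~\ref{lemm:urows} would make $(\vv,\mathbf{0})\Pm^{-1}$ a row with non-zero left half, so $\vv=\uv_i$ by the sub-claim and the row would be $(\mathbf{0},\uv_i)\Pm^{-1}$, already counted, a contradiction. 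Counting then yields exactly $t$, $t$, and $s-t$ rows of the three stated shapes, which exhaust all $s+t$ rows of $\Gm$.

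The step I expect to carry the weight of the proof is the rigidity argument of the second paragraph: the only real idea is to draw the basis of $U$ from among the left halves $\uv_i$ of the rows, each of weight at most $7$, which is precisely what makes the bound $\sum_{i\in S}|\uv_i|\le 7t$ available so that Fact~\ref{lemm:spcode} can be used in its tight direction; everything afterwards is bookkeeping, the one delicate point being the sub-claim of the third paragraph, which is what prevents type-$4$ minimum-weight codewords of Lemma~\ref{lemm:distminUV} from ever occurring as rows of $\Gm$.
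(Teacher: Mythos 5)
Your proof is correct, and it takes a genuinely different and cleaner route than the paper's. The paper first invokes Lemma~\ref{lemm:distminUV} to classify the rows of $\Gm$ into four possible shapes of minimum-weight $(U,U+V)$-codewords, discards case~1 ($2|\uv|=7$, impossible for odd $7$), and then rules out case~4 ($(\uv,\uv+\vv)$ with $\Sp(\uv)\subsetneq\Sp(\vv)$, $1\le|\uv|\le6$) by a fairly intricate argument: it shows that the left halves coming from case-3 and case-4 rows together form a basis of $U$ --- the linear independence being established via an auxiliary matrix $\tilde{\Gm}$ built by adding suitable rows --- yielding $\alpha+\beta=t$, and then $|\Sp(U)|\le6\alpha+7\beta=7t-\alpha$ forces $\alpha=0$. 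You bypass both Lemma~\ref{lemm:distminUV} and the $\tilde{\Gm}$ trick: you extract an \emph{arbitrary} basis $\{\uv_i:i\in S\}$ of $U$ from the left halves of the rows (which span $U$ because the projection onto the left $n/2$ coordinates is surjective onto $U$), and the chain $7t\le|\Sp(U)|\le\sum_{i\in S}|\uv_i|\le7t$ collapses to equality; the tight case of Fact~\ref{lemm:spcode} then gives weight-$7$, pairwise-disjoint-support basis vectors with vanishing right halves. That rigidity makes your sub-claim immediate --- any nonzero $\uv\in U$ has weight $7|T|$ for the indexing subset $T\subseteq S$, so $|\uv|\le7$ forces $|T|=1$ --- which both kills the case-4 shape without ever naming it and shows the row with left half $\uv$ is one of the already identified ones. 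Everything after that is the same bookkeeping as the paper (Lemma~\ref{lemm:urows} to get the $(\mathbf{0},\uv_i)\Pm^{-1}$ rows, and to show $\vv\notin U$ for the remaining $s-t$). Net effect: a more economical use of the single hypothesis $|\Sp(U)|\ge7t$, no four-way case split, and no need to prove the case-$3$/case-$4$ left halves are independent. The one small thing worth stating explicitly in a polished version is that the $s+t$ rows of $\Gm$ are pairwise distinct (clear from the block structure), so that ``exactly $s+t$ weight-$7$ codewords'' and the final count $t+t+(s-t)=s+t$ line up.
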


		\begin{proof} Lemmas \ref{lemm:distminUV} and \ref{lemm:distmin} imply that all rows of $\Gm$ are necessarily of the form with $\uv \in U$ and $\vv \in V$:
			\begin{enumerate}

				\item $(\uv,\uv)\Pm^{-1}$ with $2|\uv| = 7$,

				\item $(\mathbf{0},\vv)\Pm^{-1}$ with $|\vv| = 7$,

				\item $(\uv,\mathbf{0})\Pm^{-1}$ with $|\uv| = 7$,

				\item $(\uv,\uv+\vv)\Pm^{-1}$ with $\Sp(\uv) \subsetneq \Sp(\vv)$, $|\vv| = 7$ and $1 \leq |\uv| \leq 6$. 
			\end{enumerate}

			\noindent The first case $(\uv,\uv)$ is clearly impossible. We are going to show that Case 4 is impossible too. Let us denote by $\{ (\uv_{i},\uv_{i}+\vv_{i})\Pm^{-1}\}_{1\leq i \leq \alpha}$ (\textit{resp.} $\{ (\uv'_{i},\mathbf{0})\Pm^{-1} \}_{1\leq i \leq \beta}$) the rows which verify  Case 4 (\textit{resp.} 3) where $\alpha \in \llbracket 0,(s+t) \rrbracket$ (\textit{resp.} $\beta \in \llbracket 0,(s+t) \rrbracket$). We are now going to show that:
			\begin{equation} \label{eq:Ubas}
			\{ \uv_{1},\cdots,\uv_{\alpha},\uv'_{1},\cdots,\uv'_{\beta} \} \mbox{ \textit{is a basis of} } U.
			\end{equation}

			\noindent{\bf Generator Family.} As all codewords $(\uv,*)$ (an arbitrary word of the code $(U,U+V)$ for a fixed $\uv \in U$) can be generated, there is a generator family of vectors $(\uv,*)\Pm^{-1}$ in the generator matrix $\Gm$. In this way, as all rows of the form $(\uv,*)\Pm^{-1}$ with $\uv \neq \mathbf{0}$ in $\Gm$ have been considered we have the result.

			\noindent{\bf Free Family.} Let us denote by $L_{i}$ and $L'_{j}$ the rows of $\Gm$ which are defined as:
			$$
			\forall i \in \llbracket 1,\alpha \rrbracket, \mbox{ }L_{i} \eqdef (\uv_{i},\uv_{i}+\vv_{i})\Pm^{-1} \quad ; \quad \forall j \in \llbracket 1,\beta \rrbracket, \mbox{ }L'_{j} \eqdef (\uv'_{j},\mathbf{0})\Pm^{-1}.
			$$
			\noindent We remark now that $|\vv_{i}| = 7$, therefore by Lemma \ref{lemm:distmin}, codewords $(\mathbf{0},\vv_{i})\Pm^{-1}$ are rows of $\Gm$. Moreover, by Lemma \ref{lemm:urows}, codewords $(\mathbf{0},\uv'_{i})\Pm^{-1}$ are rows of $\Gm$ too. Then for each $i$ and $j$ it exists $k_{i} \neq i$ and $l_{j} \neq j$ such that
			$$
			\forall i \in \llbracket 1,\alpha \rrbracket, \mbox{ }L_{k_{i}} = (\mathbf{0},\vv_{i})\Pm^{-1} \neq L_{i} \quad ; \quad \forall j \in \llbracket 1,\beta \rrbracket, \mbox{ }L_{l_{j}} = (\mathbf{0},\uv'_{j})\Pm^{-1} \neq L_{j}
			$$
			are rows of $\Gm$. In this way let us denote by $\tilde{\Gm}$ the matrix for which we operate the following linear combination of rows of $\Gm$:
			$$
			\forall i \in \llbracket 1,\alpha \rrbracket, \mbox{ } L_{i} \leftarrow L_{i} + L_{k_{i}} \quad ; \quad \forall j \in \llbracket 1,\beta \rrbracket, \mbox{ }L_{j} \leftarrow L_{j} + L_{l_{j}}.
			$$
			In this way there are $\alpha+\beta$ rows in $\tilde{\Gm}$ of the form
			$$
			\{ (\uv_{1},\uv_{1})\Pm^{-1},\cdots,(\uv_{\alpha},\uv_{\alpha})\Pm^{-1}, (\uv'_{1},\uv'_{1})\Pm^{-1},\cdots,(\uv'_{\beta},\uv'_{\beta})\Pm^{-1}\}.
			$$
			Moreover, as $L_{k_{i}} \neq L_{i}$ and $L_{l_{j}} \neq L_{j}$ for all $i$ and $j$, the rank of $\Gm$ and $\tilde{\Gm}$ is the same.
			However the rank of $\Gm$ is $s+t>t$ and is given by its rows. It follows that the above family is free. Then codewords $\big((\uv_{i}),(\uv'_{j}\big))$ form a free family which leads to $\eqref{eq:Ubas}$ and in particular that $\alpha + \beta = \dim(U) = t$.

			\noindent Thanks to \eqref{eq:Ubas} we can apply Fact \ref{lemm:spcode} to the code $U$:
			$$
			|\Sp(U)| \leq \sum_{i=1}^{\alpha} |\uv_{i}| + \sum_{i=1}^{\beta} |\uv_{i}'| \leq \sum_{i=1}^{\alpha} 6 + \sum_{i=1}^{\beta} 7 = 6\alpha + 7\beta = 6\alpha + 7(t-\alpha) = 7t - \alpha.  
			$$
			As $|\Sp(U)| \geq 7t$ we have $\alpha = 0$ which implies that there do not exist exist rows which verify the fourth case in $\Gm$ and there are in $\Gm$ $t$ rows of the form $(\uv,\mathbf{0})\Pm^{-1}$ where codewords $\uv$ form a basis of $U$ and are of Hamming weight $7$. The $t$ rows $(\mathbf{0},\uv)\Pm^{-1}$ directly follow from Lemma \ref{lemm:urows}. All remaining rows are now of the form $(\mathbf{0},\vv)\Pm^{-1}$ with $|\vv|=7$. The case $\vv \in U$ for these words is impossible otherwise $(\vv,\mathbf{0})\Pm^{-1}$ (cf Lemma \ref{lemm:urows}) would be a row of $\Gm$ while we have considered all rows of this form and this concludes the proof. \qed
		\end{proof}

		\noindent From the above lemma, there are $t$ rows $\{ (\uv_{1},\mathbf{0})\Pm^{-1},\cdots,(\uv_{t},\mathbf{0})\Pm^{-1}\}$ in $\Gm$ where the $\uv_{i}$'s have a Hamming weight $7$ and form a basis of $U$. Therefore, 
		$$
		|\Sp(U)| \leq \sum_{i=1}^{t} |\uv_{i}| = \sum_{i=1}^{t} 7 = 7t
		$$
		and on the other hand we have $|\Sp(U)| \geq 7t$, which implies that the previous inequality is an equality. Then by the second assertion of Fact \ref{lemm:spcode}, codewords $\uv_{i}$'s have pairwise disjoint support and the $2t$ rows of $\Gm$ (cf Lemma \ref{lemm:urows}):
		$$\{(\uv_{i},\mathbf{0})\Pm^{-1},(\mathbf{0},\uv_{i})\Pm^{-1}\}_{1 \leq i \leq t}$$ 
		have pairwise disjoint support. Recall that the matrix $\Gm$ is defined as:
		$$
		\Gm = \left(\begin{array}{c|c}
		I_{t}(7) \ \ \mathbf{0}_{t \times 4(s-t)}  & \mathbf{0}_{t \times (4s + 3t)} \\
		\hline
		\mathbf{0}_{s\times(4s+3t)} & I_{s}(4) \ \ \Gm_{\textup{3DM}} \\  
		\end{array} \right) \in \F_{2}^{(s+t)\times 2(4s+3t)}
		$$
		We remark that the upper part of the matrix has $t$ rows, in this way we have at least $t$ rows of $\begin{pmatrix} \mathbf{0}_{s \times (4s + 3t)} | &I_{s}(4) & \Gm_{\textup{3DM}}\end{pmatrix}$ which have pairwise disjoint supports and this gives the existence of a matching for $\Gm_{\textup{3DM}}$ and concludes the proof of Proposition \ref{prop:UVnp}. 
		\qed

	\end{proof}

	We are now able to prove Theorem \ref{theo:UVNP} by using the NP-completeness of Problem \ref{prob:UVDist}. It firstly relies on the following lemma.

	\begin{lemma} \label{lemm:a} Let $\Gm \in \F_{2}^{k\times n}$ and integers $k_{U} \leq k$, $M \leq n/2$, we have:
		\begin{multline*} 
		\Gm \mbox{ generates a } (U,U+V) \mbox{ permuted code with } |\Sp(U)| \geq M \mbox{ and } |\Sp(V)| = n/2 \\ \iff \Gm \mbox{ generates a permuted } (U,U+V)\mbox{-code} \mbox{ with } |\Sp(V)| = n/2 \\ \mbox{ and the number of } \mathbf{0} \mbox{ columns in } \Gm \mbox{ is smaller than } n/2 -M.
		\end{multline*} 		
	\end{lemma}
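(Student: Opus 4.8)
The plan is to reduce the whole statement to a single bookkeeping identity: for a $\UV$-code with $|\Sp(V)|=n/2$, the number of all-zero coordinates — equivalently, the number of zero columns of any generator matrix of it — is exactly $n/2-|\Sp(U)|$. Once this is in hand, the lemma is immediate, so the real content is just carefully counting supports.

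First I would record the two elementary facts I will use. For any $\Gm\in\F_2^{k\times n}$, a column of $\Gm$ is zero if and only if the corresponding coordinate lies outside $\Sp(\cC)$, where $\cC$ is the code generated by $\Gm$; hence the number of zero columns of $\Gm$ equals $n-|\Sp(\cC)|$. In particular this number depends only on $\cC$, and since $\Sp(\cC\Pm^{-1})$ is just a permuted copy of $\Sp(\cC)$, it is unchanged if we replace $\Gm$ by $\Gm\Pm$ for a permutation matrix $\Pm$. So if $\Gm\Pm$ is a generator matrix of a $\UV$-code $\cC'$, then the number of zero columns of $\Gm$ equals $n-|\Sp(\cC')|$.

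Next I would compute $|\Sp(\UV)|$ under the hypothesis $|\Sp(V)|=n/2$. Split the $n$ coordinates into the $n/2$ left and the $n/2$ right positions. A left position $i$ lies in $\Sp(\UV)$ iff $i\in\Sp(U)$, since the left half of a codeword $(\uv,\uv+\vv)$ is $\uv$. A right position $i$ lies in $\Sp(\UV)$ iff $i\in\Sp(U)\cup\Sp(V)$, since the right coordinate $u_i+v_i$ ranges over all values as $\uv\in U$ and $\vv\in V$ vary independently. As $|\Sp(V)|=n/2$ forces $\Sp(V)=\{1,\dots,n/2\}$, every right position is in the support, hence $|\Sp(\UV)|=|\Sp(U)|+n/2$, and the number of zero columns of a generator matrix of $\UV$ is $n-|\Sp(U)|-n/2=n/2-|\Sp(U)|$.

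Finally I would combine the two observations. If $\Gm$ generates a permuted $\UV$-code with $|\Sp(V)|=n/2$, witnessed by $\Gm\Pm$ being a generator matrix of $\UV$, then the number of zero columns of $\Gm$ equals $n/2-|\Sp(U)|$ for this particular $U$. Consequently, imposing $|\Sp(U)|\ge M$ is equivalent to requiring that the number of zero columns of $\Gm$ be at most $n/2-M$, which yields both implications of the lemma with the same pair $U,V$ on each side; any constraint on $\dim(U)$ that appears on both sides is carried through unchanged since $U$ itself is unchanged. I do not anticipate a genuine obstacle: the argument is pure support bookkeeping, the only mildly delicate points being the left/right support computation above and the remark that the zero-column count of the witnessing permuted matrix coincides with that of $\Gm$.
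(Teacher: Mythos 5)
Your proof is correct and takes essentially the same route as the paper's: under the hypothesis $|\Sp(V)|=n/2$, the number of zero columns of any generator matrix of a permuted $\UV$-code equals $n/2-|\Sp(U)|$, so requiring $|\Sp(U)|\ge M$ is equivalent to bounding the number of zero columns of $\Gm$ by $n/2-M$, with the same witnessing $(U,V,\Pm)$ on both sides. Your packaging via the intrinsic identity $|\Sp(\cC)|=n-(\text{number of zero columns of a generator matrix of }\cC)$ together with the direct computation $|\Sp(\UV)|=|\Sp(U)|+n/2$ is a slightly tidier presentation of what the paper does by writing out the explicit block generator matrix of $\UV$ and counting zero columns there, but the underlying observation is the same.
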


	\begin{proof} Suppose that $\Gm\in\F_{2}^{k \times n}$ generates a permuted $(U,U+V)$-code. It follows that by the structure of $(U,U+V)$-codes there exists a non-singular matrix $\Sm \in \F_{2}^{k\times k}$ and a permutation matrix $\Pm \in \F_{2}^{n \times n}$ such that:
		$$
		\Sm\Gm\Pm = \begin{pmatrix}
		\Gm_{U} & \Gm_{U} \\
		\mathbf{0}_{(k-k_{U}) \times n/2} & \Gm_{V}
		\end{pmatrix} 
		$$
		where $\Gm_{U} \in \F_{2}^{k_{U} \times n/2}$ (\textit{resp.} $\Gm_{V} \in \F_{2}^{(k-k_{U})\times n/2}$) is a generator matrix of $U$ (\textit{resp.} $V$). Suppose now that $|\Sp(V)| = n/2$ which means there is no $\mathbf{0}$-column in $\Gm_{V}$.

		\noindent Let us now remark that if $|\Sp(U)| < M$, then there exists at least $n/2 - M$ columns which are equal to $\mathbf{0}$ in the matrix $\Sm\Gm\Pm$. Conversely, if there exist $n/2 - M$ columns which are equal to $\mathbf{0}$, as no column of $\Gm_{V}$ is equal to $\mathbf{0}$, we necessarily have $|\Sp(U)| < M$.

		\noindent Multiplication by $\Sm^{-1}$ and $\Pm^{-1}$ does not change the number of $\mathbf{0}$ columns and it easily follows that we have the same equivalence on $\Gm$ which concludes the proof. \qed 
	\end{proof}

	Theorem \ref{theo:UVNP} easily follows as we are going to show.

	\begin{proof} Let us consider an instance $(\Gm,k_{U},M)$ of Problem \ref{prob:UVDist}. The polynomial reduction into an instance of the $(U,U+V)$-distinguishing  is to check if it exists at  most $n/2-M$ columns of $\Gm$ which are equal to $\mathbf{0}$, to consider the code generated by $\Gm$ and the integer $k_{U}$. Then, Lemma \ref{lemm:a} is invoked to finish the proof. \qed  
	\end{proof}

\subsection{Proof of Theorem \ref{th:NPcomplete}}
	\label{subsec:proofUVDistbis} 

First it is clear that the weak $\UV$-distinguishing problem is in NP.  The proof that the problem is NP-complete 
relies on the hardness of the subcode equivalence problem \cite{BGK17}:
\begin{problem}[subcode  equivalence]~\\
	\label{prob:SE}
	\begin{tabular}{ll}
		Instance: & Two linear codes $\cC$ and $\cD$ of length $n$ \\
						Question: & Is there a permutation 
				$\sigma$ of the support such that 
				$\sigma(\cC) \subseteq \cD$
									\end{tabular}
\end{problem}
This problem was proved to be NP-complete in \cite{BGK17}.

We will show that any instance of the subcode-equivalence problem 
can be transformed into an instance of Problem \ref{prob:P1'} with the same answer.
Let us consider an instance $(\cC,\cD)$ of the subcode equivalence problem.
We will adopt the generator matrix point of view here which is more convenient for our purpose.
In other words, we have access to generator matrices $\Gm_{\cC}$ and $\Gm_{\cD}$ of codes $\cC$ and $\cD$. Let $\Gm$ be the following matrix
$$
\begin{pmatrix}
\Gm_{\cC} & \mathbf{0} \\
\mathbf{0} & \Gm_{\cD} 
\end{pmatrix}
$$

Suppose
that there
exists a permutation $\sigma$ such that $\sigma(\cC) \subseteq \cD \iff \cC \subseteq \sigma^{-1}(\cD) $. Then if we apply $\sigma^{-1}$ on the last $n/2$ columns of $\Gm$ we get:
$$
\begin{pmatrix}
\Gm_{\cC} & \mathbf{0} \\
\mathbf{0} & \sigma^{-1}(\Gm_{\cD}) 
\end{pmatrix}
$$
which generates the code $(\cC|\sigma^{-1}(\cD))$
which is equal to $(\cC|\cC + \sigma^{-1}(\cD))$. Then 
the code $(\cC,\cD)\eqdef \{(\cv,\dv):\cv \in \cC,\;\dv \in \cD\}$ is a YES instance of Problem \ref{prob:P1'}.

Conversely, suppose that $(\cC,\cD)$ 
is a YES instance of Problem \ref{prob:P1'}. 
This code has generator matrix $\Gm=\begin{pmatrix}
\Gm_{\cC} & \mathbf{0} \\
\mathbf{0} & \Gm_{\cD} 
\end{pmatrix}$ and this means that
$\Gm$ should generate
a permuted  $\UV$ code with a permutation which acts only on the second half of the code positions.
$\Gm_{\cC}$ is therefore necessarily a generator matrix of $U$.
$U$ and $\cC$ are 
therefore 
equal. 
It also follows that $\Gm_{\cD}$ (which generates $\cD$) has to generate a permutation of $U+V$.
Since $U$ is a subcode of $U+V$, 
it follows that $\cC$ is a subcode, up to a permutation, of $\cD$. \qed

 \end{document}